\documentclass[journal]{IEEEtran}


\usepackage[english]{babel}

\usepackage{ifpdf}

\usepackage{cite} 
\usepackage{url}
\usepackage{hyperref}

\ifCLASSINFOpdf
	\usepackage[pdftex]{graphicx}
	\graphicspath{{./figures/}}
\else
	\usepackage[dvips]{graphicx}
	\graphicspath{./figures/}
\fi
\usepackage{color}
\usepackage{pgf, tikz, pgfplots}
\usetikzlibrary{shapes, arrows, automata}

\usepackage[cmex10]{amsmath}
\usepackage{amsfonts, amssymb, amsthm}
\usepackage{mathrsfs}

\usepackage{algorithm,algpseudocode}
\algnewcommand{\LeftComment}[1]{\Statex \(\triangleright\) #1}

\usepackage{enumerate}
\usepackage{multirow}
\usepackage{rotating}
\usepackage{subcaption}
	\captionsetup[sub]{font=footnotesize}
	\captionsetup[figure]{font=small,labelsep=period,subrefformat=parens}





\usepackage{needspace}





\input{mySymbol.sty}


\definecolor{pennpurple}{rgb}{0.537,0,0.510} 

\def\MSE{\mathrm{MSE}}
\def\Tr{\mathsf{T}}
\def\Hr{\mathsf{H}}

\def\RES{\text{RES}}
\def\tbarA{\check{\bbA}}


\newtheorem{assumption}{\hspace{0pt}\bf Assumption}
\newtheorem{lemma}{\hspace{0pt}\bf Lemma}
\newtheorem{proposition}{\hspace{0pt}\bf Proposition}

\newtheorem{theorem}{\hspace{0pt}\bf Theorem}
\newtheorem{corollary}{\hspace{0pt}\bf Corollary}

\newtheorem{definition}{\hspace{0pt}\bf Definition}

\begin{document}

\title{Controllability of Bandlimited Graph Processes Over Random Time Varying Graphs}

\author{Fer\hspace{0.01cm}nando~Gama, Elvin~Isufi, Alejandro~Ribeiro and Geert~Leus
\thanks{Part of this works is presented in \cite{Gama18-Control}. Work in this paper is supported by NSF CCF 1717120, ARO W911NF1710438, ARL DCIST CRA W911NF-17-2-0181, ISTC-WAS and Intel DevCloud. F. Gama, and A. Ribeiro are with the Dept. of Electrical and Systems Eng., Univ. of Pennsylvania, USA. E. Isufi is with the Intelligent Systems Dept. and G. Leus is with the Dept. of Microelectronics, Delft Univ. of Technology, The Netherlands. E-mails: \{fgama, aribeiro\}@seas.upenn.edu, \{e.isufi-1, g.j.t.leus\}@tudelft.nl.
}
}

\markboth{IEEE TRANSACTIONS ON SIGNAL PROCESSING (ACCEPTED)}%
{Controllability of Bandlimited Graph Signals}

\maketitle

\begin{abstract}
Controllability of complex networks arises in many technological problems involving social, financial, road, communication, and smart grid networks. In many practical situations, the underlying topology might change randomly with time, due to link failures such as changing friendships, road blocks or sensor malfunctions. Thus, it leads to poorly controlled dynamics if randomness is not properly accounted for. We consider the problem of controlling the network state when the topology varies randomly with time. Our problem concerns target states that are \emph{bandlimited} over the graph; these are states that 
have nonzero frequency content only on a specific graph frequency band. We thus leverage graph signal processing and exploit the bandlimited model to drive the network state from a fixed set of control nodes. When controlling the state from a few nodes, we observe that spurious, out-of-band frequency content is created. Therefore, we focus on controlling the network state over the desired frequency band, and then use a graph filter to get rid of the unwanted frequency content. To account for the topological randomness, we develop the concept of controllability in the mean, which consists of driving the expected network state towards the target state. A detailed mean squared error analysis is performed to quantify the statistical deviation between the final controlled state on a particular graph realization and the actual target state. Finally, we propose different control strategies and evaluate their effectiveness on synthetic network models and social networks.
\end{abstract}

\begin{IEEEkeywords}
Graph signal processing, random graphs, network controllability, graph signals, graph process, linear systems on graphs
\end{IEEEkeywords}

\IEEEpeerreviewmaketitle



\section{Introduction} \label{sec:intro}

The controllability of complex networks plays a fundamental role in our understanding of natural and technological systems. Relevant examples involve the control of social, biological, financial, road, communication, and smart grid networks. Different works have highlighted the importance of the network structure when controlling a system evolving on top of that network \cite{Strogatz01-Exploring, Newman06-Structure, Lombardi07-Controllability}. Other works controlled said system from a few control or \emph{driving} nodes \cite{Liu11-Controllability, Yuan13-Exact, Pasqualetti14-Limitations}. As an illustrative example, consider the Zachary's Karate club social network in Figure~\ref{fig:karateClub} \cite{Zachary77-KarateClub}. The network state may be an opinion profile (e.g., members thoughts on a topic) and controlling, or \emph{driving,} the network amounts to shaping those opinions towards a \emph{desired} or \emph{target} state. The social relationships between members affect the ability to control the opinion profile and the objective is to sway all members opinions from a few influencing members (the \emph{driving} nodes).

While providing seminal contributions on network controllability, the works in \cite{Strogatz01-Exploring, Newman06-Structure, Lombardi07-Controllability, Liu11-Controllability, Yuan13-Exact, Pasqualetti14-Limitations} ignore the coupling between the underlying topology and the target state (a.k.a. \textit{the graph signal}). Recent evidence from graph signal processing (GSP) \cite{Taubin00-Meshes, Shuman13-SPG, Sandryhaila14-BigDataSPG} has shown that this coupling can bring substantial benefits in graph signal sampling \cite{Chen15-Sampling}, interpolation \cite{Narang13-Interpolation, Gama18-Sketching}, adaptive reconstruction \cite{DiLorenzo18-Adaptive}, and observability of diffusion processes \cite{Isufi18-Observability}. A common point that unifies \cite{Shuman13-SPG, Sandryhaila14-BigDataSPG, Chen15-Sampling, Narang13-Interpolation, Gama18-Sketching, DiLorenzo18-Adaptive, Isufi18-Observability} is the so-called graph Fourier transform (GFT). The GFT expands the network state onto an orthonormal basis related to the underlying topology ---formed by the eigenvectors of a matrix that represents the network, such as the adjacency or the Laplacian matrix--- akin to how the discrete Fourier transform expands a time signal on the complex exponential orthonormal basis. The GFT basis vectors can be linked to different variability modes across the graph through the concept of total variation \cite{Shuman13-SPG, Sandryhaila14-Freq}; hence, named the \emph{graph oscillating modes}.

A particular class of graph signals is that of \emph{bandlimited graph signals}, i.e., signals that can be expressed by only a few graph oscillating modes. The number of active modes forms the graph signal \emph{bandwidth}. Likewise for time signals, purely bandlimited graph signals rarely exist. But the bandlimitedness prior poses a powerful and parsimonious model to develop practical tools. This prior is exploited in GSP for sampling \cite{Shuman13-SPG, Sandryhaila14-BigDataSPG, Ortega18-GSP}, where signals arising in economic \cite{Marques16-Sampling}, handwritten digits \cite{Chen15-Sampling, Gama18-Sketching} or brain functional imaging \cite{Huang16-Brain}, are approximately bandlimited. In the Zachary's Karate club example, a bandlimited network state corresponds to opinions polarized into a few clusters of like-minded members \cite{Chen15-Sampling}. Therefore, controlling the system to a bandlimited state implies imposing a similar opinion profile to members that influence each other; this influence is captured by the edge weight.

Network control towards a bandlimited graph signal is considered in \cite{Segarra16-Percolation}. The control signal is fed to a few driving nodes and is percolated through the graph until the target state is reached. The authors determined the trade-off between the control time and the number of driving nodes, provided conditions to reach any bandlimited state, and designed the control signals.

The work in \cite{Barbarossa16-Control} studied the challenge of controlling the network towards a bandlimited state with control signals of limited energy. The main result is the trade-off between the number of driving nodes and the control signal energy. But conditions to drive the network to any bandlimited state were not derived in this limited energy setting.

Along these lines, and in parallel with the shorter version of this paper \cite{Gama18-Control}, the work in \cite{Bazerque17-Control} used GSP to formulate the linear quadratic controller as an autoregressive moving average (ARMA) graph filter \cite{Isufi17-ARMA}. This ARMA formulation is used to control independently each graph mode to the desired state. But these control strategies require all nodes to act as driving nodes.

\begin{figure}
    \centering
    \includegraphics[width = 0.52\columnwidth]{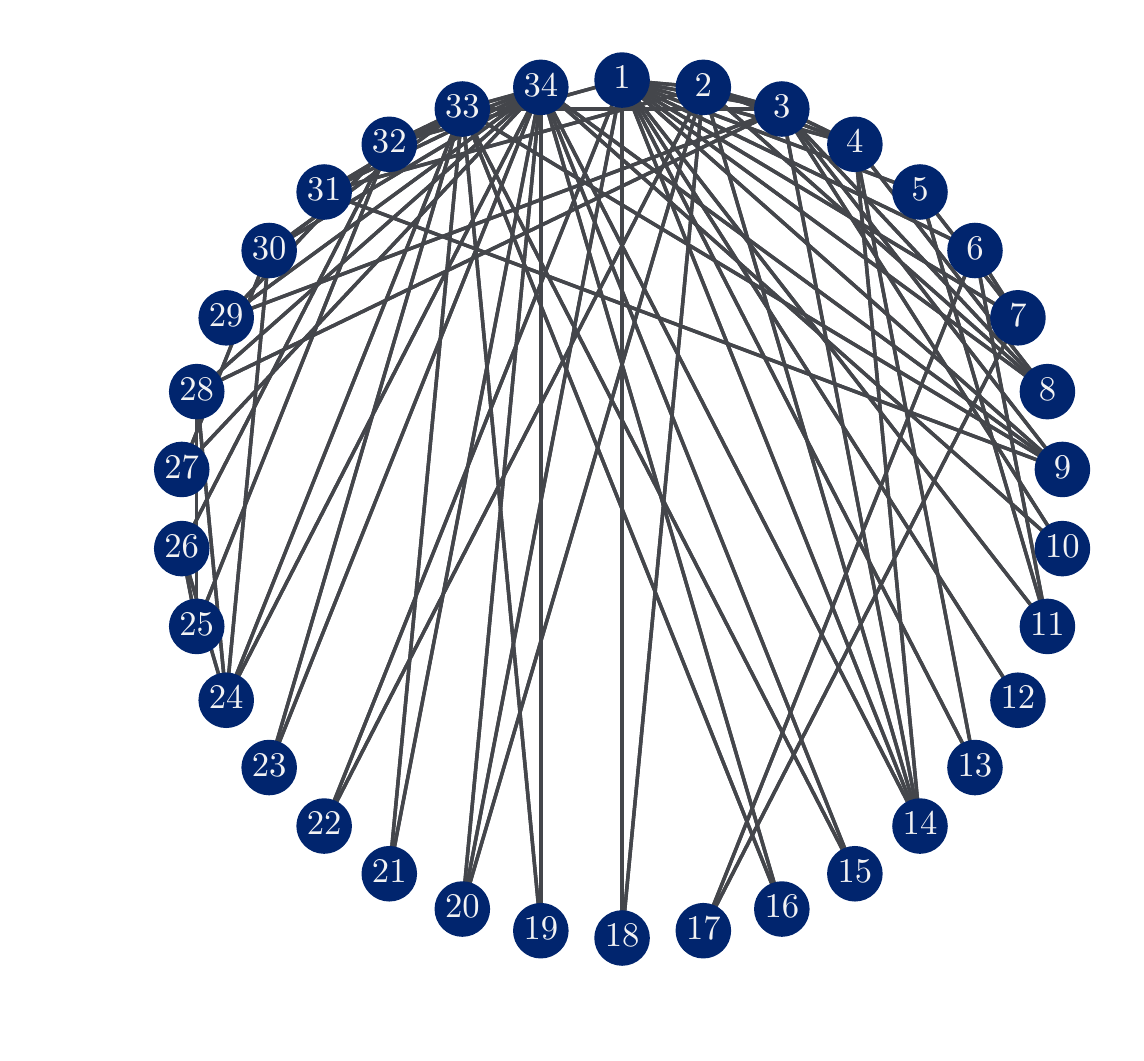}
    \caption{Zachary's Karate club network. The nodes represent club members and the edges capture social relationships between the members.
    }
    \label{fig:karateClub}
\end{figure}

Altogether, the above works concern network control over time-invariant topologies. But in practice the network structure may change randomly over time due to link losses, or nodes that disappear with a given probability. This might be the case of: $i)$ members that are not present a given day in the Zachary's Karate club; $ii)$ communication links that are random in sensor networks; $iii)$ power lines and buses that go down in smart grids due to local failures. In these situations, the graph structure is random and can be characterized by an expected graph with a variance around this expectation. Network controllability becomes, therefore, challenging since some connections cannot be exploited; hence, the derivations obtained in the deterministic setting may lead to a completely different state.

Motivated by the above observations, we study the possibilities to perform open loop control over random time varying networks. By exploring GSP tools as in \cite{Segarra16-Percolation, Barbarossa16-Control, Bazerque17-Control}, we propose a framework that accounts for the graph randomness in the analysis. Our main contributions are:

\begin{enumerate}
\item We study the problem of controlling dynamics over deterministic networks from a few driving nodes (Section~\ref{sec_sparse}). We provide conditions on network controllability that relate the minimum number of driving nodes to the target control bandwidth and the control time. This result encompasses the three strategies proposed in \cite{Segarra16-Percolation}.

\item We formulate the problem of controlling dynamics over random time varying networks from a few driving nodes. (Section~\ref{sec_random}). We develop the concept of controllability in the mean to drive the expected state towards a graph signal with a target bandwidth content on the expected graph. We also extend the conditions on network controllability from the deterministic to the stochastic setting.

\item We perform a mean squared error analysis to quantify the statistical deviation of the controlled state from the target state on a particular graph realization (Section~\ref{subsec_mse}). This analysis illustrates the role of the random graph model, the target state bandwidth, and the control signal.

\item We propose two control strategies to drive the expected state to the desired bandlimited state with the minimum mean squared error (Section~\ref{sec_strategies}).

\item We corroborate the developed framework and study its performance on synthetic (Erd\H{o}s-R\'{e}nyi and geometric graphs) and real-world social networks (Facebook sub-graph and Zachary's Karate club) (Section~\ref{sec_sims}).
\end{enumerate}

To the best of our knowledge, this is the first contribution that approaches network controllability over random time varying topologies. We remark that while this work relies on the bandlimited prior to select the driving nodes, the reader might find of interest other parsimonious models \cite{Chen16-Piecewise, Ortega18-GSP, Jung19-Localized} for such a task.

The remaining part of the paper proceeds as follows: Section~\ref{sec_prelim} sets down the preliminary concepts and Section~\ref{sec_sparse} contains our formulation of network controllability on deterministic graphs. Section~\ref{sec_random} formulates the controllability on random graphs. Section~\ref{sec_strategies} develops the control strategies, while Section~\ref{sec_sims} presents the numerical experiments. Finally, Section~\ref{sec_conclusions} provides the concluding remarks. The proofs are collected in the appendix.

\vspace{0.1cm} \noindent \textbf{Notation.} Normal letters $a$ (or $A$) denote scalars, bold lowercase letters $\bba$ vectors, and bold uppercase letters $\bbA$ matrices. The $i$th entry of a vector is $[\bba]_{i}$ while the $(i,j)$th entry of a matrix is $[\bbA]_{i,j}$. Superscripts $\Tr$ and $\Hr$ denote the transpose and the Hermitian, respectively. The $N \times 1$ null vector is $\bbzero_N$, the $N \times 1$ vector of all ones is $\bbone_N$ and the $N \times N$ identity matrix is $\bbI_N$. The diagonal operator denoted as $\diag(\cdot)$ is defined such that $\bba = \diag(\bbA)$ with $[\bba]_{i}=[\bbA]_{i,i}$, and $\bbA = \diag(\bba)$ is a diagonal matrix with vector $\bba$ on the main diagonal. The expectation operator is denoted as $\mbE[\cdot]$, the trace operator as $\tr(\cdot)$, the rank of $\bbA$ as $\rank(\bbA)$, the Kronecker product as $\otimes$, and the element-wise Hadamard product as $\circ$. The $l_{p}$ vector and matrix norm is denoted as $\|\cdot\|_{p}$. The ceiling operator is denoted as $\lceil\cdot\rceil$ and the minimum and maximum operators as $\min\{\cdot\}$ and $\max\{\cdot\}$, respectively. If not otherwise stated, calligraphic letters $\ccalA$ indicate sets and the set cardinality is denoted as $|\ccalA|$.



\section{Diffusion Processes on Graphs} \label{sec_prelim}

In this work, we consider controlling a diffusion process on random time varying graphs towards a desired state. To achieve this, we model diffusion processes through graph signal processing. We introduce the basics of GSP in Section~\ref{subsec_GFT}, define the random time varying graph model in Section~\ref{subsec_randgraph}, and discuss diffusion processes in Section~\ref{subsec_diffusion}. 


\subsection{Graph signal processing} \label{subsec_GFT}

Let $\ccalG=(\ccalV,\ccalE,\ccalW)$ denote a graph with $\ccalV = \{v_{1},\ldots,v_{N}\}$ the set of $N$ vertices, $\ccalE \subseteq \ccalV \times \ccalV$ the set of edges, and $\ccalW:\ccalE \to \reals_{+}$ a function that assigns positive weights to the edges. The graph serves as a mathematical representation of the network and its structure is captured by the graph shift operator (GSO) matrix $\bbS \in \reals^{N \times N}$. The $(i,j)$th element of $\bbS$, $[\bbS]_{i,j}$, is nonzero only if $i = j$ or if $(v_j,v_i) \in \ccalE$; so that $\bbS$ respects the sparsity of $\ccalG$. Standard choices for $\bbS$ are the weighted graph adjacency matrix $\bbW$ \cite{Sandryhaila13-DSPG, Sandryhaila14-Freq}, the graph Laplacian matrix $\bbL$  \cite{Shuman13-SPG}, or their respective generalizations {\cite[Chapter 8]{Godsil01-AlgebraicGraphTheory}}. We consider $\bbS$ admits an eigendecomposition $\bbS = \bbV \bbLambda \bbV^{\Hr}$, where $\bbV = [\bbv_{1},\ldots,\bbv_{N}] \in \mbC^{N \times N}$ collects the orthonormal eigenvectors and $\bbLambda = \diag(\lambda_{1},\ldots,\lambda_{N}) \in \mbC^{N \times N}$ contains the associated eigenvalues. This holds for all undirected graphs on which the graph Laplacian can be defined and also for the adjacency matrix of some directed graphs \cite{Sandryhaila14-Freq, Marques16-Sampling}.

A graph signal is a mapping from the vertex set to the field of real numbers, i.e., $x_{i} : v_{i} \to \reals$, for $v_{i} \in \ccalV$. An example of a graph signal is the opinion of members in Zachary's Karate club. We collect all node signals in the vector $\bbx \in \reals^{N}$ with $[\bbx]_{i}=x_{i}$ being the value of node $i$ \cite{Shuman13-SPG}.

The graph Fourier transform (GFT) is the projection of the graph signal $\bbx$ on the eigenbasis $\bbV$ and is denoted by $\tbx = \bbV^{\Hr} \bbx$ \cite{Taubin00-Meshes, Shuman13-SPG}. The elements $[\tbx]_{k}=\tdx_{k}$ denote the graph Fourier coefficients of $\bbx$, whereas the eigenvectors $\bbv_{k}$ form the basis of \emph{graph oscillating modes}. Likewise, the inverse GFT is $\bbx = \bbV \tbx$, i.e., it writes $\bbx$ as a linear combination of the graph oscillating modes weighed by the Fourier coefficients.

A graph signal is \emph{bandlimited} if it has only a few nonzero Fourier coefficients. Without loss of generality, assume the first $K$ elements of $\tbx$ are nonzero; so, we can write $\tbx = [\tbx_{K}^{\Tr},\bbzero_{N-K}^{\Tr}]^{\Tr}$ where $\tbx_{K} \in \mbC^{K}$ and $K \le N$. Then, $\bbx$ is written in the compact form

\begin{equation}
\label{eq.gft_comp}
\bbx =  \bbV_{K}\tbx_{K}
\end{equation}
where $\bbV_{K} \in \mbC^{N \times K}$ is the respective column-trimmed eigenvector matrix. The GFT $\tbx_{K}$ of $\tbx$ writes as $\tbx_{K} = \bbV_{K}^{\Hr} \bbx$. This representation connects the signal bandwidth with the sampling and reconstruction strategies as shown in \cite{Pesenson08-Sampling, Chen15-Sampling, Tsitsvero16-Uncertainty, DiLorenzo18-Adaptive, Gama18-Sketching}. We will also exploit bandlimitedness in Section~\ref{sec_random} to control the network from a few driving nodes.


\subsection{Random time varying graphs} \label{subsec_randgraph}

We consider the following random graph model.

\begin{definition}[$\RES(p)$ graph model \cite{Isufi17-Random}] \label{def_res}
Given an underlying graph $\ccalG\!=\!(\ccalV,\ccalE)$, a random edge sampling (RES) graph realization $\ccalG_{t} \!=\! (\ccalV,\ccalE_{t})$ of $\ccalG$ consists of the same set of nodes $\ccalV$ and assumes the edge $(v_i, v_j) \in \ccalE$ is sampled at time $t$ (i.e., $(v_i,v_j) \in \ccalE_{t}$) with a probability $0 < p \leq 1$. The edges are sampled independently over both the graph and the temporal dimension and are mutually independent from the graph signal if the latter has a stochastic nature.
\end{definition}

In other words, the $\RES(p)$ model states that the realization $\ccalG_{t} = (\ccalV,\ccalE_{t})$ is drawn from the underlying graph $\ccalG=(\ccalV,\ccalE)$, where the instantaneous edge set $\ccalE_t \subseteq \ccalE$ is generated via an independent Bernoulli process with probability $p$. Let us from now on denote with $\bbW$ and $\bbD = \diag(\bbW \bbone_{N})$ the adjacency and degree matrix of $\ccalG$, respectively. If the graph is undirected, we will also consider the unnormalized graph Laplacian matrix $\bbL = \bbD - \bbW$. To ease the exposition, denote with $\bbW_t$, $\bbD_t$, and $\bbL_t$ the respective matrices of the instantaneous graph $\ccalG_t$ and with $\barbW = \mbE[\bbW_t]$, $\barbD = \mbE[\bbD_{t}]$, and $\barbL = \mbE[\bbL_{t}]$ those of the expected graph $\bar{\ccalG}$. Under the $\RES(p)$ model, it holds that $\barbW = p\bbW$, $\barbD = p\bbD$, and $\barbL = p\bbL$. 

We assume the following.
\begin{assumption} \label{ass_gso}
The GSO of the underlying graph $\ccalG$ has an upper bounded spectral norm $\|\bbS\|_{2} \leq \varrho$ for some $\varrho < \infty$.
\end{assumption}
\noindent This assumption is generally met in practice and implies the graphs of interest have finite dimension and edge weights.

We also remark that more complex models than the $\RES(p)$ can be found in literature. In particular, many of these results can be readily extended to the model in which each edge $(v_i, v_j)$ is sampled independently with a different probability $p_{ij}$. But for clarity of exposition, we will focus only on the $\RES(p)$ model.


\subsection{Diffusion on graphs from a GSP perspective} \label{subsec_diffusion}

The continuous-time diffusion of a signal $\bbx_0$ on a graph $\ccalG$ with Laplacian matrix $\bbL$ is described by the differential equation \cite{Kondor02-Diffusion, Thanou17-Heat}
\begin{equation} \label{eqn_cont_diffeq}
	\frac{d \bbx(s)}{ds} = - \bbL \bbx(s), 
		\qquad
	\bbx(0)=\bbx_{0}.
\end{equation}
This equation can be discretized as \cite{Olfati07-Consensus}
\begin{equation} \label{eqn_discrete_sol}
	\bbx_{t} = \bbA \bbx_{t-1}, 
		\qquad 
	\bbA = \bbI - \epsilon \bbL, 
		\qquad 
	t \in \mbN
\end{equation}
which is stable if $0 < \epsilon \le 1/\|\bbL\|_{2}$. Alternatively, a diffusion process on a graph can be interpreted as the discrete-time shift of $\bbx_0$ through the graph edges \cite{Sandryhaila13-DSPG, Gama19-GLLN}
\begin{equation} \label{eqn_discrete_diff}
	\bbx_{t} = \bbA \bbx_{t-1}, 
		\qquad 
	\bbA = \bbW, 
		\qquad 
	t \in \mbN.
\end{equation}
%

Model \eqref{eqn_discrete_sol} is used when the process is defined over a continuous space that has been discretized, usually in the form of a mesh, as for heat diffusion processes \cite{Thanou17-Heat}. Model \eqref{eqn_discrete_diff} is employed when the underlying support is naturally a graph, as for sensor network communications \cite{Gama19-GLLN}. In essence, these are two examples of processes that describe the network state evolution by $\bbx_{t} = \bbA \bbx_{t-1}$ and relate this state to the underlying time-invariant topology (the transition matrix $\bbA$ depends on the shift operator). In this paper, we consider the more general case of random time varying topologies, i.e. $\bbx_{t} = \bbA_{t-1} \bbx_{t-1}$, and we abstract the relationship between the transition matrix and the underlying topology as follows.
\begin{assumption} \label{ass_transition_matrix}
Let $\bbA_{t}$ be the time varying transition matrix of a diffusion process over a random time varying graph $\ccalG_{t} \in \RES(p)$. Then, $\mbE[\bbA_{t}]$ and $\bbS$ share the same eigenvectors.
\end{assumption}
\noindent That is, we consider diffusions on random graphs such that the eigenvectors of the expected transition matrix and the underlying GSO coincide. The following lemma shows this is the case for the diffusion models \eqref{eqn_discrete_sol} and \eqref{eqn_discrete_diff} on $\RES(p)$ graph realizations.
\begin{lemma} \label{l_valid_models}
Let $\ccalG$ be a graph satisfying Assumption~\ref{ass_gso} and let $\ccalG_{t}$ be a $\text{RES}(p)$ realization of it. For the diffusion models
\begin{enumerate}[(i)]
\item $\bbS = \bbL$ and $\bbA_{t} = \bbI - \epsilon \bbL_{t}$ [cf. \eqref{eqn_discrete_sol}],
\item $\bbS = \bbW$ and $\bbA_{t} = \bbW_{t}$ [cf. \eqref{eqn_discrete_diff}],
\end{enumerate}
Assumption~\ref{ass_transition_matrix} and $\|\bbA_{t}\|_2 \leq \varrho$ hold.
\end{lemma}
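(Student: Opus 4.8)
The plan is to handle the two diffusion models separately; in each case I first verify Assumption~\ref{ass_transition_matrix} by exhibiting $\mbE[\bbA_{t}]$ as a polynomial in $\bbS$ (so that it is diagonalized by $\bbV$), and then bound $\|\bbA_{t}\|_{2}$ uniformly over the realizations.

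For model~(i) I would use linearity of expectation together with the identity $\barbL=p\bbL$ recorded after Definition~\ref{def_res}: $\mbE[\bbA_{t}]=\bbI-\epsilon\,\mbE[\bbL_{t}]=\bbI-\epsilon\,\barbL=\bbI-\epsilon p\,\bbL=\bbI-\epsilon p\,\bbS$, a degree-one polynomial in $\bbS=\bbL$, which immediately gives Assumption~\ref{ass_transition_matrix}. For the norm bound, note that a $\RES(p)$ realization only deletes edges and that the Laplacian is a sum of per-edge positive semidefinite rank-one terms; hence $\bbzero\preceq\bbL_{t}\preceq\bbL$ for every realization, so the eigenvalues of $\bbL_{t}$ lie in $[0,\lambda_{\max}(\bbL)]=[0,\|\bbL\|_{2}]$. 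Since $0<\epsilon\le 1/\|\bbL\|_{2}$, the eigenvalues of the symmetric matrix $\bbA_{t}=\bbI-\epsilon\bbL_{t}$ lie in $[\,1-\epsilon\|\bbL\|_{2},\,1\,]\subseteq[0,1]$, whence $\|\bbA_{t}\|_{2}\le 1\le\varrho$ (taking, harmlessly, the constant $\varrho$ in Assumption~\ref{ass_gso} to be $\ge 1$).

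For model~(ii), $\bbA_{t}=\bbW_{t}$ and again $\mbE[\bbA_{t}]=\barbW=p\bbW=p\bbS$ shares the eigenbasis of $\bbS=\bbW$, so Assumption~\ref{ass_transition_matrix} holds. The one substantive point is the norm bound: $\bbW_{t}$ has nonnegative entries dominated entrywise by those of $\bbW$, so $\bbW_{t}^{\Hr}\bbW_{t}$ is entrywise dominated by $\bbW^{\Hr}\bbW$ (each entry is a sum of products of nonnegative reals); by monotonicity of the spectral radius of nonnegative matrices, $\|\bbW_{t}\|_{2}^{2}=\rho(\bbW_{t}^{\Hr}\bbW_{t})\le\rho(\bbW^{\Hr}\bbW)=\|\bbW\|_{2}^{2}\le\varrho^{2}$. (In the undirected case this shortcuts to $\|\bbW_{t}\|_{2}=\rho(\bbW_{t})\le\rho(\bbW)=\|\bbW\|_{2}$, since both matrices are symmetric and nonnegative.)

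The expectation computations and the spectral-mapping steps are routine; the part that needs care is the uniform norm bound under random edge deletion --- the positive semidefinite ordering $\bbL_{t}\preceq\bbL$ in case~(i) and the Perron--Frobenius spectral-radius monotonicity for nonnegative matrices in case~(ii) --- together with the cosmetic normalization $\varrho\ge 1$ that lets case~(i) be phrased with the same constant $\varrho$.
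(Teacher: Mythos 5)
Your proof is correct and follows the same overall structure as the paper's: compute $\mbE[\bbA_{t}]$ explicitly as a polynomial in $\bbS$ to get the shared eigenbasis, then bound $\|\bbA_{t}\|_{2}$ uniformly by exploiting $\ccalG_{t}\subseteq\ccalG$. Where you differ is in the technical justification of the norm bounds. For model (i) the paper invokes the Laplacian interlacing property of edge-deleted subgraphs to get $\|\bbL_{t}\|_{2}\le\|\bbL\|_{2}$, whereas you use the per-edge positive semidefinite decomposition to get the stronger Loewner ordering $\bbzero\preceq\bbL_{t}\preceq\bbL$; both deliver the same eigenvalue containment, and yours is more self-contained. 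For model (ii) the paper routes through $\|\bbW\|_{2}=\lambda_{\max}(\bbW)\le\max\deg(\ccalG)\le\varrho$ and the monotonicity of the maximum degree under edge deletion, which implicitly strengthens Assumption~\ref{ass_gso} to a bound on the maximum degree; your Perron--Frobenius argument ($\bbW_{t}$ entrywise dominated by $\bbW$, hence $\rho(\bbW_{t}^{\Hr}\bbW_{t})\le\rho(\bbW^{\Hr}\bbW)$) compares $\|\bbW_{t}\|_{2}$ to $\|\bbW\|_{2}$ directly and so uses only $\|\bbS\|_{2}\le\varrho$ as stated --- a cleaner chain of inequalities. Your explicit remark that $\varrho$ can be taken $\ge 1$ in case (i) tidies up a point the paper glosses over with ``upper bounded by some finite $\varrho$.''
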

\noindent Other models that satisfy these conditions are the wave equation on graphs and graph-based ARMA models, see \cite{Isufi18-Observability}.



\section{Controllability on Deterministic Graphs} \label{sec_sparse}

Consider the $N$-state linear system
\begin{equation} \label{eqn_control_sys}
	\bbx_{t} = \bbA \bbx_{t-1} + \bbB \bbu_{t-1}
\end{equation}
where $\bbx_t \in \reals^N$ denotes the state value on all nodes at time $t$, $\bbu_t \in \reals^M$ is the control signal injected on $M \le N$ nodes, and $\bbA \in \reals^{N \times N}$ and $\bbB \in \reals^{N\times M}$ are the transition and control input matrix, respectively. The relationship between the network state $\bbx_{t}$ and the underlying topology is captured in \eqref{eqn_control_sys} through the transition matrix $\bbA$; it shares the eigenvectors with $\bbS$ and this is the case for models \eqref{eqn_discrete_sol} and \eqref{eqn_discrete_diff}.

System \eqref{eqn_control_sys} is controllable if and only if the controllability matrix
\begin{equation} \label{eq.cont_mat}
\bbOmega = [\bbB, \bbA\bbB, \ldots, \bbA^{T-1}\bbB]
\end{equation}
has full rank $N$ \cite[Section 6.2.1]{Kailath80-LinearSystems}. While full rank of $\bbOmega$ guarantees the convergence of $\bbx_t$ to any target signal $\bbx^*$, we focus on controlling the network state towards a bandlimited graph signal $\bbx^{\ast}=\bbV_{K} \tbx_{K}^{\ast}$. Here, $\tbx_{K}^{\ast} \in \mbC^{K}$ determines the desired frequency response over the $K$ frequencies of interest; the \emph{target bandwidth}. We thus define the bandwidth controllability as follows.
\begin{definition}[Bandwidth controllability] \label{def:BWcontrol}
An $N$-state system on a graph is bandwidth controllable from $M \le N$ nodes if, for any initial state $\bbx_0$ and some final time $T$, there exists a sequence of control signals $\{\bbu_{t} \in \reals^{M}, t=0,1,\ldots,T-1\}$ acting on a fixed set of $M$ nodes that drive the network state to a value $\bbx^*$ with any frequency content $\tbx_{K}^{\ast} = \bbV_{K}^{\Hr} \bbx^{\ast}$ over the $K \leq N$ target bandwidth.
\end{definition}

Lead by the promising results of bandlimited graph signal reconstruction from samples on a few nodes \cite{Pesenson08-Sampling, Chen15-Sampling, Tsitsvero16-Uncertainty, DiLorenzo18-Adaptive, Gama18-Sketching}, we aim to control $\bbx_t$ through a fixed, time-invariant, set of nodes $\ccalS$ of cardinality $|\ccalS| = M \le N$. Let then $\bbB = \bbC^\Tr$ denote a binary matrix that selects these nodes. More formally, $\bbC$ belongs to the combinatorial set
%
\begin{equation} \label{eqn_selection_set}
	\ccalC_{M,N} = 
		\left\{ 
			\bbC \in \{0,1\}^{M \times N} : 
				\bbC \bbone_{N} = \bbone_{M} , 
				\bbC^{\Tr} \bbone_{M} \leq  \bbone_{N} 
		\right\}
\end{equation}
that selects $M$ out of $N$ different nodes and the ordering relation $\leq$ among vectors stands for the elementwise partial ordering \cite[Example 2.23]{Boyd04-Convex}. Observe that $\bbC \bbC^{\Tr} = \bbI_{M}$ and $\bbC^{\Tr} \bbC = \diag(\bbc)$ with $\bbc \in \{0,1\}^{N}$, such that $[\bbc]_{i}=1$ if and only if node $v_{i}$ belongs to $\ccalS$.

With this in place, we write the linear system on graphs \eqref{eqn_control_sys} in the GFT domain as
\begin{equation}\label{eq.sys_gft}
\begin{split}
	\tbx_{t} 
		&= \bbV^\Hr\bbA\bbV\tbx_{t-1} + \bbV^\Hr\bbC^\Tr\bbu_{t-1}\\
		&\triangleq \tbA\tbx_{t-1} + \bbV^\Hr\bbC^\Tr\bbu_{t-1}
\end{split}
\end{equation}
where $\tbA = \bbV^\Hr\bbA\bbV$ is a diagonal matrix containing the eigenvalues of $\bbA$ [cf. Assumption~\ref{ass_transition_matrix}]. For convenience, we write $\tbA = \text{diag}(\bba) \in \mbC^{N \times N}$ with $\bba \in \mbC^{N}$ the vector containing the eigenvalues of $\tbA$, known also as the spectral response of $\bbA$. Then, by splitting \eqref{eq.sys_gft} into the $K$ frequencies of interest we have
\begin{equation}\label{eq.gft_splitSys}
 \begin{bmatrix}
	\tbx_{t,K} \\
	\tbx_{t,N-K}
\end{bmatrix}
	 =
\begin{bmatrix}
	\diag(\bba_{K}) \tbx_{t-1,K} \\
	\diag(\bba_{N-K}) \tbx_{t-1,N-K}
\end{bmatrix} 
+
\begin{bmatrix}
	\bbV_{K}^{\Hr}\bbC^{\Tr} \bbu_{t-1} \\
	\bbV_{N-K}^{\Hr}\bbC^{\Tr} \bbu_{t-1}
\end{bmatrix}
\end{equation}
where $\tbx_t = [\tbx_{t,K}^{\Tr},\tbx_{t,N-K}^{\Tr}]^{\Tr}$, $\bba = [\bba_{K}^{\Tr}, \bba_{N-K}^{\Tr}]^{\Tr}$ and $\bbV = [\bbV_K, \bbV_{N-K}]$.

For a system that is bandwidth controllable [cf. Def.~\ref{def:BWcontrol}], recursion \eqref{eq.gft_splitSys} can drive the network state $\bbx_{t}$ to any signal $\bbx_{T}$ that has a desired frequency response $\tbx_{K}^{\ast}$ over the target bandwidth $K$. Thus, we can focus on the $K$ frequencies of interest, determine the driving nodes through matrix $\bbC$, and design the control signals $\{\bbu_{t}\}$ such that the system
\begin{equation} \label{eqn_control_sys_freq_k}
\tbx_{t,K} = 
\tbA_K \tbx_{t-1,K} 
+ \bbV_{K}^{\Hr} \bbC^{\Tr} \bbu_{t-1}.
\end{equation}
reaches $\tbx_{K}^{\ast}$ at time $T$, i.e. $\tbx_{T,K} = \tbx_{K}^{\ast}$. But we observe that focusing on these $K$ frequencies leads to a non-zero value also on the $N-K$ remaining frequencies. One way to suppress this undesired content is to force $\tbx_{t,N-K} = \bbzero_{N-K}$. This requires the design of a sampling matrix $\bbC$ that satisfies $\bbV_{N-K}^{\Hr} \bbC^{\Tr} = \bbzero_{(N-K) \times M}$ [cf. \eqref{eq.gft_splitSys}]. The latter might be infeasible or might severely constraint the selection of driving nodes. Therefore, to avoid the out-of-band frequency content, we use a frequency-selective graph filter $\bbH = \bbV_{K}\bbV_{K}^{\Hr}$ to force bandlimitedness on the final network state $\bbx^{\ast}=\bbH \bbx_{T}$. This filter can be implemented locally through a polynomial in the shift operator $\bbS$ with degree at most $N$ \cite{Segarra17-Linear}.

Hereinafter, the design variables are the sampling matrix $\bbC$ and the control signals $\bbu_t \in \reals^{M}$ for $t = 0,\ldots,T-1$. And, since the initial state $\bbx_{0}$ is considered known we fix, without loss of generality, $\bbx_{0}=\bbzero_{N}$ as the common practice in control literature \cite{Barbarossa16-Control}, {\cite[Section 2.3.2]{Kailath80-LinearSystems}, \cite[Section 2.1]{Lewis95-OptimalControl}}. With this set down, we claim our first contribution, which we will also exploit for controllability on random graphs in Section~\ref{sec_random}.
%
\begin{proposition}\label{prop.detSparCont}
Consider the linear system \eqref{eqn_control_sys_freq_k} describing a process over a deterministic graph $\ccalG$. A necessary condition to control the system in a finite time $T$ towards a target frequency content $\tbx^*_K$ over the $K$ frequencies of interest is to select $M \ge \lceil K/T \rceil$ driving nodes.
\end{proposition}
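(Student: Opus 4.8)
The plan is to unroll the recursion~\eqref{eqn_control_sys_freq_k} from $\tbx_{0,K}=\bbzero_K$ to time $T$ and express the terminal state $\tbx_{T,K}$ as a linear function of the stacked control inputs, then count dimensions. Iterating \eqref{eqn_control_sys_freq_k} with $\bbx_0=\bbzero_N$ (hence $\tbx_{0,K}=\bbzero_K$) gives
\begin{equation} \label{eq.unrolled}
\tbx_{T,K} = \sum_{t=0}^{T-1} \tbA_K^{\,T-1-t}\, \bbV_K^{\Hr}\bbC^{\Tr}\,\bbu_{t}
= \bbOmega_K\, \bbu,
\end{equation}
where $\bbu=[\bbu_0^{\Tr},\ldots,\bbu_{T-1}^{\Tr}]^{\Tr}\in\reals^{MT}$ and the \emph{band-restricted controllability matrix} is
\begin{equation} \label{eq.bandcontmat}
\bbOmega_K = \big[\, \tbA_K^{\,T-1}\bbV_K^{\Hr}\bbC^{\Tr},\; \tbA_K^{\,T-2}\bbV_K^{\Hr}\bbC^{\Tr},\; \ldots,\; \bbV_K^{\Hr}\bbC^{\Tr}\,\big] \in \mbC^{K\times MT}.
\end{equation}
First I would observe that bandwidth controllability in the sense of Definition~\ref{def:BWcontrol} requires that for \emph{any} target $\tbx_K^{\ast}\in\mbC^{K}$ there is some $\bbu$ with $\bbOmega_K\bbu=\tbx_K^{\ast}$; equivalently, the linear map $\bbu\mapsto\bbOmega_K\bbu$ must be surjective onto $\mbC^{K}$, i.e.\ $\rank(\bbOmega_K)=K$.

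Next I would invoke the elementary rank bound $\rank(\bbOmega_K)\le\min\{K,\,MT\}$, together with the sub-multiplicativity/sub-additivity of rank applied to the block structure: each block $\tbA_K^{\,j}\bbV_K^{\Hr}\bbC^{\Tr}$ has rank at most $\rank(\bbC^{\Tr})=\rank(\bbC)=M$ (since $\bbC\in\ccalC_{M,N}$ has $M$ rows), so $\rank(\bbOmega_K)\le\sum_{j=0}^{T-1} M = MT$. Therefore surjectivity onto $\mbC^K$ forces $K\le MT$, which rearranges to $M\ge K/T$, and since $M$ is an integer, $M\ge\lceil K/T\rceil$. This establishes the claimed necessary condition.

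I expect the only genuine subtlety to be bookkeeping: making sure the map in~\eqref{eq.unrolled} is correctly derived for the chosen convention $\bbx_0=\bbzero_N$ (so no homogeneous term survives and the condition is independent of $\bbx_0$, matching Definition~\ref{def:BWcontrol}), and being careful that $\tbA_K$ is diagonal so the powers $\tbA_K^{\,j}$ are well defined and do not inflate any block's rank. There is no convergence or spectral-norm issue here — Assumption~\ref{ass_gso} and $\|\bbA_t\|_2\le\varrho$ are not needed for this necessity direction; they matter only for the sufficiency/design side treated later. The main ``obstacle,'' such as it is, is purely expository: stating clearly that necessity follows from a dimension count on $\bbOmega_K$, without claiming the converse (full rank is not asserted to be sufficient here, since that additionally requires a suitable choice of $\bbC$ and non-degeneracy of the eigenvalues $\bba_K$).
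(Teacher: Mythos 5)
Your proposal is correct and follows essentially the same route as the paper: unroll the in-band dynamics, identify the band-restricted controllability matrix, require full row rank $K$ for surjectivity onto any target $\tbx_K^{\ast}$, and bound its rank by $MT$ (the paper phrases this as $\rank(\tbOmega)\le\min\{K,\,T\min\{K,M\}\}$ via $\rank(\bbA\bbB)\le\min\{\rank(\bbA),\rank(\bbB)\}$ on the factorization $\tbOmega=[\bbI_K,\tbA_K,\ldots,\tbA_K^{T-1}](\bbI_T\otimes\bbV_K^{\Hr}\diag(\bbc))$, but the dimension count is the same). The conclusion $M\ge\lceil K/T\rceil$ then follows identically.
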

%

Proposition~\ref{prop.detSparCont} provides a necessary condition on the minimum number of nodes to control system \eqref{eqn_control_sys_freq_k} in $T$ instants. It shows the trade-off between the cardinality of the sampling set $M$, the signal bandwidth $K$, and the control time $T$. Thus, for $T \ge K$ there is the potential to control the network by acting on a single node. But this is not sufficient since the system controllability is affected by the network topology \cite{Strogatz01-Exploring, Newman06-Structure, Lombardi07-Controllability}, i.e. the influence of the driving nodes on the frequency content, see also \cite{Chen15-Sampling, Marques16-Sampling, Segarra16-Percolation}. Hence, the driving nodes should be carefully picked to guarantee the controllability of \eqref{eqn_control_sys_freq_k}. In what follows, we show the relation of Proposition~\ref{prop.detSparCont} with \cite{Segarra16-Percolation} and \cite{Barbarossa16-Control}.

\vskip2.5mm
\textit{a) Relation with \cite{Segarra16-Percolation}.} Proposition~\ref{prop.detSparCont} encompasses under a single condition the three graph signal reconstruction strategies of \cite{Segarra16-Percolation}. In fact, $M \ge K$ and $T = 1$ covers the \emph{multiple node-single time} seeding strategy; $M = 1$ and $T \ge K$ is a necessary condition to control the signal for the \emph{single node-multiple time} seeding strategy; and $M \ge \lceil K/T \rceil$ covers the more involved \emph{multiple node-multiple time} seeding approach. This is expected, since graph signal reconstruction through percolation is a particular case of system \eqref{eqn_control_sys}.

\vskip2.5mm
\textit{b) Relation with \cite{Barbarossa16-Control}.} Differently from our approach, \cite{Barbarossa16-Control} focuses on designing the control signal $\bbu = [\bbu_{T-1}^\Tr,\ldots, \bbu_{1}^\Tr,\bbu_{0}^{\Tr}]^\Tr \in \reals^{MT \times 1}$ as a trade-off between sparsity in the vertex domain and signal energy. This problem writes as
\begin{equation}\label{eq.brossa_prob}
\begin{aligned}
	& \underset{\bbu}{\text{minimize}}
	& & \|\bbu\|_2^2 + \gamma\|\bbu\|_0 \\
	& \text{subject to}
	& & \bbx_{t+1} = \bbA \bbx_{t} + \bbB \bbu_{t}\ , \ t = 0, \ldots, T-1 \\
	&&& \bbx_0 = \bb0_N, \quad \bbx_{T} = \bbx^*
\end{aligned}
\end{equation}
where the constant $\gamma$ trades the control signal energy  $\|\bbu\|_2^2$ with sparsity $\|\bbu\|_0$. Problem \eqref{eq.brossa_prob} yields a sparse control signal $\bbu$ across time, but the driving nodes are not necessarily fixed. In this regard, Proposition~\ref{prop.detSparCont} imposes a minimum dimension on $\bbu$ such that controllability is possible from a fixed set $\ccalS$ of driving nodes.



\section{Controllability on Random Graphs} \label{sec_random}

When the network topology is time varying, system \eqref{eqn_control_sys} should change to reflect the time dependency in the transition matrix. When the time variation is random, the controllability of the network state should follow a statistical approach. We propose a statistical framework in this section, where in Section~\ref{subsec_mean} we develop the concept of controllability in the mean and in Section~\ref{subsec_mse} we perform the mean squared error analysis.


\subsection{Mean controllability} \label{subsec_mean}

The dynamics of a time varying system on random graphs are given by
\begin{equation} \label{eqn_control_sys_t}
	\bbx_{t} = \bbA_{t-1} \bbx_{t-1} + \bbC^{\Tr} \bbu_{t-1}
\end{equation}
where under the $\RES(p)$ model in Definition \ref{def_res}, $\{\bbA_{t}\}$ is a set of i.i.d. random matrices with $\mbE [\bbA_{t}] = \barbA$. The deterministic design variables are contained in the second term of \eqref{eqn_control_sys_t}, $\bbC^{\Tr} \bbu_{t-1}$. The state $\bbx_{t}$ depends on the random system matrices $\{\bbA_{\tau}\}_{\tau=0}^{t-1}$, which are independent from $\bbA_{t}$ and from the deterministic design variables $\bbC$ and $\{\bbu_{\tau}\}_{\tau=0}^{t-1}$. Note also that $\barbA$ and $\bbS$ share the same eigenvectors, i.e. $\barbA = \bbV \diag(\barba) \bbV^{\Hr}$; thus, it captures in statistics the relation between the underlying topology and state $\bbx_{t}$. We can then write the mean evolution of \eqref{eqn_control_sys_t} as
\begin{align}
	\bbmu_{t}
		&= \barbA \bbmu_{t-1} + \bbC^{\Tr} \bbu_{t-1}
			\label{eqn_control_sys_mean}
\end{align}
where $\bbmu_{t}=\mbE[\bbx_{t}]$. System \eqref{eqn_control_sys_mean} is a deterministic and time-invariant system analogous to \eqref{eqn_control_sys}.  We develop the following controllability concept.

\begin{definition}[Bandwidth controllability in the mean] \label{def:BWcontrolRandom}
An $N$-state system on a random graph of the form in \eqref{eqn_control_sys_t} with mean evolution in \eqref{eqn_control_sys_mean} is bandwidth controllable in the mean from $M \le N$ nodes if, for any initial state $ \bbx_0 $ and some final time $T$, there exists a sequence of control signals $ \{\bbu_t, t = 0, \ldots, T-1 \} $ acting on a fixed set of $M$ nodes that drive the mean network state to a value $\bbx^*$ with any frequency content $\bbx_{K}^{\ast} = \bbV_{K}^{\Hr} \bbx^{\ast}$ over the $K \leq N$ target bandwidth.
\end{definition}

Our goal is to control the mean system to a desired bandlimited graph signal $\bbx^{\ast} = \bbV_{K} \tbx_{K}^{\ast}$ in a finite time $T$ from a few nodes. We do so by designing the input signals $\{\bbu_t, t = 0, \ldots, T-1 \}$ and the node driving set $\ccalS$ (through matrix $\bbC$). In analogy to Section~\ref{sec_sparse}, we focus on the $K$ frequencies of interest of the mean system [cf.~\eqref{eqn_control_sys_freq_k}]
\begin{equation} \label{eqn_control_sys_freq_k_mean}
	\tbmu_{t,K} = \tbarA_{K} \tbmu_{t-1,K} + \bbV_{K}^{\Hr} \bbC^{\Tr} \bbu_{t-1}
\end{equation}
and drive it to the desired frequency content $\tbx_{K}^{\ast}$ [cf. Def.~\ref{def:BWcontrolRandom}], with $\tbarA_{K} = \widetilde{\barbA}_{K} = \diag(\barba_{K}) \in \mbC^{K \times K}$ containing the eigenvalues of $\barbA$ which determine the spectral response of the system evolution on the expected graph. Then, we apply a (deterministic) linear filter $\bbH = \bbV_{K} \bbV_{K}^{\Hr}$ to keep only those $K$ desired frequencies such that the mean network state $\bbmu^{\ast} = \mbE[\bbH \bbx_{T}] = \bbH \bbmu_{T}=\bbx^{\ast}$ results in a bandlimited graph signal.

Similarly to Proposition~\ref{prop.detSparCont}, we claim the following.
%
\begin{proposition} \label{prop_nodes_control}
Consider the linear system \eqref{eqn_control_sys_t} describing a process over a sequence of $\RES(p)$ graphs $\ccalG_t$ with in-band mean evolution \eqref{eqn_control_sys_freq_k_mean}. A necessary condition to control the mean system in finite time $T$ towards a target frequency content $\tbx^*_K$ over the $K$ frequencies of interest is to select $M \ge \lceil K/T \rceil$ driving nodes.
\end{proposition}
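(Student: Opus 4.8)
The plan is to observe that the in-band mean recursion \eqref{eqn_control_sys_freq_k_mean} is, formally, a deterministic time-invariant linear system of exactly the type treated in Proposition~\ref{prop.detSparCont}, so that the necessary condition transfers verbatim; the whole argument is a reduction followed by a dimension count.

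First I would make the reduction precise. By \eqref{eqn_control_sys_mean} --- itself obtained from \eqref{eqn_control_sys_t} via linearity of expectation together with the independence of $\bbA_{t-1}$ from $\{\bbA_\tau\}_{\tau<t-1}$ and from the deterministic variables $\bbC,\{\bbu_\tau\}$ --- the mean state obeys $\bbmu_t = \barbA\bbmu_{t-1} + \bbC^\Tr\bbu_{t-1}$; projecting onto the $K$ in-band eigenvectors and using $\barbA = \bbV\diag(\barba)\bbV^{\Hr}$ (Assumption~\ref{ass_transition_matrix}) gives \eqref{eqn_control_sys_freq_k_mean}, with diagonal transition matrix $\tbarA_K=\diag(\barba_K)$ and input matrix $\bbV_K^{\Hr}\bbC^\Tr$. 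Since $\bbx_0$ is known, the affine shift $\bbmu_t \mapsto \bbmu_t - \barbA^{t}\bbmu_0$ lets me assume $\tbmu_{0,K}=\bbzero_K$ without loss of generality. Bandwidth controllability in the mean (Definition~\ref{def:BWcontrolRandom}) then asks precisely that, for some finite $T$ and some choice of the $M$-node selection $\bbC$ and controls $\{\bbu_\tau\}_{\tau=0}^{T-1}$, every target frequency content $\tbx_K^{\ast}\in\mbC^K$ be attainable as $\tbmu_{T,K}$ --- i.e. that the surrogate deterministic system \eqref{eqn_control_sys_freq_k_mean} is bandwidth controllable in the sense of Definition~\ref{def:BWcontrol}.

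Next I would unwind the recursion to expose the linear map from controls to in-band state. Iterating \eqref{eqn_control_sys_freq_k_mean} from $\tbmu_{0,K}=\bbzero_K$ gives
\begin{equation*}
\tbmu_{T,K} = \sum_{\tau=0}^{T-1}\tbarA_K^{\,T-1-\tau}\,\bbV_K^{\Hr}\bbC^\Tr\bbu_\tau = \bbOmega_K\,\bbu ,
\end{equation*}
where $\bbu=[\bbu_{T-1}^\Tr,\ldots,\bbu_0^\Tr]^\Tr\in\reals^{MT}$ and $\bbOmega_K=[\bbV_K^{\Hr}\bbC^\Tr,\ \tbarA_K\bbV_K^{\Hr}\bbC^\Tr,\ \ldots,\ \tbarA_K^{T-1}\bbV_K^{\Hr}\bbC^\Tr]\in\mbC^{K\times MT}$ is the in-band controllability matrix. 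Reaching an arbitrary $\tbx_K^{\ast}\in\mbC^K$ forces $\bbOmega_K$ to map onto $\mbC^K$, i.e. $\rank(\bbOmega_K)=K$; but $\rank(\bbOmega_K)\le\min\{K,MT\}$, hence $MT\ge K$, and since $M$ is a positive integer this yields $M\ge\lceil K/T\rceil$. (Equivalently, one simply invokes Proposition~\ref{prop.detSparCont} applied to the deterministic in-band system \eqref{eqn_control_sys_freq_k_mean}.)

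The only point requiring care --- rather than a genuine obstacle --- is the justification that \eqref{eqn_control_sys_freq_k_mean} is a bona fide deterministic system with the stated diagonal structure: this rests on Assumption~\ref{ass_transition_matrix}, so that $\barbA$ and $\bbS$ are simultaneously diagonalizable and the in-band/out-of-band decoupling \eqref{eq.gft_splitSys} is valid at the level of means, and on the fact that $\bbC$ and $\{\bbu_\tau\}$ are deterministic and independent of the $\{\bbA_\tau\}$, so that no covariance terms enter $\bbmu_t$. Once this is in place the counting argument above is immediate and mirrors the deterministic proof of Proposition~\ref{prop.detSparCont}.
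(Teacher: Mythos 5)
Your proposal is correct and follows essentially the same route as the paper: the paper proves Proposition~\ref{prop.detSparCont} by writing the in-band controllability matrix as a product and bounding its rank by $\min\{K, T\min\{K,M\}\}$, and Proposition~\ref{prop_nodes_control} is obtained by applying the identical rank count to the deterministic mean system \eqref{eqn_control_sys_freq_k_mean} (whose controllability matrix $\tbOmega \in \mbC^{K\times TM}$ appears explicitly in \eqref{eqn_linear_sys_unbiased}). Your explicit justification of the reduction to the mean system via linearity of expectation and independence of $\bbA_t$ from the design variables is exactly what the paper does implicitly in Section~\ref{subsec_mean}.
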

Like Proposition~\ref{prop.detSparCont}, Proposition~\ref{prop_nodes_control} establishes a necessary condition to control a linear system, now, on random time varying graphs. As such, the same trade-off between the number of driving nodes $M$, the signal bandwidth $K$, and the control time $T$ applies here. The next corollary extends this result to a sufficient condition under some restrictions on the eigenvector basis.

%
\begin{corollary} \label{cor_suff}
Under the hypothesis of Proposition~\ref{prop_nodes_control}, if there exists a set of driving nodes $\ccalS$ built by $M$ nodes such that the corresponding $M$ rows of $\bbV_{K}$ are linearly independent vectors, then $M \ge K$ is a sufficient condition to control system \eqref{eqn_control_sys_t} in the mean.
\end{corollary}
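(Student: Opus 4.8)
The plan is to reduce bandwidth controllability in the mean to a rank condition on a band\-limited controllability matrix and then exhibit an explicit control sequence. First I would unroll the in-band mean recursion~\eqref{eqn_control_sys_freq_k_mean} from the fixed initial condition $\bbx_0=\bbzero_N$ (so $\tbmu_{0,K}=\bbzero_K$), which gives
\[
\tbmu_{T,K}=\sum_{t=0}^{T-1}\tbarA_K^{\,T-1-t}\,\bbV_K^\Hr\bbC^\Tr\bbu_t=\bbOmega_K\,\bbu,
\qquad
\bbOmega_K:=\big[\,\bbV_K^\Hr\bbC^\Tr,\ \tbarA_K\bbV_K^\Hr\bbC^\Tr,\ \ldots,\ \tbarA_K^{\,T-1}\bbV_K^\Hr\bbC^\Tr\,\big],
\]
where $\bbu:=[\bbu_{T-1}^\Tr,\ldots,\bbu_0^\Tr]^\Tr$ and $\bbOmega_K\in\mbC^{K\times MT}$; this is the band\-limited analogue of the controllability matrix~\eqref{eq.cont_mat}. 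Driving the mean state to a prescribed $\tbx_K^{\ast}$ over the band is then equivalent to solvability of $\bbOmega_K\bbu=\tbx_K^{\ast}$, which holds for \emph{every} $\tbx_K^{\ast}$ if and only if $\bbOmega_K$ has full row rank $K$; in that case one may take the minimum-norm solution $\bbu=\bbOmega_K^\Hr(\bbOmega_K\bbOmega_K^\Hr)^{-1}\tbx_K^{\ast}$, after which the band-pass filter $\bbH=\bbV_K\bbV_K^\Hr$ returns the band\-limited mean state $\bbmu^{\ast}=\bbH\bbmu_T=\bbV_K\tbmu_{T,K}=\bbx^{\ast}$.

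Next I would discharge the rank condition using the hypothesis. Let $\bbC\bbV_K\in\mbC^{M\times K}$ be the submatrix of $\bbV_K$ formed by the rows indexed by $\ccalS$. The assumption that these rows are linearly independent, read together with $M\ge K$ as $\rank(\bbC\bbV_K)=K$ (equivalently, $\bbC\bbV_K$ has full column rank; this matches ``linearly independent rows'' exactly when $M=K$), is the same as saying $\bbV_K^\Hr\bbC^\Tr=(\bbC\bbV_K)^\Hr$ has full row rank $K$. But $\bbV_K^\Hr\bbC^\Tr$ is precisely the first block of $\bbOmega_K$, so $\rank(\bbOmega_K)\ge\rank(\bbV_K^\Hr\bbC^\Tr)=K$ and hence $\rank(\bbOmega_K)=K$; by the reduction above the mean system is bandwidth controllable in the mean, proving the claim. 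In fact the argument shows $T=1$ already suffices once $M\ge K$, consistently with the necessary bound $M\ge\lceil K/T\rceil$ of Proposition~\ref{prop_nodes_control}.

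I do not anticipate a genuine obstacle: this corollary is the sufficiency companion of Proposition~\ref{prop_nodes_control}, and the crux is simply that one block of the band\-limited controllability matrix already attains rank $K$ under the stated row-independence condition. The two points needing a little care are (i) pinning the (slightly informal) hypothesis down to $\rank(\bbC\bbV_K)=K$, as above, and (ii) verifying the control can be chosen in $\reals^M$ rather than $\mbC^M$: for an undirected $\ccalG$ the eigenvectors $\bbV$ may be taken real, so $\bbOmega_K$ is real and $\bbOmega_K\bbu=\tbx_K^{\ast}$ is consistent over $\reals$ for every realizable (real-valued) target signal. Beyond this, nothing about the random model is invoked except that $\barbA=\mbE[\bbA_t]$ shares the eigenvectors of $\bbS$ (Assumption~\ref{ass_transition_matrix}), which is exactly what makes the diagonal reduction~\eqref{eqn_control_sys_freq_k_mean} valid.
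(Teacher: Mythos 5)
Your proof is correct. The reduction to the rank condition $\rank(\tbOmega)=K$ and the use of the hypothesis to obtain $\rank(\bbV_K^\Hr\bbC^\Tr)=K$ are exactly what the paper does; the only place you diverge is the final rank step. The paper factors $\tbOmega=\bbX\bbY$ with $\bbX=[\bbI_K,\tbarA_K,\ldots,\tbarA_K^{T-1}]$ and $\bbY=\bbI_T\otimes\bbV_K^\Hr\diag(\bbc)$, computes $\rank(\bbY)=T\,\rank(\bbV_K^\Hr\diag(\bbc))=TK$, and invokes the lemma that right-multiplication by a full-row-rank matrix preserves rank, giving $\rank(\tbOmega)=\rank(\bbX)=K$. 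You instead observe that the first block of $\tbOmega$ is $\bbI_K\cdot\bbV_K^\Hr\bbC^\Tr$, which already has full row rank $K$ under the hypothesis, so $\rank(\tbOmega)\ge K$ and trivially $\le K$. Your route is more elementary (no Kronecker rank identity, no rank-preservation lemma) and makes explicit that $T=1$ already suffices once $M\ge K$; the paper's factorization is the one you would need if rank had to be accumulated across time steps rather than attained by a single block, which is why it sits more naturally alongside the necessity argument of Proposition~\ref{prop.detSparCont}. Your two side remarks --- reading the hypothesis as $\rank(\bbC\bbV_K)=K$, and noting that realness of $\bbu$ is guaranteed when $\bbV$ can be taken real --- are both sensible clarifications of points the paper leaves implicit.
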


\noindent An algorithm for finding such $M$ nodes is readily available in \cite[Algorithm 1]{Chen15-Sampling}.


\subsection{Mean squared error analysis} \label{subsec_mse}

In Section \ref{subsec_mean}, we discussed that system \eqref{eqn_control_sys_t} can be controlled \emph{in the mean}. Therefore, it is paramount also to quantify the mean squared error (MSE) of the controlled state to gain statistical insight into how \emph{close} the filtered final state on a specific graph realization $\bbH\bbx_{T}$ is to the actual desired signal $\bbx^{\ast}$. Towards this end, define $\bbPhi_{b,a} = \bbA_{b} \bbA_{b-1} \cdots \bbA_{a+1} \bbA_{a}$ as the \emph{state} transition matrix between time instants $b \geq a$. The following theorem determines the MSE.

\begin{theorem} \label{thm_mse}
Let Assumptions~\ref{ass_gso}~and~\ref{ass_transition_matrix} hold and let $\bbx_t$ be a graph process defined over a sequence of $\RES(p)$ graphs $\ccalG_t$ described by linear system \eqref{eqn_control_sys_t}. Given also a set of driving nodes $\ccalS$ characterized by selection matrix $\bbC$ and a set of control signals $\{\bbu_{t}\}_{t=0}^{T-1}$ with initial state $\bbx_0 = \bb0_N$. The MSE between the filtered signal $\bbH \bbx_{T}$ on a particular graph realization and the actual desired signal is
\begin{equation} \label{eqn_mse_t}
\begin{split}
& \MSE(T) 
	= \mbE \left[ \left\| \bbH \bbx_{T} - \bbx^{\ast} \right\|_{2}^{2} \right]
		 \\
	& = \alpha 
		-2 \sum_{\tau=0}^{T-1} 
			\bbbeta_{\tau}^{\Tr} \bbC^{\Tr} \bbu_{\tau} 
		+ \sum_{\tau=0}^{T-1} \sum_{\tau'=0}^{T-1} 
			\bbu_{\tau}^{\Tr} \bbC \ \bbGamma_{\tau,\tau'} \ \bbC^{\Tr} \bbu_{\tau'}
\end{split}
\end{equation}
which is a quadratic form on $\bbC^{\Tr}\bbu_{\tau}$ with coefficients $\alpha = \| \bbx^{\ast} \|_{2}^{2} \in \reals$, $\bbbeta_{\tau} = (\barbA^{T-\tau-1})^{\Tr} \bbH^{\Tr} \bbx^{\ast} \in \reals^{N \times 1}$ and $\bbGamma_{\tau,\tau'} = \mbE [ \bbPhi_{T-1, \tau+1} \bbH^{\Tr} \bbH \bbPhi_{T-1,\tau'+1}] \in \reals^{N \times N}$.
\end{theorem}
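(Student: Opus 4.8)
The plan is to unroll the state recursion in closed form, substitute it into the squared error, and then take the expectation term by term, using that the selection matrix $\bbC$ and the inputs $\{\bbu_\tau\}$ are deterministic while the matrices $\{\bbA_t\}$ are i.i.d. First I would show, by induction on $t$ starting from $\bbx_0 = \bb0_N$ and iterating $\bbx_t = \bbA_{t-1}\bbx_{t-1} + \bbC^\Tr\bbu_{t-1}$, that
\begin{equation*}
\bbx_T = \sum_{\tau=0}^{T-1} \bbPhi_{T-1,\tau+1}\,\bbC^\Tr\bbu_\tau ,
\end{equation*}
with the empty-product convention $\bbPhi_{T-1,T} = \bbI_N$ for the $\tau = T-1$ summand. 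Then, expanding $\|\bbH\bbx_T - \bbx^\ast\|_2^2 = \bbx_T^\Tr\bbH^\Tr\bbH\bbx_T - 2(\bbx^\ast)^\Tr\bbH\bbx_T + \|\bbx^\ast\|_2^2$ and inserting this closed form splits the error into a constant part equal to $\alpha = \|\bbx^\ast\|_2^2$, a linear part $-2\sum_\tau (\bbx^\ast)^\Tr\bbH\,\bbPhi_{T-1,\tau+1}\,\bbC^\Tr\bbu_\tau$, and a quadratic part $\sum_\tau\sum_{\tau'} \bbu_\tau^\Tr\bbC\,(\bbPhi_{T-1,\tau+1})^\Tr\bbH^\Tr\bbH\,\bbPhi_{T-1,\tau'+1}\,\bbC^\Tr\bbu_{\tau'}$.

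Next I would take expectations. Since $\bbC$ and every $\bbu_\tau$ are deterministic, $\mbE[\cdot]$ acts only on the random products of transition matrices, and the finiteness of $T$ together with $\|\bbA_t\|_2 \le \varrho$ (Assumption~\ref{ass_gso} and Lemma~\ref{l_valid_models}) guarantees these expectations are well defined. In the linear part the factors $\bbA_{T-1},\ldots,\bbA_{\tau+1}$ are independent with common mean $\barbA$, so $\mbE[\bbPhi_{T-1,\tau+1}] = \mbE[\bbA_{T-1}]\cdots\mbE[\bbA_{\tau+1}] = \barbA^{\,T-1-\tau}$; transposing the resulting scalar identifies this contribution with $-2\sum_\tau \bbbeta_\tau^\Tr\bbC^\Tr\bbu_\tau$, where $\bbbeta_\tau = (\barbA^{\,T-\tau-1})^\Tr\bbH^\Tr\bbx^\ast$. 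In the quadratic part the matrix coefficient is simply $\bbGamma_{\tau,\tau'} = \mbE[(\bbPhi_{T-1,\tau+1})^\Tr\bbH^\Tr\bbH\,\bbPhi_{T-1,\tau'+1}]$ as stated. Collecting the three pieces yields exactly \eqref{eqn_mse_t}, a quadratic form in the vectors $\bbC^\Tr\bbu_\tau$; realness of $\alpha$, $\bbbeta_\tau$ and $\bbGamma_{\tau,\tau'}$ follows because $\bbx^\ast$, the filter $\bbH$ (real symmetric), and the transition matrices are real.

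The computation is essentially bookkeeping, and the one point that needs attention is keeping the linear and the quadratic parts conceptually separate. In the linear part a single product of transition matrices appears, so independence collapses it to a power of $\barbA$. In the quadratic part the two products $\bbPhi_{T-1,\tau+1}$ and $\bbPhi_{T-1,\tau'+1}$ overlap --- they share the common factors $\bbA_{T-1},\ldots,\bbA_{\max\{\tau,\tau'\}+1}$ --- hence they are statistically dependent and $\bbGamma_{\tau,\tau'}$ genuinely does not factor into a product of means; it must be retained as an expectation (and would have to be evaluated separately, e.g. through a Kronecker-product/covariance recursion, to obtain a fully explicit formula). Getting the index ranges and the empty-product convention of the first step right, and tracking the transposes through the expansion of the square, is where the only real care is required; everything else is routine.
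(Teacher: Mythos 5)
Your proposal is correct and follows essentially the same route as the paper's proof: unroll the recursion to $\bbx_T=\sum_{\tau=0}^{T-1}\bbPhi_{T-1,\tau+1}\bbC^{\Tr}\bbu_\tau$, expand the squared error into constant, linear, and quadratic parts, and exploit the i.i.d.\ structure of $\{\bbA_t\}$ so that the single product in the linear term collapses to $\barbA^{\,T-\tau-1}$ while the overlapping products in the quadratic term are retained inside $\bbGamma_{\tau,\tau'}$. Your explicit remark that the two transition-matrix products share factors and hence do not factor into a product of means (and your placement of the transpose on the first $\bbPhi$ in $\bbGamma_{\tau,\tau'}$, matching the paper's derivation rather than the slightly abbreviated form in the theorem statement) are both accurate refinements of the same argument.
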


Given $\bbC$ and $\{\bbu_{t}\}$, the MSE in \eqref{eqn_mse_t} holds for any system described by \eqref{eqn_control_sys_t}, irrespective of their controllability. The MSE in \eqref{eqn_mse_t} is a quadratic function in the design variables $\bbC^{\Tr}\bbu_{\tau}$ and the corresponding coefficients $\alpha$, $\bbbeta_{\tau}$, and $\bbGamma_{\tau,\tau'}$ depend on known quantities: the graph filter $\bbH$; the desired state $\bbx^{\ast} =\bbV_{K} \tbx_{K}^{\ast}$; and the statistics (first and second order moments) of the underlying support through $\barbA$ and $\bbGamma_{\tau,\tau'}$. This result highlights also the impact the driving nodes have on the overall performance and shows their connection with the underlying support and the target bandwidth. More precisely, the coefficient $\alpha$ provides the MSE floor if there is no control signal (i.e., given by the energy of the target state); $\bbbeta_{\tau}$ takes into account the similarity between the target signal and the controlled signal evolution over the mean graph; and $\bbGamma_{\tau,\tau'}$ accounts for the variability of the random graph. Finally, we note that the computation of $\bbGamma_{\tau,\tau'}$ in \eqref{eqn_mse_t} might be cumbersome for some transition matrices $\bbA_{t}$. We thus provide in the appendix two practical results that address this issue: first, we provide a general upper bound; second, we show how to exactly compute $\bbGamma_{\tau,\tau'}$ for undirected graphs for the diffusion models in Lemma~\ref{l_valid_models}.



\section{Control Strategies} \label{sec_strategies}

In this section, we propose two control strategies (i.e., find $\bbC$ and $\{\bbu_t\}$) for graph processes over random time varying graphs, where, depending on the scenario, one can be preferred over the other. In Section~\ref{subsec_unbiased} we propose an unbiased control strategy, while in Section~\ref{subsec_minmse_tau} we introduce a control strategy that leverages the bias-variance trade-off to minimize the MSE.


\subsection{Unbiased controller} \label{subsec_unbiased}

The mean state in \eqref{eqn_control_sys_freq_k_mean} for $t = T$ can be expanded as
\begin{equation} \label{eqn_mu_t}
	\tbmu_{T,K} = 
		\sum_{\tau=0}^{T-1} 
			\tbarA_{K}^{T-\tau-1} 
			\bbV_{K}^{\Hr} \bbC^{\Tr} \bbu_{\tau}
\end{equation}
where $\tbarA_{K} = \widetilde{\barbA}_{K} = \diag(\barba_{K}) \in \mbC^{K \times K}$ and $\tbmu_{0,K} = \bbzero_K$. For an unbiased controller, it must hold at final time $T$ $\tbmu_{T,K} = \tbx_{K}^{\ast}$. Combining \eqref{eqn_mu_t} and $\tbmu_{T,K} = \tbx_{K}^{\ast}$, we obtain
\begin{equation} \label{eqn_linear_sys_unbiased}
    \tbOmega \bbu = 
	\left[ 
		\bbI_{K}, 
		\tbarA_{K},
		\cdots,
		\tbarA_{K}^{T-1}
	\right]
	\left( \bbI_{T} \otimes \bbV_{K}^{\Hr} \bbC^{\Tr} \right)
	\bbu 
	= \tbx_{K}^{\ast}
\end{equation}
with in-band controllability matrix $\tbOmega \in \mbC^{K \times T M}$ and input vector is $\bbu = [\bbu_{T-1}^\Tr, \bbu_{T-2}^{\Tr},\ldots, \bbu_{1}^\Tr,\bbu_{0}^{\Tr}]^\Tr \reals^{MT \times 1}$. For $\tbOmega$ being of full rank $K$ (i.e. a controllable system), system \eqref{eqn_linear_sys_unbiased} has infinite solutions on $\bbu$. Also often exists more than one set of nodes that guarantees controllability. We then select the set of nodes and design the control signals to minimize the $\MSE(T)$, while guaranteeing the solution is unbiased. 

Let $\ccalC^{\ast}_{M,N} = \{\bbC \in \ccalC_{M,N} : \rank(\tbOmega) = K\}$ be the set of selection matrices that satisfy controllability. The optimal unbiased control strategy can be posed as
\begin{align}
\min_{\bbC \in \ccalC^{\ast}_{M,N}, \bbu \in \reals^{TM}}
& \MSE(T)
\label{eqn_opt_unbiased_all} \\
\textrm{s. t. }
& \tbOmega \bbu = \tbx_{K}^{\ast}
\nonumber \\
& \tbOmega = \left[ 
\bbI_{K}, 
\tbarA_{K},
\cdots,
\tbarA_{K}^{T-1}
\right]
\left( \bbI_{T} \otimes \bbV_{K}^{\Hr} \bbC^{\Tr} \right)
\nonumber
\end{align}
where $\MSE(T)$ is given in \eqref{eqn_mse_t}. Oftentimes, we are interested in controlling the system with minimum energy \cite{Pasqualetti14-Limitations, Barbarossa16-Control}. In such cases, the minimum energy control signal is \cite[Section 6.2]{Boyd04-Convex}
\begin{equation} \label{eqn_optimal_wtau_unbiased}
	\bbu^{\ast} 
		= \tbOmega^{\Hr} 
		\left[ \tbOmega\tbOmega^{\Hr} \right]^{-1} 
		\tbx_{K}^{\ast} .
\end{equation}
Then, within the minimum energy framework, we select the nodes that minimize the $\MSE(T)$ as follows
\begin{align}
\min_{\bbC \in \ccalC^{\ast}_{M,N}}
	& \MSE(T)
		\label{eqn_opt_unbiased} \\
\textrm{s. t. }
	& \bbu^{\ast} 
		= \tbOmega^{\Hr} 
		\left[ \tbOmega\tbOmega^{\Hr} \right]^{-1} 
		\tbx_{K}^{\ast},
		\nonumber \\
	& \tbOmega = \left[ 
		\bbI_{K}, 
		\tbarA_{K},
		\cdots,
		\tbarA_{K}^{T-1}
	\right]
	\left( \bbI_{T} \otimes \bbV_{K}^{\Hr} \bbC^{\Tr} \right).
		\nonumber
\end{align}

\begin{algorithm}[t]
    \caption{Constrained Greedy Approach.}
    \label{algm_greedy}
    
    \begin{algorithmic}[1]
        \Statex \textbf{Input:} $M$: number of samples, $T$: time horizon
        \Statex $\quad \tbx_{K}^{\ast}$: desired frequency response
        \Statex $\quad \bbV_{K}$: frequency basis vectors of active frequencies
        \Statex $\quad \tbarA_{K}$: GFT of transition matrix
        \Statex $\quad \MSE(\cdot)$: function to compute MSE \Comment{\emph{See \eqref{eqn_mse_t}}}
        \Statex \textbf{Output:} $\bbC$: selected nodes, $\{\bbu_{t}\}$: control signals
        \Statex
        \Procedure{greedy}{$M$, $T$, $\tbx_{K}^{\ast}$, $\bbV_{K}$, $\tbarA_{K}$, $\MSE(\cdot)$}
        \State Set $\ccalS = \emptyset$ \Comment{\emph{Selected nodes}}
        \State Set $\ccalR = \ccalV$ \Comment{\emph{Remaining nodes}}
        \State Set $\texttt{bestMSE} \leftarrow \infty$
        \For{$m = 1:M$}
            \State Set $\texttt{bestNode} \leftarrow \emptyset$
            \For{$n = 1:N - m +1$}
            \State Select $r_{n} \in \ccalR$ \Comment{\emph{Choose a remaining node}}
            \State Compute matrix $\bbC$ for $\ccalS \cup \{r_{n}\}$
            \State Compute matrix $\tbOmega$ \Comment{\emph{See \eqref{eqn_linear_sys_unbiased}}}
            \If{$\rank(\tbOmega) = \min\{K, T \min \{K,m\}\}$}
                \State Solve $\tbOmega \bbu = \tbx_{K}^{\ast}$ for $\bbu$
                \State Compute $\MSE(T)$ \Comment{\emph{See \eqref{eqn_mse_t}}}
                \If{$\MSE(T) < \texttt{bestMSE}$}
                    \State Set $\texttt{bestNode} \leftarrow r_{n}$
                    \State Set $\texttt{bestMSE} \leftarrow \MSE(T)$
                \EndIf
            \EndIf
            \EndFor
        \State Set $\ccalS \leftarrow \ccalS \cup \{\texttt{bestNode}\}$
        \State Set $\ccalR \leftarrow \ccalR \backslash \{\texttt{bestNode}\}$
        \EndFor
        \EndProcedure
    \end{algorithmic}
    
\end{algorithm}

\begin{figure*}
    \captionsetup[subfigure]{justification=centering}
    \centering
    \begin{subfigure}{0.9\columnwidth}
        \centering
        \includegraphics[width=0.99\textwidth]{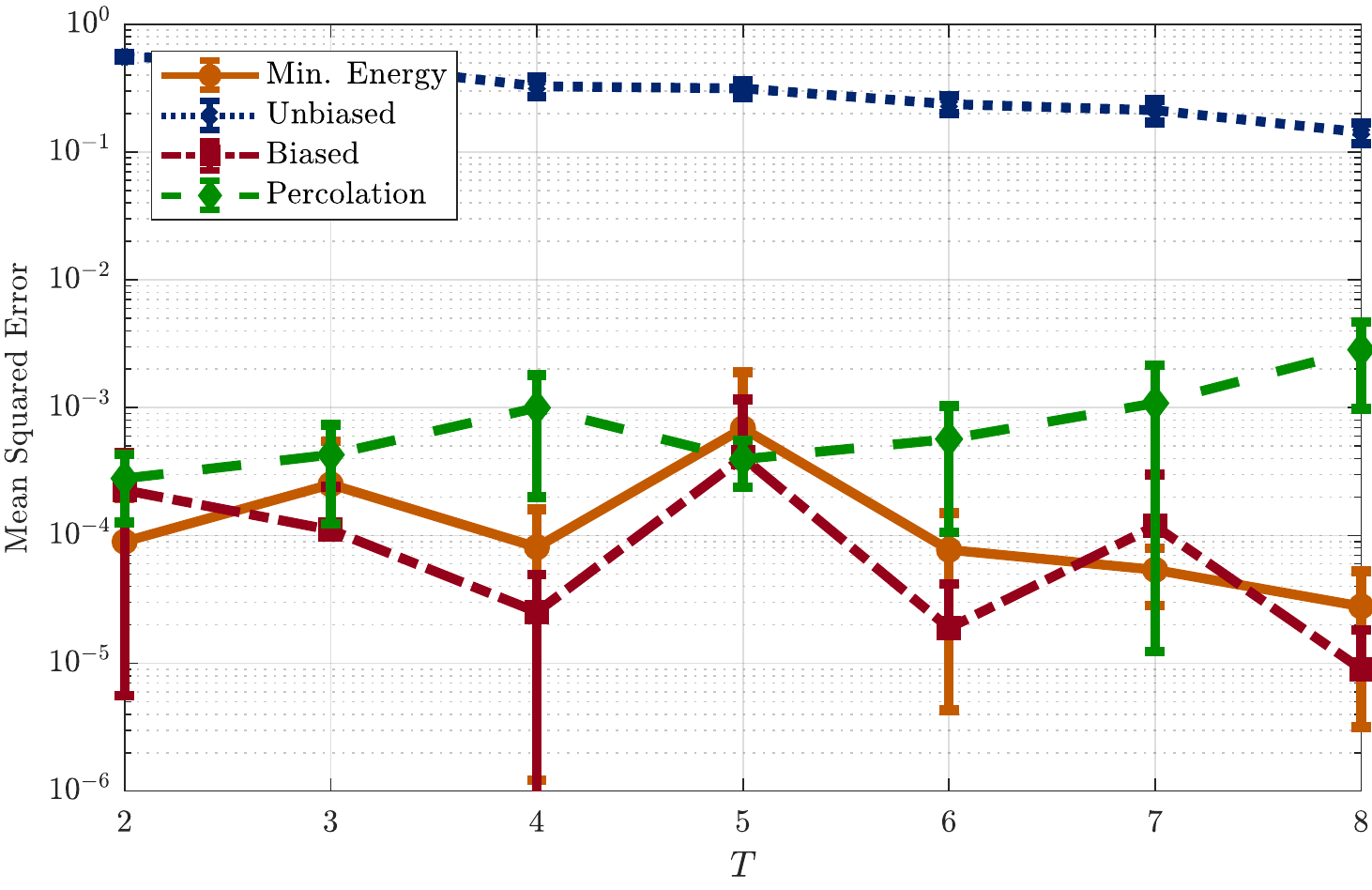}
        \caption{Sweeping time horizon $T$}
        \label{fixedT}
    \end{subfigure}
    \hfill
    \begin{subfigure}{0.9\columnwidth}
        \centering
        \includegraphics[width=0.99\textwidth]{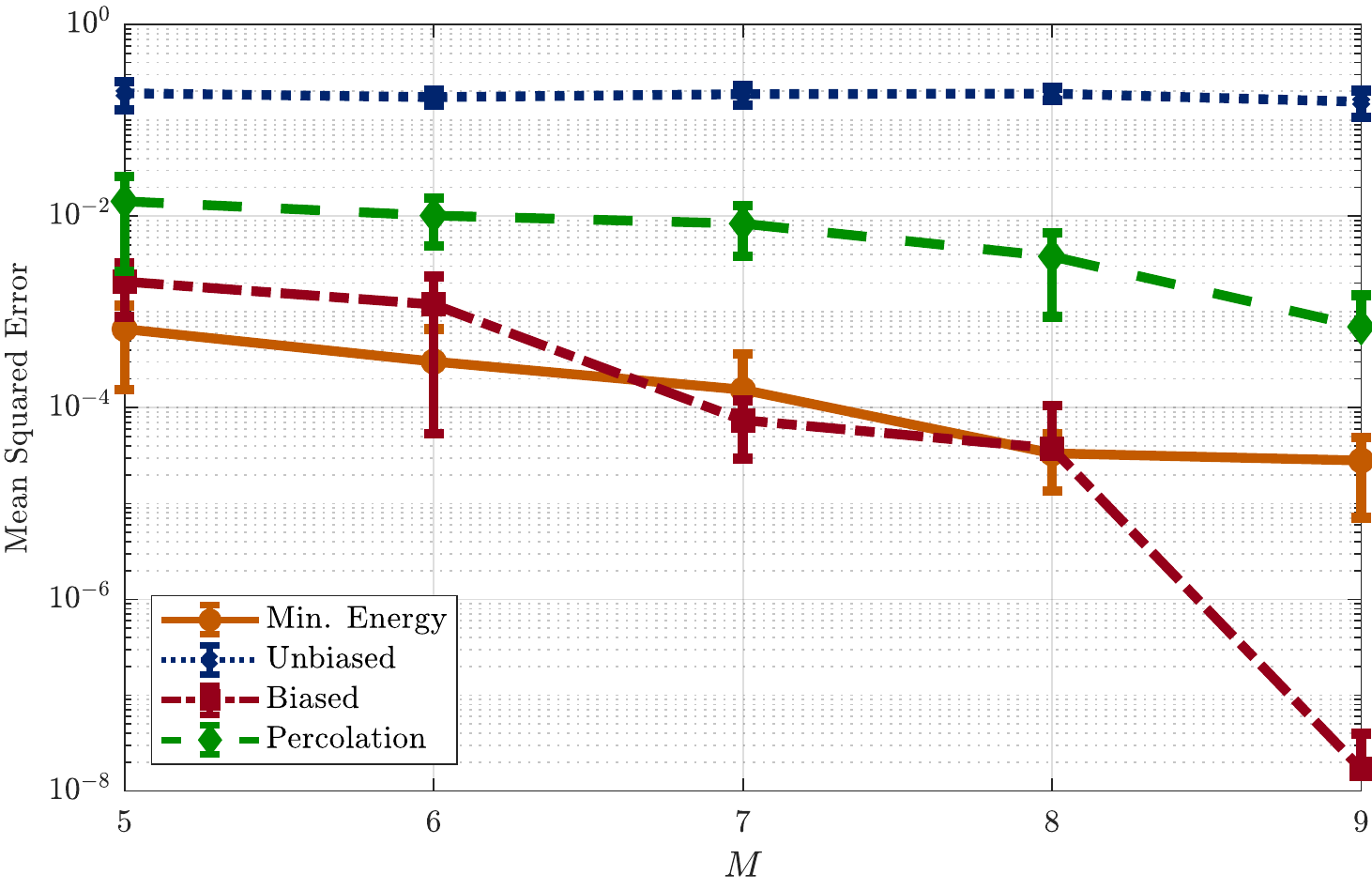}
        \caption{Sweeping number of selected nodes $M$}
        \label{fixedM}
    \end{subfigure}
    \caption{Baseline for time-invariant networks on geometric graphs. \subref{fixedT} Parameter sweeping simulation as a function of time horizon $T$. \subref{fixedM} Parameter sweeping simulation as a function of number of selected samples $M$. The biased controller has a performance similar to the Min. Energy and performs slightly better than Percolation. The unbiased controller lags behind in terms of MSE. The error bars indicate the estimated standard deviation from the $500$ graph realizations.}
    \label{fig:fixed}
\end{figure*}

Problem~\eqref{eqn_opt_unbiased} is non-convex due to the binary nature of the optimization variable $\bbC$. A heuristic solution is to follow a constrained greedy approach as described in Algorithm~\ref{algm_greedy}. The objective is to greedily select the nodes that improve the $\MSE$ while satisfying controllability. Specifically, for each candidate driving node, we check $\rank(\tbOmega)$ increases until we reach controllability as indicated in line 11 [cf. Proposition~\ref{prop_nodes_control}]. Since we are looking for the minimum energy controller, then line 12 entails computing $\bbu^{\ast} = (\tbOmega^{\Hr} \tbOmega)^{-1} \tbOmega^{\Hr} \tbx_{K}^{\ast}$ if $\rank(\tbOmega) = Tm$, and \eqref{eqn_optimal_wtau_unbiased} if $\rank(\tbOmega)=K$. While this constrained greedy approach has no theoretical guarantees \cite{Chamon18-Greedy}, our numerical results in Section~\ref{sec_sims} show that Algorithm~\ref{algm_greedy} exhibits a performance close to the optimal solution.


\subsection{Biased controller} \label{subsec_minmse_tau}

When the requirement for an unbiased controller is not strict, we can leverage the bias-variance trade-off to further reduce the mean squared error of the controlled state. Given a fixed sampling set $\bbC$, the $\MSE(T)$ \eqref{eqn_mse_t} is a quadratic function on the control signals $\{\bbu_{t},t=0,1,\ldots,T-1\}$. Therefore, we express $\bbu_{t}$ as a function of $\bbC$ as follows.

First, the derivative of $\MSE(T)$ in \eqref{eqn_mse_t} w.r.t. $\bbu_{t}$ is
\begin{equation} \label{eqn_dmsedu}
	\frac{\partial \MSE(T)}{\partial \bbu_{t}} =
		- 2 \bbC \bbbeta_{t}
		+ 2 \sum_{\tau=0}^{T-1} \bbC \bbGamma_{t,\tau} \bbC^{T} \bbu_{\tau}
\end{equation}
for $t=0,\ldots,T -1$ with $\bbbeta_{t}$ and $\bbGamma_{t,\tau}$ given in \eqref{eqn_mse_t}. By defining $\bbbeta = [\bbbeta_{T-1}^{\Tr},\ldots,\bbbeta_{0}^{\Tr}]^{\Tr} \in \reals^{NT}$,
\begin{equation} \label{eqn_def_delta}
	\bbGamma =
		\begin{bmatrix}
			\bbGamma_{T-1,T-1} & \bbGamma_{T-1,T-2} & \cdots & \bbGamma_{T-1,0} \\
			\bbGamma_{T-2,T-1} & \bbGamma_{T-2,T-2} & \cdots & \bbGamma_{T-2,0} \\
			\vdots & \vdots & \ddots & \vdots \\
			\bbGamma_{0,T-1} & \bbGamma_{0,T-2} & \cdots & \bbGamma_{0,0}
		\end{bmatrix} \in \reals^{NT \times NT}
\end{equation}
and by setting \eqref{eqn_dmsedu} to zero, we can write
\begin{equation} \label{eqn_ut_minmse}
    \bbGamma_{C} \bbu = 
	\left( \bbI_{T} \otimes \bbC \right)
	\bbGamma
	\left( \bbI_{T} \otimes \bbC^{\Tr} \right)
	\bbu
	=
	\left( \bbI_{T} \otimes \bbC \right)
	\bbbeta
    =
    \bbbeta_{C}
\end{equation}
with $\bbGamma_{C} \in \reals^{MT \times MT}$ and $\bbbeta_{C} \in \reals^{MT \times 1}$. By construction $\bbGamma_{C}$ has rank $MT$ since $\bbC \in \ccalC_{M,N}$. Then, for a sampling matrix $\bbC$ such that $\rank(\bbGamma_{C}) \geq MT$, $\bbGamma_C$ is nonsingular and leads to the (parameterized) minimum $\MSE(T)$ control signals
\begin{equation} \label{eqn_optimal_utau_mse}
	\bbu^{\ast}_C = \bbGamma_{C}^{-1} \bbbeta_{C}.
\end{equation}

From this relation between the control signal and the sampling matrix, we consider a two-stage optimization approach \cite[Section 4.1.3]{Boyd04-Convex} to find $\bbC$ that minimizes the $\MSE(T)$. This optimization problem writes as
\begin{align}
\min_{\bbC \in \ccalC_{M,N}^{\ast}}
	& \alpha - 2 \ \bbbeta_{C}^{\Tr} \bbu_{C}^{\ast} + (\bbu_{C}^{\ast})^{\Tr} \bbGamma_{C} \bbu_{C}^{\ast}
		\label{eqn_opt_mse} \\
\textrm{s. t. }
	& \bbu^{\ast}_C = \bbGamma_{C}^{-1} \bbbeta_{C}
		\nonumber \\
	& \bbGamma_{C} = (\bbI_{T} \otimes \bbC) \bbGamma (\bbI_{T} \otimes \bbC^{\Tr})
		\nonumber \\
	& \bbbeta_{C} = (\bbI_{T} \otimes \bbC) \bbbeta.
		\nonumber
\end{align}

\noindent To deal with the non-convexity of \eqref{eqn_opt_mse}, similarly to \eqref{eqn_opt_unbiased}, we rely on a constrained greedy approach analogous to Algorithm~\ref{algm_greedy}. Specifically, we replace line 10 by the computation of $\bbGamma_{C}$ and $\bbbeta_{C}$, line 11 by $\rank(\bbGamma_{C}) \geq mT$, and line 12 by \eqref{eqn_optimal_utau_mse}.


\begin{figure*}
    \captionsetup[subfigure]{justification=centering}
    \centering
    \begin{subfigure}{0.9\columnwidth}
        \centering
        \includegraphics[width=0.99\textwidth]{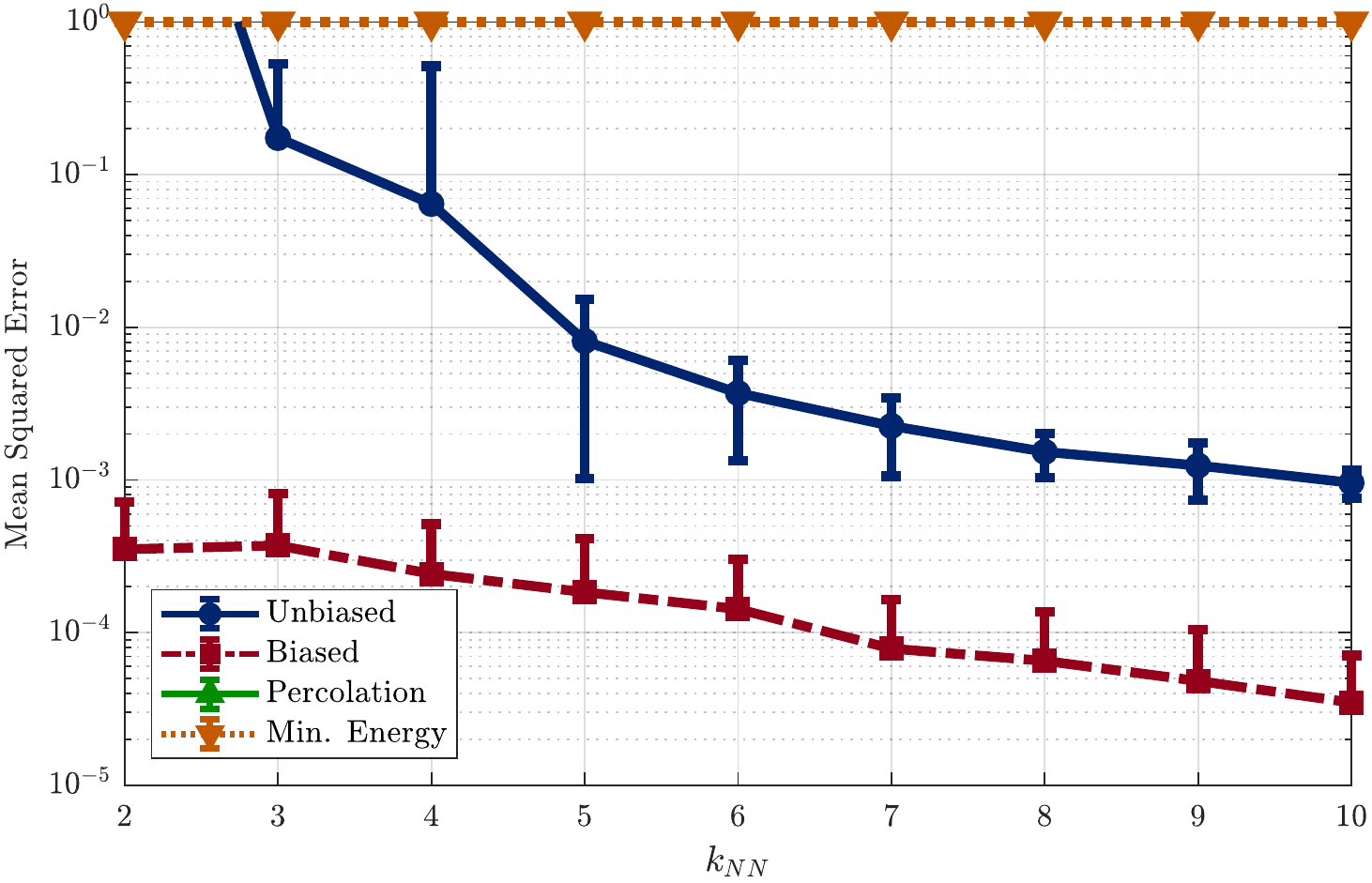}
        \caption{Geometric}
        \label{connectivityGeom}
    \end{subfigure}
    \hfill
    \begin{subfigure}{0.9\columnwidth}
        \centering
        \includegraphics[width=0.99\textwidth]{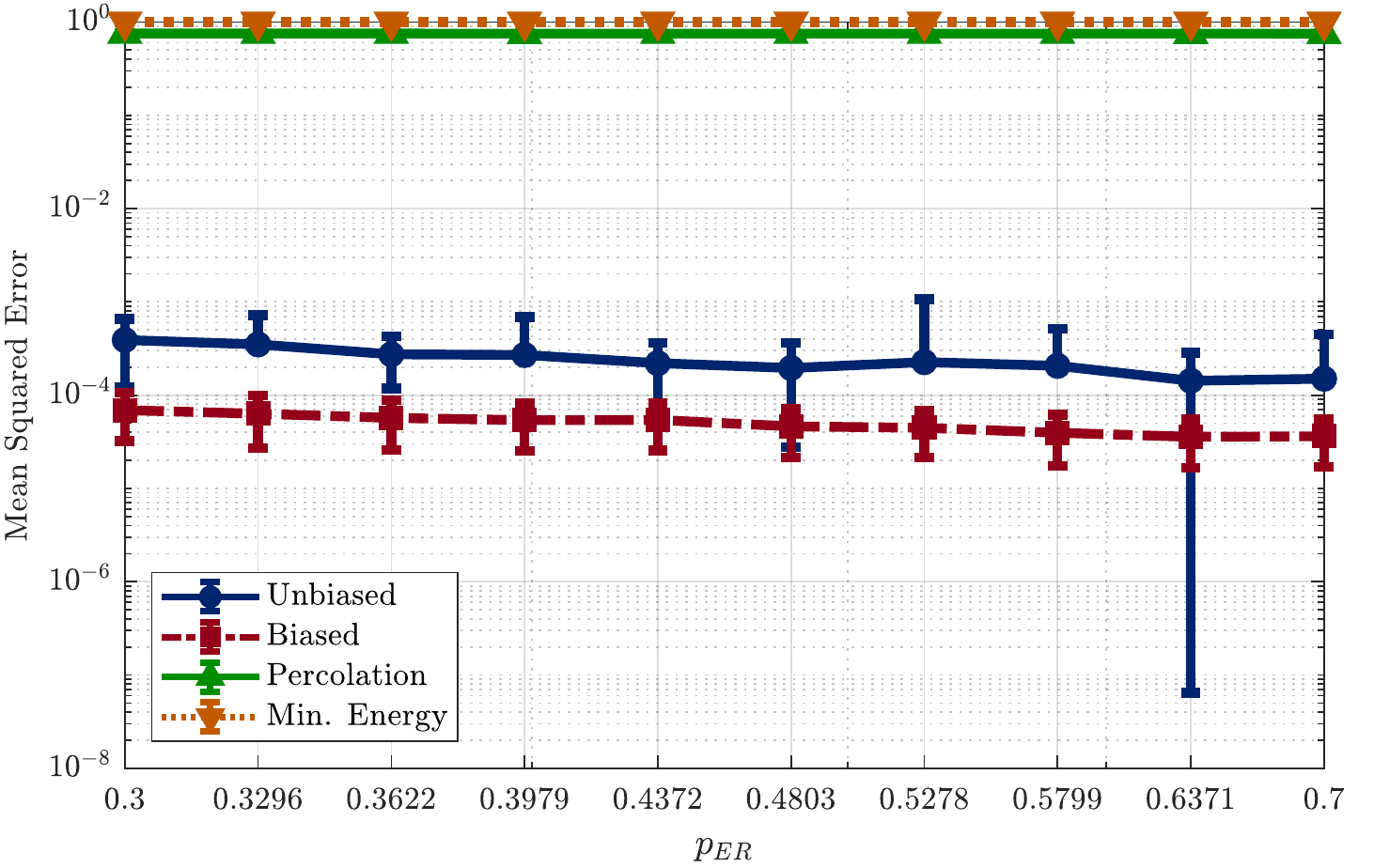}
        \caption{Erd\H{o}s-R\'{e}nyi}
        \label{connectivityER}
    \end{subfigure}
    \caption{Impact of graph connectivity measured by the average graph degree. \subref{connectivityGeom} Geometric graph: note that when the average degree increases, the connectivity is higher, and as such there are more communication paths through which the signal can flow, and thus is less affected by link losses. \subref{connectivityER} ER graph: the increase in connectivity does not lead to noticeable changes in the MSE since these graphs already have a large average degree (around $30$). The error bars indicate $3\times$ the estimated standard deviation from the $500$ graph realizations. We note that the $y$-axis limit has been set to $\MSE=1$ and that the Percolation method yields $\MSE > 1$ and therefore is not shown in \subref{connectivityGeom}.}
    \label{fig:connectivity}
\end{figure*}

\begin{figure*}
    \captionsetup[subfigure]{justification=centering}
    \centering
    \begin{subfigure}{0.9\columnwidth}
        \centering
        \includegraphics[width=0.99\textwidth]{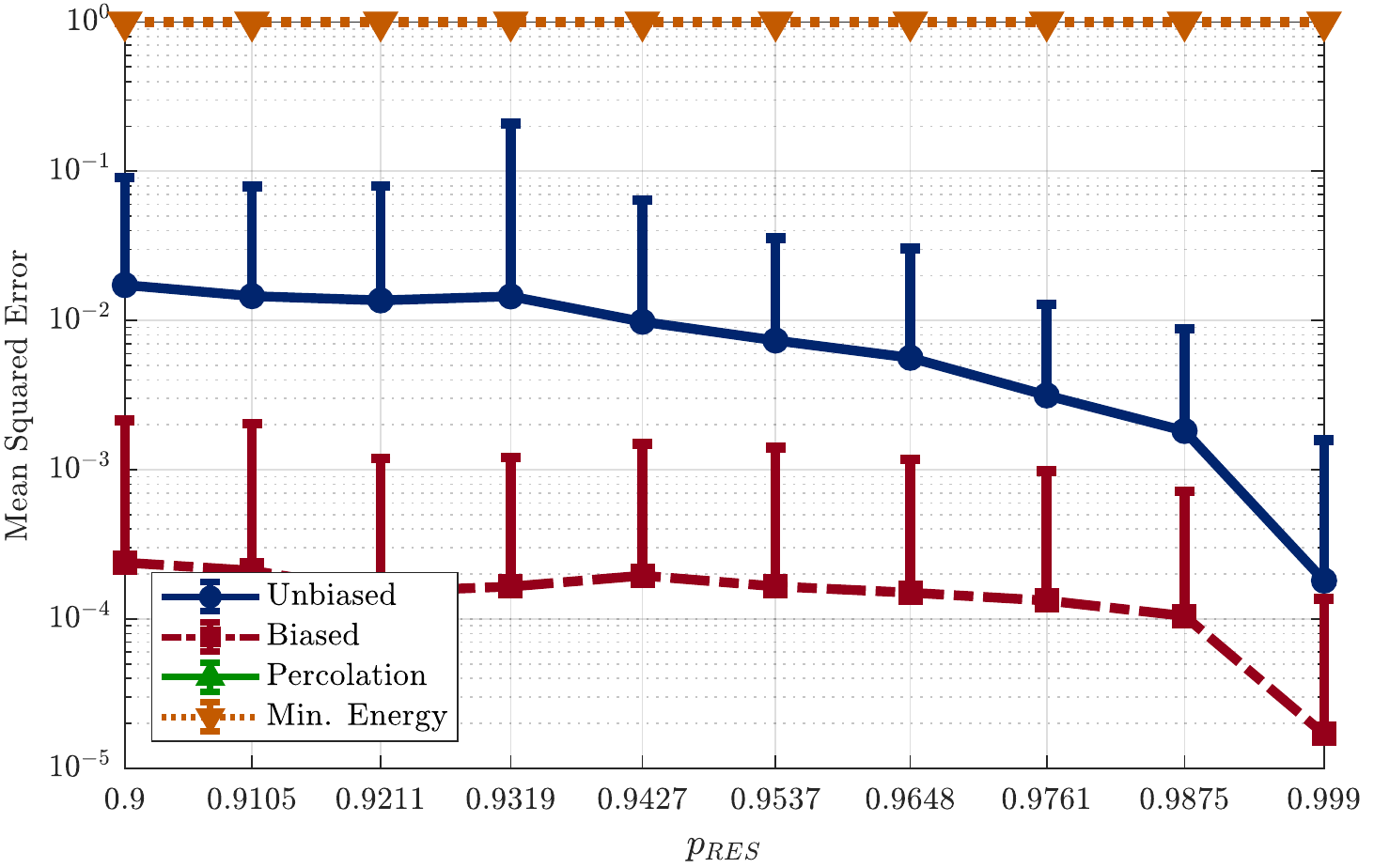}
        \caption{Geometric}
        \label{linkLossGeom}
    \end{subfigure}
    \hfill
    \begin{subfigure}{0.9\columnwidth}
        \centering
        \includegraphics[width=0.99\textwidth]{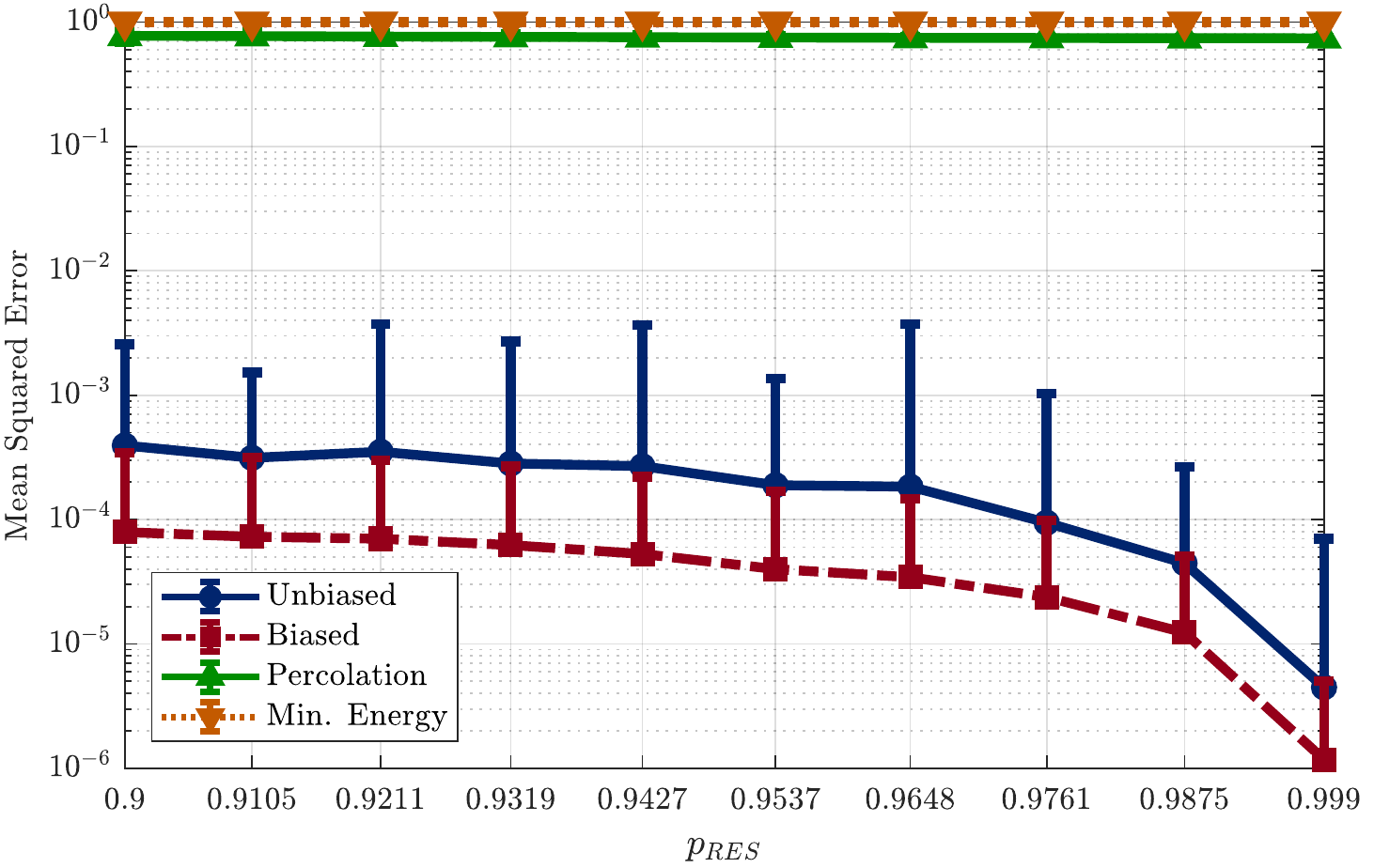}
        \caption{Erd\H{o}s-R\'{e}nyi}
        \label{linkLossER}
    \end{subfigure}
    \caption{Impact of the link loss. \subref{linkLossGeom} Geometric graph. \subref{linkLossER} ER graph. When $p_{\RES}$ increases, fewer links are lost and thus the network is easier to control, leading to a lower  MSE. The error bars indicate $3\times$ the estimated standard deviation from the $500$ graph realizations. We note that the $y$-axis limit has been set to $\MSE=1$ and that the Percolation method yields $\MSE > 1$ and therefore is not shown in \subref{linkLossGeom}.}
    \label{fig:linkLoss}
\end{figure*}

\section{Numerical Experiments} \label{sec_sims}

We evaluate the proposed control strategies on different scenarios to analyze the different trade-off when controlling the network. We compare the \emph{unbiased} minimal energy controller \eqref{eqn_opt_unbiased} and the \emph{biased} controller \eqref{eqn_opt_mse} with the  \emph{Percolation} control strategy of \cite{Segarra16-Percolation} and with the \emph{Min. Energy} approach of \cite{Barbarossa16-Control}. Next, we consider synthetic network models, namely Erd\H{o}s-R\'{e}nyi (ER) graphs \cite{ErdosRenyi59-RandomGraphs} and geometric graphs, while in Section~\ref{subsec_real} we test the methods on real-world social networks, namely on the Zachary's Karate Club \cite{Zachary77-KarateClub} and on a Facebook subnet \cite{McAuley12-EgoNets}.

\subsection{Synthetic network models} \label{subsec_synthetic}

The ER graph forms the edges between any two nodes randomly and independently with probability $p_{\textrm{ER}}$ and has an average degree of $p_{\textrm{ER}}N$. The geometric graph draws nodes uniformly at random in the $[0,1]^{2}$ plane and computes the Euclidean distance $d_{ij}$ between any pair of nodes. We assigned a Gaussian kernel edge weights $w_{ij} = \ccalW(v_{i},v_{j}) = e^{-d_{ij}^{2}}$ and kept only the $k_{\textrm{NN}}$ nearest neighbors per node; the parameter $k_{\textrm{NN}}$ controls the average degree. For both models, we considered realizations that result in connected graphs. To account for the randomness in the generative models and in the edge loss, we averaged the performance over $500$ different underlying graphs where for each of them we accounted also for $5000$ RES realizations.

Unless otherwise specified, we set $N = 100$, $p_{\textrm{ER}} = 0.5$, $k_{\textrm{NN}} = 5$, and the RES link loss probability to $p_{\RES}=0.95$. The control time is $T = 8$, the number of driving nodes is $M = 8$, and the initial state is $\bbx_{0}=\bbzero_{N}$. The target state $\bbx^{\ast}$ has a bandwidth of $K=10$ with GFT coefficients $\tbx_{K}^{\ast}$ decaying linearly as $[\tbx_{K}^{\ast}]_{k} = 1-(k-1)/K$ for $k=1,\ldots,K$. We measured the controllability performance between the bandlimited controlled state and target one through the normalized MSE: $\MSE(T) = \mbE[ \|\bbH \bbx_{T} - \bbx^{\ast}\|^{2}]/\|\bbx^{\ast}\|^{2}$.

\textbf{Time-invariant network.} To set a baseline, we first compared the biased \eqref{eqn_opt_mse} and the unbiased \eqref{eqn_opt_unbiased} controllers with the Percolation \cite{Segarra16-Percolation} and Min. Energy \cite{Barbarossa16-Control} strategies on a fixed time-invariant network. This is the same control scenario that is considered in \cite{Segarra16-Percolation, Barbarossa16-Control} and is equivalent to setting $p_{\RES} = 1$. We report the results for geometric graphs in Figure~\ref{fig:fixed}. Figure~\ref{fixedT} is a parametric simulation as a function of time horizon $T$ and Figure~\ref{fixedM} is a parametric simulation as a function of the number of samples. In general, we observe that the biased estimator has a performance similar to the Min. Energy and slightly better than Percolation. The unbiased controller lags behind in terms of $\MSE$.

\begin{figure*}
    \captionsetup[subfigure]{justification=centering}
    \centering
    \begin{subfigure}{0.9\columnwidth}
        \centering
        \includegraphics[width=0.99\textwidth]{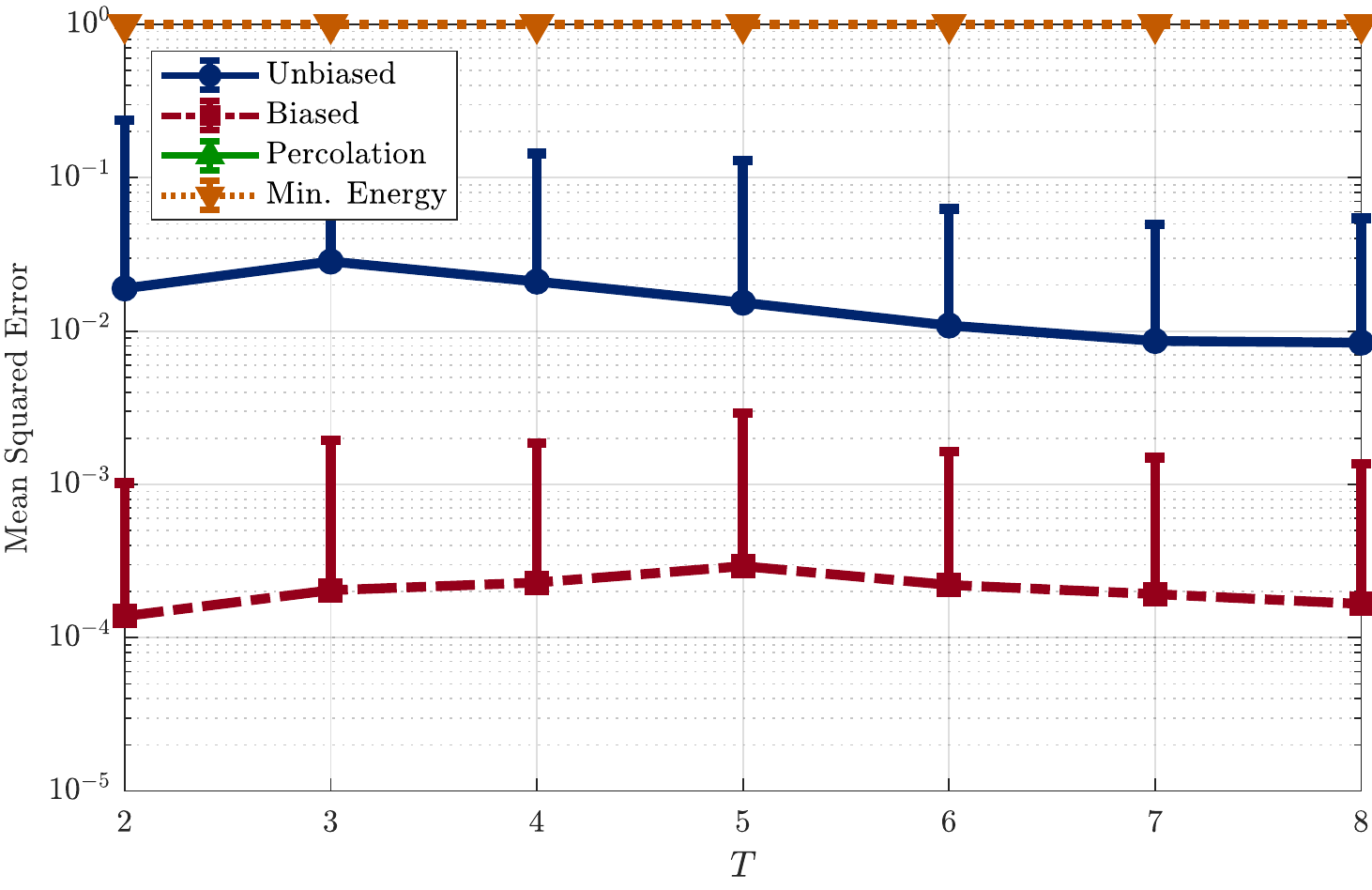}
        \caption{Geometric}
        \label{controlTimeGeom}
    \end{subfigure}
    \hfill
    \begin{subfigure}{0.9\columnwidth}
        \centering
        \includegraphics[width=0.99\textwidth]{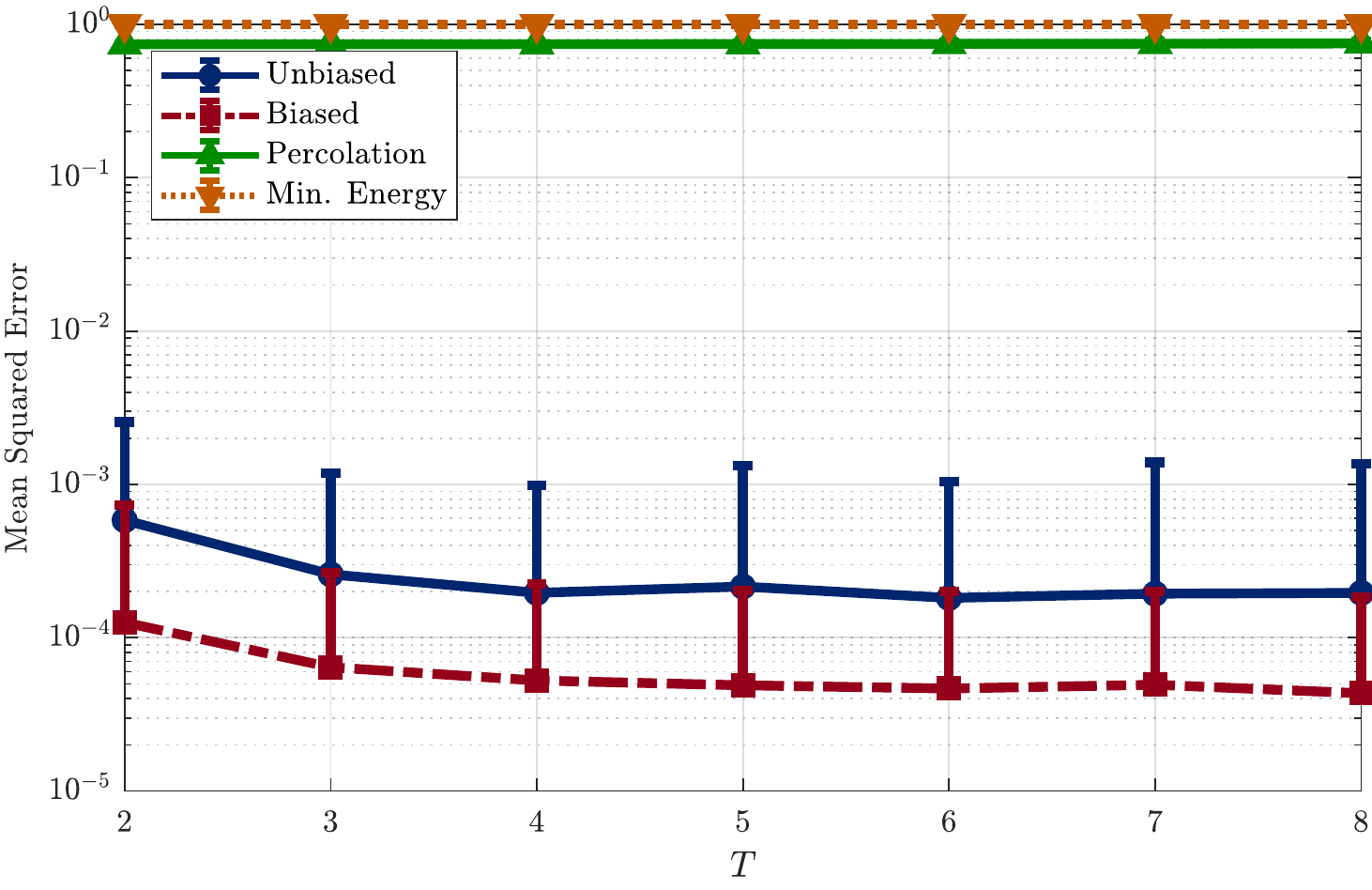}
        \caption{Erd\H{o}s-R\'{e}nyi}
        \label{controlTimeER}
    \end{subfigure}
    \caption{Impact of the control time. \subref{controlTimeGeom} Geometric graph. \subref{controlTimeER} ER graph. For the proposed strategies, an increase in control time yields a slightly lower MSE since there is more time to correct for the network evolution. The error bars indicate $3\times$ the estimated standard deviation from the $500$ graph realizations. We note that the $y$-axis limit has been set to $\MSE=1$ and that the Percolation method yields $\MSE > 1$ and therefore is not shown in \subref{controlTimeGeom}.}
    \label{fig:controlTime}
\end{figure*}

\textbf{Graph connectivity.} In the first random time varying experiment, we studied the impact of the graph connectivity on the controllability performance. We accounted for the graph connectivity by changing the average degrees, i.e., $p_{\textrm{ER}} N$ for the ER graph and $k_{\textrm{NN}}$ in the geometric graph. Figure~\ref{fig:connectivity} shows the MSE as the connectivity increases. From Figure~\ref{connectivityGeom}, we observe that the control on geometric networks improves with the average degree. This is intuitively satisfying since larger degrees lead to a higher connectivity between nodes; hence, it renders them more robust to the RES model. Contrarily, for the ER model in Figure~\ref{connectivityER} this behavior is not as much emphasized. We attribute this phenomenon to the large average degree of the ER graphs (above $30$) and to the relatively high value of $p_{\textrm{RES}}$. That is, the loss of a few edges does not impact the overall ability to control the network.

\textbf{Link loss.} In the second experiment, we analyzed the impact of $p_{\RES}$ for a fixed average degree. From Figure~\ref{fig:linkLoss}, we note that as $p_{\textrm{RES}}$ increases (fewer links are lost) the MSE reduces and leads to an easier to control network. This is because a higher $p_{\RES}$ yields realizations with fewer edge losses, thus more similar to the underlying (mean) graph.

\textbf{Control time.} In the third and last experiment, we analyzed the impact of the control time horizon $T$. From Figure~\ref{fig:controlTime}, we observe that the proposed strategies are not significantly affected by changes in $T$ as they only improve slightly.

From this set of experiments, we make three key observations. First, the proposed strategies offer the best performance. Second, the biased controller achieves the lowest MSE. This is expected since it levers the bias-variance trade-off to minimize the overall MSE at expenses of a bias in the controlled state. Third, not accounting for the graph randomness affects seriously the performance, even for $p_{\text{RES}} = 0.999$. In fact, the deterministic alternatives of Percolation and Min. Energy have a worse performance by orders of magnitude compared with the proposed techniques. This contrast is particularly evident when comparing with the simulations for a time-invariant network in Figure~\ref{fig:fixed}. This could be explained by the fact that losing a link has a huge impact in the topology of the graph and severely affects the eigenbasis, thereby, changing the subspace of signals that are bandlimited on a given graph.

\begin{figure*}
    \captionsetup[subfigure]{justification=centering}
    \centering
    \begin{subfigure}{0.9\columnwidth}
        \centering
        \includegraphics[width=0.99\textwidth]{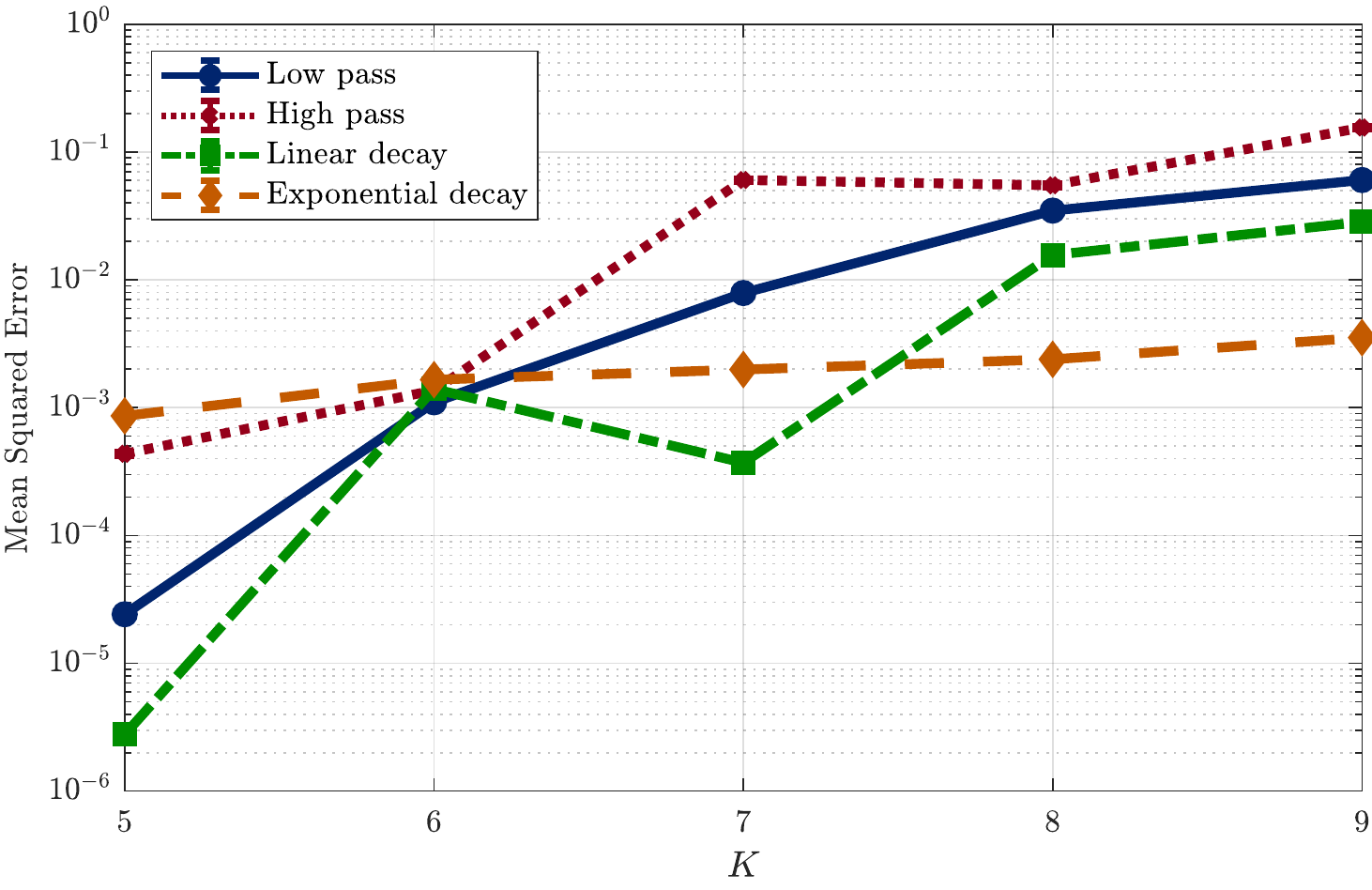}
        \caption{Zachary's Karate Club}
        \label{bandwidthZachary}
    \end{subfigure}
    \hfill
    \begin{subfigure}{0.9\columnwidth}
        \centering
        \includegraphics[width=0.99\textwidth]{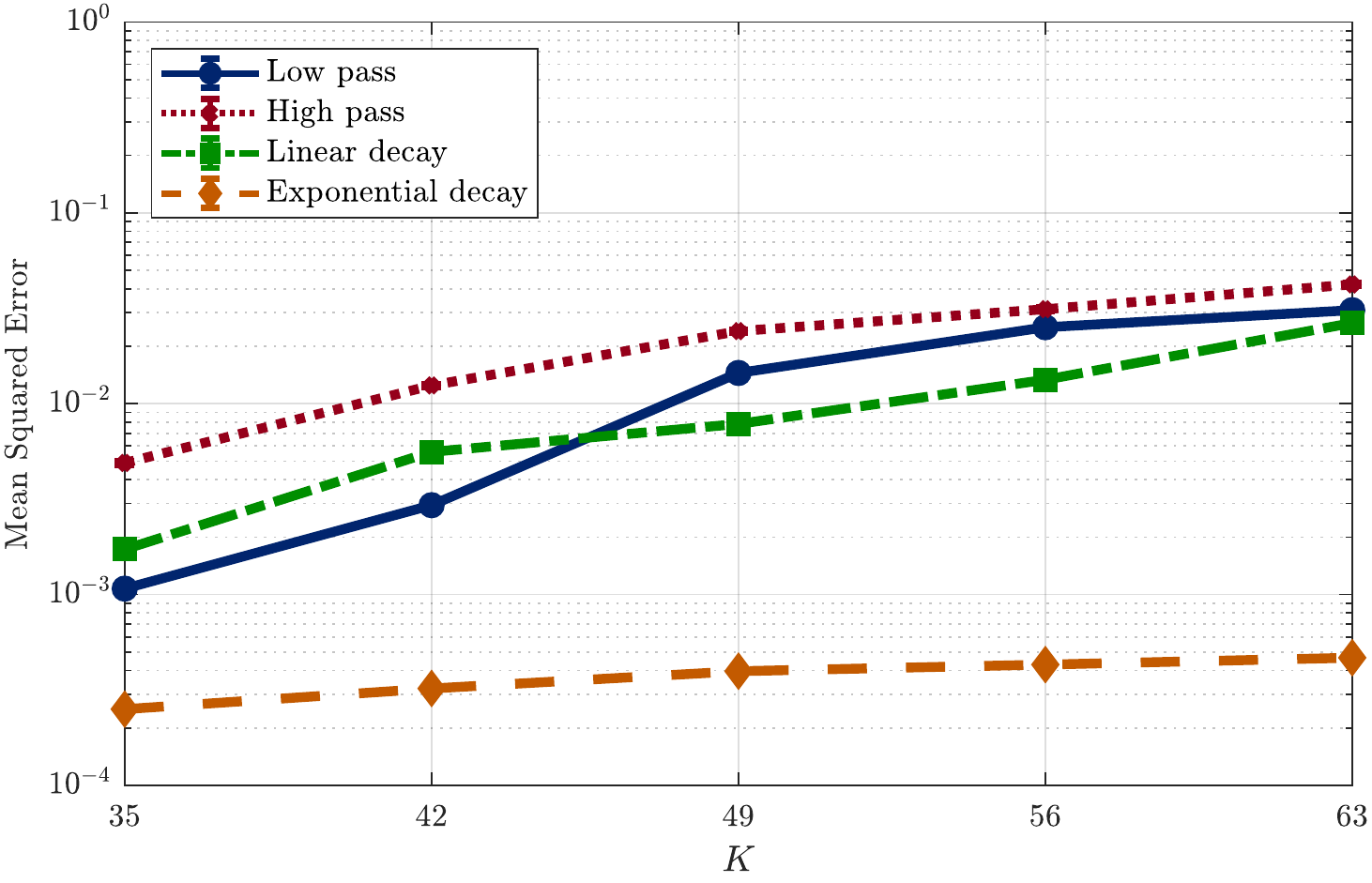}
        \caption{Facebook}
        \label{bandwidthFB}
    \end{subfigure}
    \caption{Impact of the bandwidth and the control signal shape. \subref{bandwidthZachary} Zachary's Karate Club social network. \subref{bandwidthFB} Subnet of Facebook social network. An increased bandwidth ($K$) leads to a higher MSE, since these signals are harder to control for a fixed $M$ and $T$. Higher graph frequency content signals are also harder to control.}
    \label{fig:bandwidth}
\end{figure*}

\begin{figure*}
    \captionsetup[subfigure]{justification=centering}
    \centering
    \begin{subfigure}{0.9\columnwidth}
        \centering
        \includegraphics[width=0.99\textwidth]{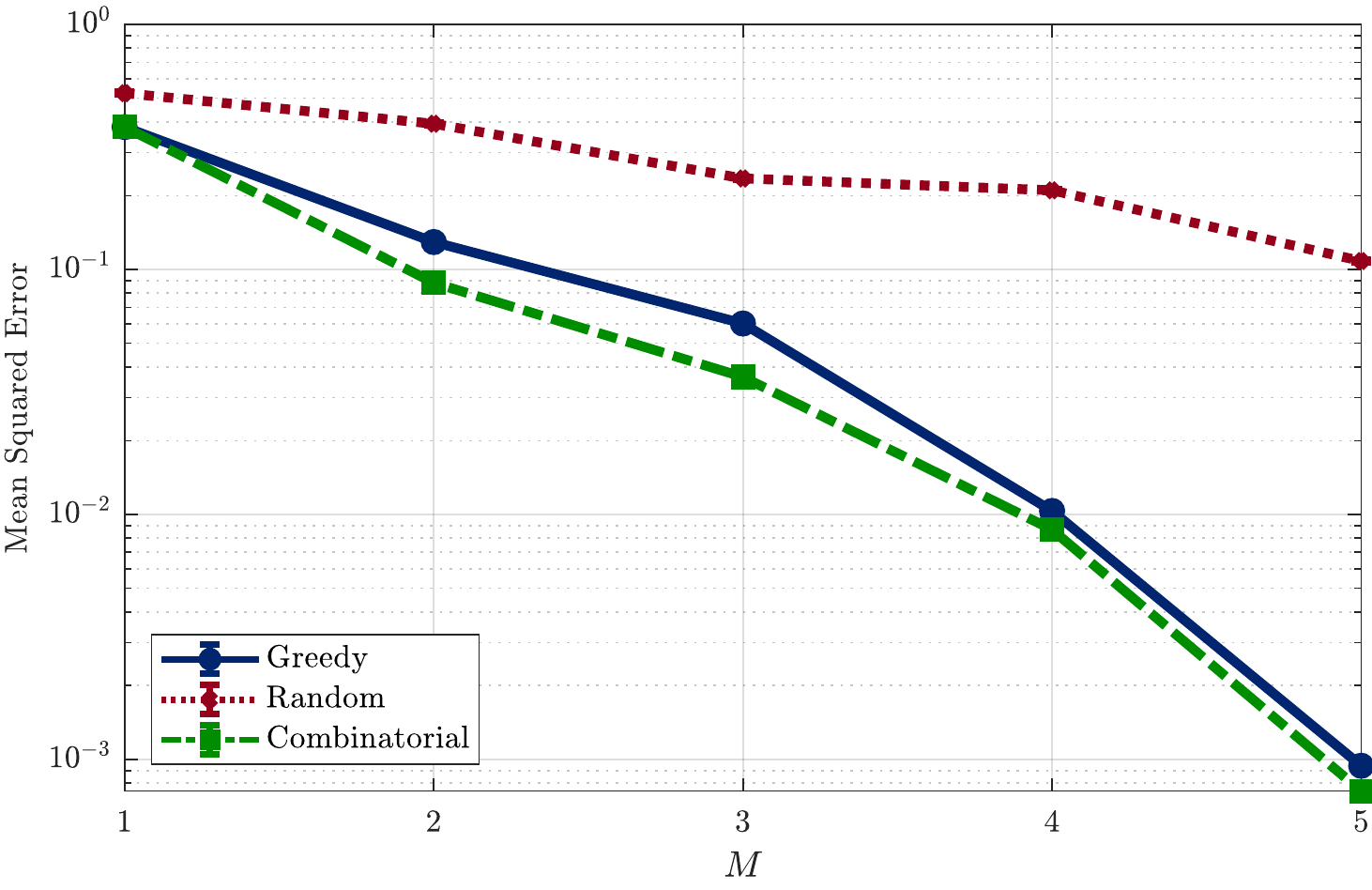}
        \caption{Zachary's Karate Club}
        \label{samplingZachary}
    \end{subfigure}
    \hfill
    \begin{subfigure}{0.9\columnwidth}
        \centering
        \includegraphics[width=0.99\textwidth]{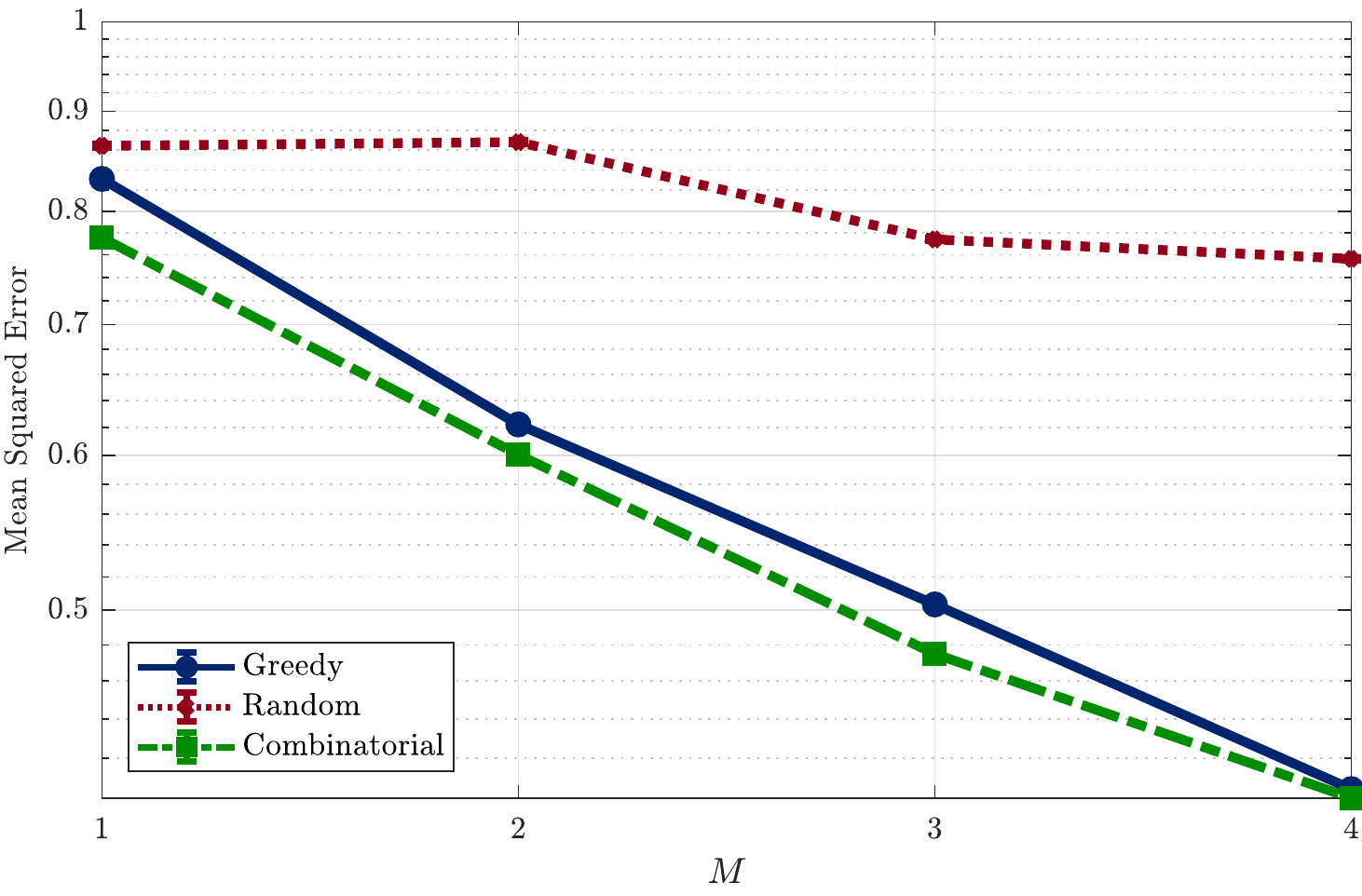}
        \caption{Facebook}
        \label{samplingFB}
    \end{subfigure}
    \caption{Impact of the number of selected nodes and sampling strategy. More control nodes lead to a smaller MSE satisfying our intuition since the degrees of freedom increase. The proposed greedy heuristic is close to the optimal combinatorial solution and represents a substantial improvement compared the random node selection.}
    \label{fig:sampling}
\end{figure*}

\subsection{Real world graphs} \label{subsec_real}

We consider the formation of opinion profiles on two social networks, namely the Zachary's Karate Club \cite{Zachary77-KarateClub} in Figure~\ref{fig:karateClub} of $N=34$ nodes and a Facebook subnetwork \cite{McAuley12-EgoNets} of $N = 234$ nodes. 

We set the control time to $T = 8$, the number of selected nodes to $M = \text{round}(0.08N)$, $p_{\RES} = 0.95$, and $\bbx_{0}=\bbzero_{N}$. For simplicity of presentation, we focus only on the biased controller strategy which has consistently yielded the best performance. Likewise, we do not compare it with the methods in \cite{Segarra16-Percolation, Barbarossa16-Control} given their poor performance on random time varying graphs. We again averaged the performance over $5000$ different $\RES$ realizations.

\textbf{Bandwidth and spectrum of the desired state.} In this experiment, we analyzed the impact that the desired state bandwidth and its GFT have on the controllability performance. We considered four different GFTs for the desired state, namely: (i) a step low-pass $[\tbx_{K}^{\ast}]_{k}=1$ for $k=1,\ldots,K$; (ii) a step high-pass, where the active frequencies correspond to the $K$ eigenvectors with highest total variation; (iii) a linear decay response given by $[\tbx_{K}^{\ast}]_{k} = 1 - (k-1)/K$ for $k=1,\ldots,K$; and (iv) an exponential decay response with $[\tbx_{K}^{\ast}]_{k} = e^{1-k}$ for $k=1,\ldots,K$. For a fair comparison, we normalized all desired states to unit energy and analyzed different values of $K$; a fraction of $N$ between $0.15$ and $0.27$. 

The results are depicted in Figure~\ref{fig:bandwidth}. First, we observe that controlling the system to a higher bandwidth state is harder since the set of graph frequencies to guarantee controllability increases. Second, we observe that the high-pass response is harder to achieve and responses that decay to zero (like the linear decay and the exponential decay) yield lower MSE. This is because high-pass responses are translated in the vertex domain as states having dissimilar values in adjacent nodes. They render the control of network dynamics to such states more challenging. Hence, we conclude that it is easier to drive the network state to a signal that varies smoothly on the nodes compared with a signal that has highly different values in connecting vertices; e.g., it is easier to convince someone to vote a conservative candidate, if she is surrounded by members that have the same political inclination.

\textbf{Sampling heuristics.} In the last experiment, we focused on the impact of the control nodes. We compared the proposed constrained greedy selecting heuristic in Algorithm \ref{algm_greedy} with the optimal combinatorial solution and a uniformly random sampling scheme. We fixed $K = 10$ and considered the linear decay desired state $\tbx_{K}^{\ast}$ such that $[\tbx_{K}^{\ast}]_{k} = 1 - (k-1)/K$ for $k=1,\ldots,K$. The obtained results are shown in Figure~\ref{fig:sampling}. We observe that the MSE decreases as more control nodes are selected. We also observe that the greedy heuristic yields a performance similar to the optimal solution and represents a considerable improvement over random selection.


\section{Conclusions} \label{sec_conclusions}

In this paper, we studied the problem of controlling network states. We considered a random time varying network to be driven to a desired bandlimited state. To cope with the randomness in the underlying support, we introduced the concept of controllability in the mean, where we postulated to control the system as if it were running on the expected graph. We then carried out a detailed mean squared error analysis to quantify the deviation of the target signal, when the control is designed for the expected graph but ran on any given random network realization. We used this analysis to propose two different control strategies and evaluated their performance on both synthetic graph models and real-world social networks. We concluded that it is of paramount importance to take into account the random nature of the underlying topology. We leave as future work the analysis of more complex random network models, other parsimonious graph signal models, and other control strategies that involve spectral or energetic constraints. Another direction worth investigating is the proposal of other heuristic solutions to the respective optimization problems.



\appendices


\section{Proof of Lemma~\ref{l_valid_models}.}

\begin{proof} For model $(i)$, $\bbS = \bbL$ and the system transition matrix is $\bbA_{t} = \bbI - \epsilon \bbL_{t}$ for $0 < \epsilon \leq 1/\|\bbL\|_{2}$. First, we prove that $\| \bbL_{t} \|_{2} \leq \| \bbL \|_{2} \le \varrho$. Note that $\ccalG_{t} \subseteq \ccalG$ for every $t$ and, therefore, from the Laplacian interlacing property \cite{Chen04-Interlacing} this condition always holds. The proof of Assumption 2 is straightforward, i.e., from $\mbE [ \bbA_{t} ] = \bbI - \epsilon \mbE[\bbL_{t}] = \bbI - \epsilon p \bbL$, which means that $\mbE [ \bbA_{t} ]$ and $\bbL$ share the same eigenvectors. For the last condition, note that $\|\bbA_{t}\|_{2} = \| \bbI - \epsilon \bbL_{t} \|_{2} \leq 1$ since $\epsilon \leq 1/\|\bbL\|_{2} \leq 1/\|\bbL_{t}\|_{2}$. Therefore, $\|\bbA_{t}\|_{2}$ is upper bounded by some finite $\varrho$.

For model $(ii)$, $\bbS = \bbW$ and the system transition matrix is $\bbA_{t} = \bbW_{t}$. To prove that $\|\bbW_t\|_2 \le \|\bbW\|_2$, recall that for connected graphs, the largest eigenvalue is positive and real {\cite[Theorem 0.2]{Cvetkovic79-SpectraGraphs}}. Then, since $\bbW$ is considered to be normal and Assumption \ref{ass_gso} holds, $\|\bbW\|_{2} = \lambda_{\max}(\bbW) \leq \max \deg(\ccalG)\le \varrho < \infty$. Likewise, since $\ccalG_{t} \subseteq\ccalG$, then $\max \deg(\ccalG_{t}) \leq \max \deg (\ccalG) < \infty$ and therefore $\| \bbW_{t} \|_{2}\le\varrho < \infty$ for all $t$. The proofs of the last two conditions are straightforward since $\mbE[\bbA_{t}] = p\bbW$ and $\|\bbA_{t}\|_2 = \|\bbW_{t}\|_{2} \le \varrho < \infty$. This completes the proof. \end{proof}


\section{Proof of Proposition~\ref{prop.detSparCont} and Corollary~\ref{cor_suff}}

\begin{proof}[Proof of Proposition~\ref{prop.detSparCont}]
Recall that $\ccalS$ is the set of the selected $M$ nodes and that $\bbC^\Tr\bbC = \diag(\bbc)$, where $\bbc\in\{0,1\}^N$ with $[\bbc]_{i} = 1$ if $v_{i} \in \ccalS$ and $[\bbc]_{i} = 0$, otherwise. System \eqref{eqn_control_sys_freq_k} is equivalent to
\begin{equation}\label{eq.equivSysDet}
	\tbx_{t,K} = \tbA_K \tbx_{t-1,K}  + \bbV_{K}^{\Hr} \diag(\bbc) \hbu_{t-1}
\end{equation}
where $\hbu_{t} \in \reals^{N \times 1}$ denotes the zero-extended control signal such that $[\hbu_{t}]_{i} =  [\bbu_{t}]_{i}$ if $v_{i} \in \ccalS$ and $[\hbu_{t}]_{i} = 0$, otherwise. Then, system \eqref{eq.equivSysDet} is controllable iff the $K\times T N$ matrix
\begin{align}\label{eq.cont_matDet}
	\tbOmega 
		&= [\bbV_K^\Hr\diag(\bbc), \tbA_K\bbV_K^\Hr\diag(\bbc), \ldots, \tbA_K^{T-1}\bbV_K^\Hr\diag(\bbc)]
			\nonumber \\
	&=[\bbI_K, \tbA_K, \ldots, \tbA_K^{T-1}] (\bbI_{T}\otimes\bbV_K^\Hr\diag(\bbc))
\end{align}
is full rank. Observe that 
\begin{equation}\label{eq.rnk_cond}
\begin{split}
	\rank(\tbOmega) 
		&\le \min\{K, \rank(\bbI_{T}\otimes\bbV_K^\Hr\diag(\bbc))\}
			\\
		&\le \min\{K, T\ \min\{K,M\}\}
\end{split}
\end{equation}
holds from $\rank(\bbA\bbB) \le \min\{\rank(\bbA), \rank(\bbB)\}$. Therefore, to ensure the full rank $K$ of $\tbOmega$, $M \ge \lceil K/ T\rceil$ must hold, for some $T \ge 1$. This concludes the proof.
\end{proof}

\begin{proof}[Proof of Corollary~\ref{cor_suff}]
Recall that, for two matrices $\bbX \in \reals^{M \times N}$ and $\bbY \in \reals^{N \times K}$ {\cite[Section 0.4.6]{HornJohnson85-MatrixAnalysis}}
\begin{equation} \label{eqn_wikirank}
\text{if}~~\rank(\bbY) = N \Rightarrow \rank(\bbX \bbY) = \rank(\bbX).
\end{equation}
The mean system \eqref{eqn_control_sys_freq_k_mean} is controllable, iff
\begin{equation} \label{eqn_controllability_cor}
\tbOmega = [\bbI_{K}, \tbA_{K},\ldots,\tbA_{K}^{T-1}] \ \left(\bbI_{T} \otimes \bbV_{K}^{\Hr} \diag(\bbc) \right)
\end{equation}
has rank $K$ with $\tbA_{K}=\diag(\barba_{K})$.

The first term in \eqref{eqn_controllability_cor} has rank
\begin{equation}
\begin{split}
	& \bbX 
	= [\bbI_{K}, \tbA_{K},\ldots,\tbA_{K}^{T-1}] \in \reals^{K \times T K} \\
	& \rank(\bbX) = K
\end{split}
\end{equation}
while the second term has rank
\begin{equation}
\begin{split}
	& \bbY = (\bbI_{T} \otimes \bbV_{K}^{\Hr} \diag(\bbc) ) \in \reals^{T K \times T N} \\
	& \rank(\bbY) = T \ \rank(\bbV_{K}^{\Hr} \diag(\bbc))
\end{split}
\end{equation}
since $\bbY$ consists of the Kronecker product of $\bbV_{K}^{\Hr}\diag(\bbc)$ with an identity matrix. Note that $\bbV_{K}^{\Hr} \diag(\bbc)$ selects indeed rows of $\bbV_{K}$.

Now, if $M \geq K$ and the node set $\ccalS$ are such that the selected $M$ rows of $\bbV_{K}$ form a set of $K$ linearly independent vectors, then $\rank(\bbV_{K}^{\Hr} \diag(\bbc)) = K$. This implies that $\rank(\bbY) = T K$ and in virtue of \eqref{eqn_wikirank}, we obtain
\begin{equation}
\rank(\tbOmega) = \rank(\bbX \bbY) = \rank(\bbX) = K
\end{equation}
yielding that the mean system \eqref{eqn_control_sys_freq_k_mean} is controllable.
\end{proof}


\section{Proof of Theorem~\ref{thm_mse}}

\begin{proof} The MSE can be rewritten as
\begin{equation}\label{eqn_mse_terms}
\begin{split}
& \MSE(T)
	= \mbE\left[ \| \bbH \bbx_{T} - \bbx^{\ast} \|_{2} \right] 
		\\
	& \quad = \mbE \left[ \bbx_{T}^{\Tr} \bbH^{\Tr} \bbH \bbx_{T} \right] 
	 - 2 (\bbx^{\ast})^{\Tr} \bbH \mbE[\bbx_{T}] + \left\|\bbx^{\ast}\right\|_{2}^{2}.
\end{split}
\end{equation}
where each term is computed next.

First, to compute $\mbE[\bbx_{T}]$ and $\mbE[\bbx_{T}^{\Tr} \bbH^{\Tr} \bbH \bbx_{t}]$, note that $\bbx_{T}$ can be written as
\begin{equation} \label{eqn_xt}
	\bbx_{T} = 
		\sum_{\tau=0}^{T-1} \bbPhi_{T-1,\tau+1} \bbB \bbu_{\tau}
\end{equation}
where $\bbPhi_{b,a}=\bbA_{b} \bbA_{b-1} \cdots \bbA_{a+1} \bbA_{a}$ is the state transition matrix in the interval $[a, b]$ for $b > a$. Since under the $\RES(p)$ model $\bbA_{t}$ are i.i.d. matrices, $\mbE [ \bbPhi_{b,a}] = \barbA^{b-a+1}$. Thus, the expectation of \eqref{eqn_xt} is
\begin{equation} \label{eqn_mean_xt}
	\bbmu_{T} = \mbE[\bbx_{T}]  
		= \sum_{\tau=0}^{T-1} \barbA^{T-\tau-1} \bbC^{\Tr} \bbu_{\tau}.
\end{equation}

For the second order moment $\mbE[\bbx_{T}^{\Tr}\bbH^{\Tr}\bbH\bbx_{T}]$, denote by $\bbQ = \bbH^{\Tr} \bbH$ and by substituting \eqref{eqn_xt} we have
\begin{equation} \label{eqn_mse_second_moment_nogamma}
\begin{split}
& \mbE \left[ \bbx_{T}^{\Tr} \bbH^{\Tr} \bbH \bbx_{T} \right] = \mbE \left[ \bbx_{T}^{\Tr} \bbQ \bbx_{T} \right]
	 \\
& \ = \sum_{\tau=0}^{T-1} \sum_{\tau'=0}^{T-1} \bbu_{\tau}^{\Tr} \bbC \ \mbE \left[ \bbPhi_{T-1,\tau+1}^{\Tr} \bbQ \bbPhi_{T-1,\tau'+1} \right] \ \bbC^{\Tr} \bbu_{\tau}.
\end{split}
\end{equation}
Define $\bbGamma_{\tau,\tau'} = \mbE[ \bbPhi_{t-1,\tau+1} \bbQ \bbPhi_{t-1,\tau'+1}] \in \reals^{N \times N}$ [cf. \eqref{eqn_mse_t}], so that \eqref{eqn_mse_second_moment_nogamma} can be compactly written as
\begin{equation} \label{eqn_mse_second_moment}
\mbE \left[ \bbx_{T}^{\Tr} \bbH^{\Tr} \bbH \bbx_{T} \right]  
	=  \sum_{\tau=0}^{T-1} \sum_{\tau'=0}^{T-1} 
		\bbu_{\tau}^{\Tr} \bbC \ \bbGamma_{\tau,\tau'} \ \bbC^{\Tr} \bbu_{\tau}.
\end{equation}
Finally, by substituting \eqref{eqn_mse_second_moment} in the first term of the MSE \eqref{eqn_mse_terms} and \eqref{eqn_mean_xt} in the second term, we obtain the claimed expressions. This completes the proof.\end{proof}


\section{Special Cases: Useful Computations of the Quadratic Term $\Gamma_{\tau,\tau'}$ in Theorem~\ref{thm_mse}}

Computation of the quadratic term $\Gamma_{\tau,\tau'}$ in Theorem~\ref{thm_mse} can turn out to be quite cumbersome for arbitrary graph shift operators $\bbS_{t}$ or transition matrices $\bbA_{t}$. In what follows, we offer two corollaries of Theorem~\ref{thm_mse} that address this issue. In particular, Corollary~\ref{cor_laplacian} gives an upper bound on the MSE that does not entail computation of second-order moments, while in Corollary~\ref{cor_adjacency} we show that, for the usually found case of undirected graphs, diffusion models in Lemma~\ref{l_valid_models} admit an exact computation. Proofs follow after the statement of the corollaries.

%
\begin{corollary} \label{cor_laplacian}
    Under the same conditions of Theorem~\ref{thm_mse} and from Lemma~\ref{l_valid_models}, the MSE \eqref{eqn_mse_t} can be upper bounded by
    \begin{align}
    \MSE(T) \leq & \| \bbx^{\ast} \|_{2}^{2} - 2\sum_{\tau=0}^{T-1} (\bbx^{\ast})^{\Tr} \bbH \barbA^{T-\tau-1} \bbC^{\Tr} \bbu_{\tau} 
    \nonumber \\
    & + \sum_{\tau=0}^{T-1} \sum_{\tau'=0}^{T-1} \varrho^{2(T-\tau'+1)} \langle \bbu_{\tau'}, \bbu_{\tau} \rangle.
    \label{eqn_mse_bound}
    \end{align}
\end{corollary}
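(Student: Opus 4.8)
\textbf{Proof proposal for Corollary~\ref{cor_laplacian}.} The plan is to keep the constant and linear parts of \eqref{eqn_mse_t} exactly and to replace only the quadratic form by a spectral-norm estimate. The constant term is already $\alpha=\|\bbx^{\ast}\|_2^2$, and transposing $\bbbeta_\tau=(\barbA^{T-\tau-1})^{\Tr}\bbH^{\Tr}\bbx^{\ast}$ rewrites $-2\sum_\tau\bbbeta_\tau^{\Tr}\bbC^{\Tr}\bbu_\tau$ as $-2\sum_\tau(\bbx^{\ast})^{\Tr}\bbH\barbA^{T-\tau-1}\bbC^{\Tr}\bbu_\tau$, which is exactly the first two terms of \eqref{eqn_mse_bound}. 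Hence the task reduces to upper bounding the quadratic term $\sum_{\tau,\tau'}\bbu_\tau^{\Tr}\bbC\,\bbGamma_{\tau,\tau'}\,\bbC^{\Tr}\bbu_{\tau'}=\mbE[\|\bbH\bbx_T\|_2^2]$ by $\sum_{\tau,\tau'}\varrho^{2(T-\tau'+1)}\langle\bbu_{\tau'},\bbu_\tau\rangle$.

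For this I would first use that $\bbH=\bbV_K\bbV_K^{\Hr}$ is an orthogonal projector, so $\|\bbH\|_2=1$ and therefore $\mbE[\|\bbH\bbx_T\|_2^2]\le\mbE[\|\bbx_T\|_2^2]$; then expand $\bbx_T=\sum_{\tau=0}^{T-1}\bbPhi_{T-1,\tau+1}\bbC^{\Tr}\bbu_\tau$ as in \eqref{eqn_xt}, so that $\mbE[\|\bbx_T\|_2^2]=\sum_{\tau,\tau'}\mbE\big[(\bbPhi_{T-1,\tau+1}\bbC^{\Tr}\bbu_\tau)^{\Tr}(\bbPhi_{T-1,\tau'+1}\bbC^{\Tr}\bbu_{\tau'})\big]$, and bound a single summand using four ingredients: $\bbC\bbC^{\Tr}=\bbI_M$ (so $\|\bbC^{\Tr}\bbu_\tau\|_2=\|\bbu_\tau\|_2$); the almost-sure bound $\|\bbA_t\|_2\le\varrho$ from Lemma~\ref{l_valid_models}, which by submultiplicativity gives $\|\bbPhi_{b,a}\|_2\le\varrho^{\,b-a+1}$ factorwise; $\|\barbA\|_2\le\varrho$; and, for $\tau\le\tau'$, the splitting $\bbPhi_{T-1,\tau+1}=\bbPhi_{T-1,\tau'+1}\bbPhi_{\tau',\tau+1}$ with $\bbPhi_{\tau',\tau+1}$ independent of $\bbPhi_{T-1,\tau'+1}$ under the $\RES(p)$ model, so that the ``private'' block averages to $\barbA^{\tau'-\tau}$. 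Two applications of Cauchy--Schwarz — one for the Euclidean inner product on $\reals^N$, one for the remaining expectation — then collapse each summand into a power of $\varrho$ times $\langle\bbu_{\tau'},\bbu_\tau\rangle$, and summing over $(\tau,\tau')$ yields \eqref{eqn_mse_bound}. Note that no controllability of the system is used: the bound holds for any $\bbC\in\ccalC_{M,N}$ and any control sequence $\{\bbu_\tau\}$.

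The step I expect to be the main obstacle is the bookkeeping in a cross term with $\tau\ne\tau'$: the two state-transition matrices $\bbPhi_{T-1,\tau+1}$ and $\bbPhi_{T-1,\tau'+1}$ are correlated because they share the factors $\bbA_{\max(\tau,\tau')+1},\dots,\bbA_{T-1}$, so one cannot simply multiply norms of independent pieces. The common block must be isolated and its second moment controlled, and only the remaining independent block is replaced by its mean $\barbA$; keeping track of how the powers of $\varrho$ accumulate so that they combine into the stated exponent $2(T-\tau'+1)$ is where the care lies. Everything else is a routine chaining of submultiplicativity of $\|\cdot\|_2$ and Cauchy--Schwarz.
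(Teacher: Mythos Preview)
Your treatment of the constant and linear terms is correct and matches the paper. The divergence is in the quadratic term, and there your Cauchy--Schwarz step does not land where you claim. Cauchy--Schwarz on $\reals^N$ gives $|\langle\bbPhi\,\bbC^\Tr\bbu_\tau,\bbPhi'\,\bbC^\Tr\bbu_{\tau'}\rangle|\le\|\bbPhi\,\bbC^\Tr\bbu_\tau\|_2\,\|\bbPhi'\,\bbC^\Tr\bbu_{\tau'}\|_2$, and a second Cauchy--Schwarz on the expectation again produces a product of norms. After the $\varrho$-bookkeeping you obtain a power of $\varrho$ times $\|\bbu_\tau\|_2\|\bbu_{\tau'}\|_2$, \emph{not} times $\langle\bbu_{\tau'},\bbu_\tau\rangle$. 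An inequality of the form ``$\le c\,\langle\bbu_{\tau'},\bbu_\tau\rangle$'' on an individual cross summand cannot be extracted from Cauchy--Schwarz, since the inner product may be negative. So as written your route does not reach \eqref{eqn_mse_bound}; at best it yields the weaker bound with $\|\bbu_\tau\|_2\|\bbu_{\tau'}\|_2$ in place of $\langle\bbu_{\tau'},\bbu_\tau\rangle$. The independence/conditioning machinery you set up is therefore doing work that cannot be cashed in at the last step.

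The paper's argument is both simpler and structurally different: it writes the quadratic term as a trace, $\mbE[\tr(\bbH\bbx_T\bbx_T^\Tr\bbH^\Tr)]$, expands with \eqref{eqn_xt}, and applies termwise the inequality $\tr(\bbX\bbY)\le\|\bbX\|_2\,\tr(\bbY)$ for positive semidefinite $\bbY$ with $\bbX=\bbPhi_{T-1,\tau'+1}^\Tr\bbH^\Tr\bbH\bbPhi_{T-1,\tau+1}$ and $\bbY=\bbC^\Tr\bbu_\tau\bbu_{\tau'}^\Tr\bbC$. Because $\|\bbA_t\|_2\le\varrho$ holds almost surely, $\|\bbX\|_2$ is bounded by a deterministic power of $\varrho$ and the expectation is trivial: no independence splitting, no conditioning on common blocks. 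The inner product in \eqref{eqn_mse_bound} then appears directly as $\tr(\bbY)=\tr(\bbC\bbC^\Tr\bbu_\tau\bbu_{\tau'}^\Tr)=\langle\bbu_{\tau'},\bbu_\tau\rangle$. It is precisely this trace device that preserves the rank-one outer product $\bbu_\tau\bbu_{\tau'}^\Tr$ and delivers the bilinear $\langle\bbu_{\tau'},\bbu_\tau\rangle$; your Cauchy--Schwarz route destroys exactly that structure.
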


The result of Corollary~\ref{cor_laplacian} can be interpreted as the worst case scenario to account for the variability in the topology. In fact, \eqref{eqn_mse_bound} shows only first order dependence from the $\RES(p)$ model, but it does not show dependence from the second order moment. For the models in Lemma~\ref{l_valid_models}, we can consider for model $(i)$ $\varrho = 1$ ($\barbA = \bbI - \epsilon p \bbL$), while for model $(ii)$ $\varrho = \max \deg(\ccalG)$ ($\barbA = p \bbW$). Further insight then on the role of the graph variability is given by Corollary~\ref{cor_adjacency}, which shows the explicit dependence on the link activation probability $p$.

%
\begin{corollary} \label{cor_adjacency}
    Under the same conditions of Theorem~\ref{thm_mse} and additionally given that $\ccalG$ is an undirected graph, for the diffusion models in Lemma~\ref{l_valid_models}, the following holds:
    \begin{equation} \label{eqn_gamma_ab}
    \begin{split}
    & \bbGamma_{\tau,\tau'} 
    = (\barbA^{\tau'-\tau})^{\Tr} \bbQ_{T-\tau'-1}
    \\
    & \bbQ_{a} = \mbE \left[ \bbA_{T-a}^{\Tr} \bbQ_{a-1} \bbA_{T-a} \right], \ a = 1,\ldots,T-\tau'-1
    \end{split}
    \end{equation}
    for $\tau \leq \tau'$, $\bbQ_{0} = \bbH^{\Tr}\bbH$, and $\barbA^{\tau'-\tau} = (\mbE[\bbA_{t}])^{\tau'-\tau}$. 
    
    For model $(i)$ in Lemma~\ref{l_valid_models} ($\barbA = \bbI - \epsilon p \bbL$), $\bbQ_{a}$ is
    \begin{align}
    & \bbQ_{a} 
    = \epsilon^{2} p^{2} \bbW^{\Tr} \bbQ_{a-1} \bbW
    \nonumber \\
    & \quad +\epsilon^{2} p(1-p) \nonumber \\
    & \qquad \qquad \cdot \left( \bbW^{\Tr} \circ \bbQ_{a-1} \circ \bbW  - \diag (\bbW^{\Tr} \circ \bbQ_{a - 1} \circ \bbW )\right)
    \nonumber \\
    & \quad + \epsilon^{2} p(1-p) \diag(\bbW^{\Tr} \diag(\bbQ_{a-1}) \bbW)
    \nonumber \\
    & \quad + 2 \epsilon p (\bbI - \epsilon p \bbD) \diag( \bbQ_{a-1} \bbW) \nonumber \\
    & \quad + \big( (\bbI - \epsilon p \bbD)^{2} + \epsilon^{2} p (1-p) \bbW^{\Tr} \bbW \big) \circ \diag(\bbQ_{a-1}).
    \label{eqn_recursive_Q_lap}
    \end{align}

    For model $(ii)$ in Lemma~\ref{l_valid_models} ($\barbA = p \bbW$), $\bbQ_{a}$ is
    \begin{align}
    & \bbQ_{a} 
    = p^{2} \bbW^{\Tr} \bbQ_{a-1} \bbW
    \nonumber \\
    & \quad  + p(1-p) \left( \bbW^{\Tr} \circ \bbQ_{a-1} \circ \bbW  - \diag (\bbW^{\Tr} \circ \bbQ_{a - 1} \circ \bbW )\right)
    \nonumber \\
    & \quad + p(1-p) \ \diag( \bbW^{\Tr} \diag(\bbQ_{a-1}) \bbW).
    \label{eqn_recursive_Q}
    \end{align}
\end{corollary}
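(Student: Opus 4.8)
The plan is to first convert $\bbGamma_{\tau,\tau'}$ into the claimed recursion by exploiting the temporal independence of $\{\bbA_t\}$ under the $\RES(p)$ model, and then reduce the whole statement to evaluating a single one-step second-moment map $\bbX \mapsto \mbE[\bbA_t^\Tr \bbX \bbA_t]$ on a symmetric matrix $\bbX$, for each of the two diffusion models of Lemma~\ref{l_valid_models}. Recall from the proof of Theorem~\ref{thm_mse} that $\bbGamma_{\tau,\tau'} = \mbE[\bbPhi_{T-1,\tau+1}^\Tr \bbH^\Tr\bbH \bbPhi_{T-1,\tau'+1}]$ with $\bbPhi_{b,a} = \bbA_b \bbA_{b-1}\cdots\bbA_a$; since $\bbH^\Tr\bbH$ is symmetric, $\bbGamma_{\tau,\tau'}^\Tr = \bbGamma_{\tau',\tau}$, so it suffices to treat $\tau\le\tau'$. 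For $\tau\le\tau'$ I would factor $\bbPhi_{T-1,\tau+1} = \bbPhi_{T-1,\tau'+1}\,\bbPhi_{\tau',\tau+1}$ and observe that $\bbPhi_{\tau',\tau+1}$ depends only on $\bbA_{\tau+1},\dots,\bbA_{\tau'}$ while $\bbPhi_{T-1,\tau'+1}$ depends only on $\bbA_{\tau'+1},\dots,\bbA_{T-1}$; these index blocks are disjoint, hence independent, so the expectation factorizes. Using $\mbE[\bbPhi_{\tau',\tau+1}^\Tr] = (\barbA^{\tau'-\tau})^\Tr$ (i.i.d.\ factors) and defining $\bbQ_a := \mbE[\bbPhi_{T-1,T-a}^\Tr\bbH^\Tr\bbH\bbPhi_{T-1,T-a}]$, the choice $a = T-\tau'-1$ gives $\bbGamma_{\tau,\tau'} = (\barbA^{\tau'-\tau})^\Tr\bbQ_{T-\tau'-1}$.

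Next I would establish the recursion for $\bbQ_a$ itself. Writing $\bbPhi_{T-1,T-a} = \bbPhi_{T-1,T-a+1}\bbA_{T-a}$ and using that $\bbA_{T-a}$ is independent of $\bbPhi_{T-1,T-a+1}$, conditioning on $\bbA_{T-a}$ yields $\bbQ_a = \mbE[\bbA_{T-a}^\Tr\,\bbQ_{a-1}\,\bbA_{T-a}]$, with $\bbQ_0 = \bbH^\Tr\bbH$ (empty product). A one-line induction shows each $\bbQ_a$ is symmetric, so the corollary reduces to computing $\mbE[\bbA_t^\Tr\bbX\bbA_t]$ for a fixed symmetric $\bbX$, for model $(ii)$ ($\bbA_t = \bbW_t$) and model $(i)$ ($\bbA_t = \bbI - \epsilon\bbL_t = \bbI - \epsilon\bbD_t + \epsilon\bbW_t$). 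Throughout I would use that, under $\RES(p)$ on an undirected $\ccalG$, $[\bbW_t]_{ij} = [\bbW]_{ij}\theta_{ij}$ with $\theta_{ij}=\theta_{ji}$ i.i.d.\ $\mathrm{Bernoulli}(p)$ over the finitely many edges and $\theta_{ii}=0$, so $\mbE[\theta_{ij}] = \mbE[\theta_{ij}^2] = p$; also $\bbD_t = \diag(\bbW_t\bbone_N)$ is linear in the same $\theta_{ij}$.

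For model $(ii)$ I would compute $\mbE[\bbW_t\bbX\bbW_t]$ entrywise: $\mbE\big[[\bbW_t\bbX\bbW_t]_{ij}\big] = \sum_{k,l}[\bbW]_{ik}[\bbW]_{lj}[\bbX]_{kl}\,\mbE[\theta_{ik}\theta_{lj}]$, then split the sum according to whether the edges $\{i,k\}$ and $\{l,j\}$ coincide. The ``distinct-edge'' part has $\mbE[\theta_{ik}\theta_{lj}] = p^2$ and assembles into $p^2[\bbW^\Tr\bbX\bbW]_{ij}$; the ``coincident-edge'' part (either $i=l,\,k=j$, or $i=j,\,k=l$) has $\mbE = p$ and, after removing the $p^2$ already counted, contributes the two $p(1-p)$ terms of \eqref{eqn_recursive_Q} --- a Hadamard product $\bbW^\Tr\circ\bbX\circ\bbW$ (whose diagonal vanishes since $\bbW$ has zero diagonal, so the explicit diagonal subtraction is harmless) and a pure diagonal term $\diag(\bbW^\Tr\diag(\bbX)\bbW)$ --- using $[\bbW]_{ij}=[\bbW]_{ji}$ and $\bbX=\bbX^\Tr$. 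Model $(i)$ follows the same principle with heavier bookkeeping: expand $\mbE[\bbA_t^\Tr\bbX\bbA_t]$ with $\bbA_t = \bbI - \epsilon\bbD_t + \epsilon\bbW_t$ into its nine products and evaluate each second moment; the deterministic mean part $(\bbI-\epsilon p\bbD)$ supplies the $(\bbI-\epsilon p\bbD)^2\circ\diag(\bbX)$ and $2\epsilon p(\bbI-\epsilon p\bbD)\diag(\bbX\bbW)$ pieces, the $\epsilon^2\bbW_t\bbX\bbW_t$ term reproduces the first three terms of \eqref{eqn_recursive_Q_lap} (scaled by $\epsilon^2$), and the remaining correction $\epsilon^2 p(1-p)\bbW^\Tr\bbW$ comes from the $\bbD_t$-variance and the correlation between $\bbD_t$ and $\bbW_t$; that $\bbW_t$ has zero diagonal and $\bbD_t$ is diagonal is what routes each cross term cleanly onto the diagonal or off it.

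The hard part will be precisely the model-$(i)$ bookkeeping: handling the correlation between the random degree matrix $\bbD_t$ and adjacency $\bbW_t$ (they are built from the same Bernoulli variables $\theta_{ij}$) without dropping or double-counting any coincident-edge contribution, and correctly collecting the many cross terms into the diagonal and off-diagonal pieces of \eqref{eqn_recursive_Q_lap}. By contrast, the $\bbGamma_{\tau,\tau'}$-to-$\bbQ_a$ reduction and the independence/symmetry arguments are routine.
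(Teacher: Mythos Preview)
Your proposal is correct and follows essentially the same approach as the paper: both first reduce $\bbGamma_{\tau,\tau'}$ to the recursion $\bbQ_a = \mbE[\bbA_{T-a}^{\Tr}\bbQ_{a-1}\bbA_{T-a}]$ via independence of the $\{\bbA_t\}$ (the paper invokes the tower property explicitly, you factor $\bbPhi$ and use independence directly---same content), and then both compute the one-step map entrywise by splitting into the $i\neq j$ and $i=j$ cases and separating coincident-edge from distinct-edge Bernoulli contributions, treating the simpler model $(ii)$ before model $(i)$. Your observation that the diagonal of $\bbW^{\Tr}\circ\bbQ_{a-1}\circ\bbW$ already vanishes (so the explicit $\diag(\cdot)$ subtraction is redundant) is correct and slightly sharper than what the paper states.
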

%

\begin{proof}[Proof of Corollary~\ref{cor_laplacian}]
From the first term in \eqref{eqn_mse_terms}, we have
\begin{equation} \label{eqn_trace_quad}
\mbE \left[ \bbx_{T}^{\Tr} \bbH^{\Tr} \bbH \bbx_{T} \right]
	= \mbE \left[ \tr \left[ \bbH \bbx_{T} \bbx_{T}^{\Tr} \bbH^{\Tr} \right] \right].
\end{equation}
The trace argument in \eqref{eqn_trace_quad} can be expanded as
\begin{equation} \label{eqn_quad_expr}
\bbH \bbx_{T} \bbx_{T}^{\Tr} \bbH^{\Tr}
	\!\! = \!\! \sum_{\tau=0}^{T-1} \! \sum_{\tau'=0}^{T-1}  \!
		\bbH \bbPhi_{T-1,\tau+1} \bbC^{\Tr} \bbu_{\tau} 
		\bbu_{\tau'}^{\Tr} \bbC \bbPhi_{T-1,\tau'+1}^{\Tr} \bbH^{\Tr}.
\end{equation}
From Lemma~\ref{l_valid_models}, we have $\|\bbA_{t}\|_{2} \leq \varrho$; so that, by the submultiplicativity of the spectral norm, we can write
\begin{equation} \label{eqn_submult_phi}
\left\|\bbPhi_{b,a} \right\|_{2} 
	\! = \! \left\| \bbA_{b} \bbA_{b-1} \cdots \bbA_{a} \right\|_{2} 
	\! \leq \! \left\| \bbA_{b} \right\|_{2} \cdots \left\| \bbA_{a} \right\|_{2} 
	\leq \varrho^{b-a+1}.
\end{equation}
Also, observe that for any square matrix $\bbX$ and positive semidefinite matrix $\bbY$, it holds that $\tr[\bbX \bbY] \leq \|\bbX\|_{2} \tr[\bbY]$ \cite{Wang86-TraceBound}.

Given that the filter $\bbH$ does not amplify any frequency (i.e. $\|\bbH \|_{2} = 1$), then plugging back \eqref{eqn_quad_expr} and \eqref{eqn_submult_phi} into \eqref{eqn_trace_quad} yields
\begin{align}
& \mbE \left[ 
	\sum_{\tau=0}^{T-1} \sum_{\tau'=0}^{T-1} 
		\tr \left[ 
			\bbH \bbPhi_{T-1,\tau+1} \bbC^{\Tr} \bbu_{\tau} 
			\bbu_{\tau'}^{\Tr} \bbC \bbPhi_{T-1,\tau'+1}^{\Tr} \bbH^{\Tr} 
		\right] 
	\right]
		\nonumber \\
	& =\!\!  \sum_{\tau=0}^{T-1} \!\! \sum_{\tau'=0}^{T-1} 
		\!\mbE \left[
			\tr\! \left[ \!
				\left(\bbPhi_{T-1,\tau'+1}^{\Tr} \bbH^{\Tr} \bbH \bbPhi_{T-1,\tau+1}\right) 
				\!\!\left(\bbC^{\Tr} \bbu_{\tau} \bbu_{\tau'}^{\Tr} \bbC\right) 
			\right] 
		\right]
		\nonumber \\
	& \leq \sum_{\tau=0}^{T-1} \sum_{\tau'=0}^{T-1} 
		\mbE \left[ 
			\|\bbPhi_{T-1,\tau'+1}^{\Tr} \|_{2} \|\bbPhi_{T-1,\tau+1} \|_{2} \right] 
			\tr \left[ \bbC^{\Tr} \bbu_{\tau} \bbu_{\tau'}^{\Tr} \bbC \right]
		\nonumber \\
		& \leq \sum_{\tau=0}^{T-1} \sum_{\tau'=0}^{T-1} 
			\varrho^{2(t-\tau'+1)}
			\tr \left[ \bbC \bbC^{\Tr} \bbu_{\tau} \bbu_{\tau'}^{\Tr} \right]
		\nonumber \\
	& = \sum_{\tau=0}^{T-1} \sum_{\tau'=0}^{T-1} 
		\varrho^{2(t-\tau'+1)}
		\langle \bbu_{\tau'}, \bbu_{\tau} \rangle.
		\label{eqn_bound_quadterm}
\end{align}
Using \eqref{eqn_bound_quadterm} to bound \eqref{eqn_trace_quad} and replacing it in \eqref{eqn_mse_t} yields \eqref{eqn_mse_bound}.
\end{proof}

\begin{proof}[Proof of Corollary~\ref{cor_adjacency}]
Consider $\tau \leq \tau'$. From \eqref{eqn_mse_t}, we have
\begin{align} \label{eqn_cor2_1}
\bbGamma_{\tau,\tau'}
	&= \mbE \left[ \left( \bbA_{T-1} \cdots \bbA_{\tau+1}\right)^{\Tr} \bbQ \left( \bbA_{T-1} \cdots \bbA_{\tau'+1} \right) \right]
		\nonumber \\
	&= \mbE \left[ \bbA_{\tau+1}^{\Tr} \bbA_{\tau+2}^{\Tr} \cdots \bbA_{T-1}^{\Tr} \bbQ \bbA_{T-1} \cdots \bbA_{\tau'+2} \bbA_{\tau'+1} \right].
\end{align}
Then, since for two random matrices $\bbX,\bbY$ it holds that $\mbE[\bbX] = \mbE[\mbE[\bbX|\bbY]]$ {\cite[Theorem 34.4]{Billingsley95-ProbabilityMeasure}}, \eqref{eqn_cor2_1} becomes
\begin{align}
	& \bbGamma_{\tau,\tau'}	\\
    & = \mbE \left[ \mbE \left[ \bbA_{\tau+1}^{\Tr} \cdot\!\!\cdot\!\!\cdot \bbA_{T-1}^{\Tr} \bbQ \bbA_{T-1} \cdot\!\!\cdot\!\!\cdot \bbA_{\tau'+1} | \bbA_{T-2},...,\bbA_{\tau+1} \right] \right]
		 \nonumber \\
	&= \mbE \left[ \bbA_{\tau+1 }^{\Tr} \cdot\!\!\cdot\!\!\cdot \mbE \left[\bbA_{T-1}^{\Tr} \bbQ \bbA_{T-1} | \bbA_{T-2},...,\bbA_{\tau+1} \right] \cdot\!\!\cdot\!\!\cdot \bbA_{\tau'+1} \right] \nonumber
\end{align}
which under the $\RES(p)$ model (i.e., matrices $\bbA_{a}$ are i.i.d.) can be written as
\begin{align} \label{eqn_cor2_indep}
&\bbGamma_{\tau,\tau'}
	= \mbE \left[ \bbA_{\tau+1}^{\Tr} \cdots \mbE \left[\bbA_{T-1}^{\Tr} \bbQ \bbA_{T-1} \right] \cdots \bbA_{\tau'+1} \right]
		 \\
	&= (\barbA^{\tau'-\tau})^{\Tr} \mbE \left[ \bbA_{\tau'+1}^{\Tr} \mbE \left[ \cdots \mbE \left[ \bbA_{T-1}^{\Tr} \bbQ \bbA_{T-1} \right] \cdots \right] \bbA_{\tau'+1} \right].
	\nonumber
\end{align}
Further, for $a \geq 1$ and assuming for now (to be proven later on) that $\bbQ_{a-1}$ is symmetric and positive semidefinite, we proceed to compute the $(i,j)$ entry of matrix $\bbQ_{a} = \mbE[\bbA_{T-a}^{\Tr} \bbQ_{a-1} \bbA_{T-a}]$. Towards this end, denote simply by $[\bbQ_{a-1}]_{ij} = q_{ij}$ and $[\bbA_{T-a}]_{ij} = a_{ij}$ for $i,j=1,\ldots,N$.

For $i \neq j$ and since $\bbA_{T-a}$ is symmetric, the $(i,j)$ element of $\bbQ_{a}$ becomes
\begin{align} \label{eqn_AQA_ij}
\mbE & \left[ [\bbA_{T-a}^{\Tr} \bbQ_{a-1} \bbA_{T-a}]_{ij} \right] 
	= \sum_{k=1}^{N} \sum_{\ell=1}^{N} \mbE[ a_{ki} a_{\ell j} ] q_{k\ell}
	 \\
	&= \sum_{\shortstack{\scriptsize $k=1; k \neq j$}}^{N} \sum_{\shortstack{\scriptsize $\ell=1; \ell \neq i$}}^{N} \mbE[ a_{ki} ] \mbE[a_{\ell j} ] q_{k\ell} + \mbE [ a_{ji} a_{ij}] q_{ji} \nonumber
\end{align}
where we have used the independence of the distinct elements in $\bbA_{T-a}$. The second term of \eqref{eqn_AQA_ij} groups the element $(i,j)$ together with $(j,i)$ due to symmetry of $\bbA_{T-a}$. Analogously, for the diagonal elements $i=j$, we get
\begin{equation} \label{eqn_AQA_ii}
\begin{split}
\mbE & \left[ [\bbA_{T-a}^{\Tr} \bbQ_{a-1} \bbA_{T-a}]_{ii} \right] 
	= \sum_{k=1}^{N} \sum_{\ell=1}^{N} \mbE[ a_{ki} a_{\ell i} ] q_{k\ell}
		 \\
	&= \sum_{k=1}^{N} \sum_{\shortstack{\scriptsize $\ell=1; \ell \neq k$}}^{N} \mbE[ a_{ki} ] \mbE[a_{\ell i} ] q_{k\ell} + \sum_{k=1}^{n} \mbE [ a_{ki}^{2}] q_{kk}.
\end{split}
\end{equation}
With this in place, let us fist consider the simpler model $(ii)$ in Lemma~\ref{l_valid_models}, where $\bbS_{t} = \bbW_{t}$ and $\bbA_{t}= \bbW_{t}$. For this case, we have $[\bbA_{t}]_{ij} = B w_{ij}$, where $B$ is a Bernoulli random variable of parameter $p$ and $w_{ij} = [\bbW]_{ij}$. Then, by substituting $\mbE[a_{ij}]=p w_{ij}$ and $\mbE[a_{ij}^{2}]=(p^{2}+p(1-p))w_{ij}^{2}$ in \eqref{eqn_AQA_ij}, we get
\begin{align}
	& p^{2} \sum_{\shortstack{\scriptsize $k=1; k \neq j$}}^{N} \sum_{\shortstack{\scriptsize $\ell=1; \ell \neq j$}}^{N} w_{ki} w_{\ell j} q_{k\ell} + \left( p^{2} + p(1-p) \right) w_{ji}^{2} q_{ji}
		 \nonumber \\
	&= p^{2} \sum_{k=1}^{N} \sum_{\ell=1}^{N} w_{ki} w_{\ell j} q_{k\ell} +  p(1-p) w_{ji}^{2} q_{ji}
\end{align}
which can be written in the compact form
\begin{align} \label{eqn_AQA_ij_mat}
\mbE & \left[ [\bbA_{T-a}^{\Tr} \bbQ_{a-1} \bbA_{T-a}]_{ij} \right] \\
	& = p^{2} [\bbW^{\Tr} \bbQ_{a-1} \bbW]_{ij} + p(1-p) [\bbW^{\Tr} \circ \bbQ_{a-1} \circ \bbW]_{ij}. \nonumber
\end{align}
Likewise, for $i=j$ \eqref{eqn_AQA_ii} becomes
\begin{equation}
\begin{split}
	& p^{2} \sum_{k=1}^{N} \sum_{\shortstack{\scriptsize $\ell=1; \ell \neq k$}}^{N} w_{ki} w_{\ell i} q_{k\ell} + \left( p^{2} + p(1-p) \right) \sum_{k=1}^{N} w_{ki}^{2} q_{kk}
		\\
	&= p^{2} \sum_{k=1}^{N} \sum_{\ell=1}^{N} w_{ki} w_{\ell i} q_{k\ell} +  p(1-p) \sum_{k=1}^{N} w_{ki}^{2} q_{kk}
\end{split}
\end{equation}
which can also be written compactly as
\begin{align} 	\label{eqn_AQA_ii_mat}
\mbE & \left[ [\bbA_{T-a}^{\Tr} \bbQ_{a-1} \bbA_{T-a}]_{ii} \right] \\
	& = p^{2} [\bbW^{\Tr} \bbQ_{a-1} \bbW]_{ii} + p(1-p) [\bbW^{\Tr} \diag(\bbQ_{a-1}) \bbW]_{ii}. \nonumber
\end{align}

By combining \eqref{eqn_AQA_ij_mat} and \eqref{eqn_AQA_ii_mat} yields \eqref{eqn_recursive_Q}. Finally, note that if $\bbQ_{a-1}$ is symmetric and positive semidefinite, then so is $\bbQ_{a}$. To complete the proof, observe that $\bbQ_{0} = \bbH^{\Tr} \bbH$ is symmetric and positive semidefinite, thus \eqref{eqn_recursive_Q} holds for all $a \geq 1$.

For model $(i)$ in Lemma~\ref{l_valid_models}, we proceed in an analogous way. In this case, $\bbS_{t} = \bbL_{t}$ and $\bbA_{t} = \bbI - \epsilon \bbL_{t} = (\bbI - \epsilon \bbD_{t}) + \epsilon \bbW_{t}$, where $\bbD_{t} = \diag(\bbW_{t} \bbone)$ is the degree matrix. This means that $[\bbA_{t}]_{ij} = a_{ij} = \epsilon B w_{ij}$ if $i \neq j$ and $[\bbA_{t}]_{ii} = a_{ii} = 1 - \epsilon \sum_{k=1}^{N} B_{k} w_{ik}$ with $B_{k}$ being i.i.d. Bernoulli random variables with probability $p$.

Then, for $i \neq j$, from \eqref{eqn_AQA_ij} we have
\begin{align} \label{eqn_AQA_ij_prelap}
&\mbE  \left[ [\bbA_{T-a}^{\Tr} \bbQ_{a-1} \bbA_{T-a}]_{ij} \right] 
	=\\
	&= \sum_{\shortstack{\scriptsize $k=1; k \neq j, i$}}^{N} \:\, \sum_{\shortstack{\scriptsize $\ell=1; \ell \neq i;j$}}^{N} \mbE[ a_{ki} ] \mbE[a_{\ell j} ] q_{k\ell} \nonumber \\
    & \qquad \qquad \qquad  + \mbE [ a_{ji} a_{ij}] q_{ji} + \mbE [ a_{ii}] \mbE [ a_{jj} ] q_{ij}
		 \nonumber \\
	&= \epsilon^{2} p^{2} \sum_{\shortstack{\scriptsize $k=1; k \neq j, i$}}^{N} \:\, \sum_{\shortstack{\scriptsize $\ell=1; \ell \neq i,j$}}^{N} w_{ki} w_{\ell j} q_{k\ell} \nonumber
		\\
	& \qquad \qquad \qquad + \epsilon^{2} (p^{2} + p (1-p)) w_{ji}^{2} q_{ji} +(1- \epsilon p d_{i}) q_{ij}. \nonumber
\end{align}
Now, recalling that $w_{ii}=0$ (i.e., no self-loops), \eqref{eqn_AQA_ij_prelap} becomes
\begin{equation}
\begin{split}
\mbE & \left[ [\bbA_{T-a}^{\Tr} \bbQ_{a-1} \bbA_{T-a}]_{ij} \right] = \epsilon^{2} p^{2} \sum_{k=1}^{N} \sum_{\ell=1}^{N} w_{ki} w_{\ell j} q_{k\ell} 
		 \\&\qquad+ \epsilon^{2} p (1-p) w_{ji}^{2} q_{ji}  +(1- \epsilon p d_{i}) q_{ij},
\end{split}
\end{equation}
which can be further written in the compact form
\begin{align} \label{eqn_AQA_ij_lap}
\mbE & \left[ [\bbA_{T-a}^{\Tr} \bbQ_{a-1} \bbA_{T-a}]_{ij} \right] = \epsilon^{2} p^{2} [\bbW^{\Tr} \bbQ_{a-1} \bbW]_{ij}\\
 &\quad+ \epsilon^{2} p(1-p) [\bbW^{\Tr} \circ \bbQ_{a-1} \circ \bbW]_{ij} + [(\bbI - \epsilon p \bbD) \bbQ_{a-1}]_{ij} \nonumber
\end{align}

For $i=j$, we start with \eqref{eqn_AQA_ii}
\begin{align} 
\mbE & \left[ [\bbA_{T-a}^{\Tr} \bbQ_{a-1} \bbA_{T-a}]_{ii} \right] \\
& = \sum_{\shortstack{\scriptsize $k=1; k \neq i$}}^{N} \sum_{\shortstack{\scriptsize $\ell=1, \ell \neq k, i$}}^{N} \mbE[ a_{ki} ] \mbE[a_{\ell i} ] q_{k\ell} \nonumber \\
& + \sum_{\shortstack{\scriptsize $k=1; k \neq i$}}^{N}\!\!\!\!\!\! \mbE [ a_{ki}^{2}] q_{kk}  + 2 \mbE[ a_{ii} ] \sum_{\shortstack{\scriptsize $\ell=1; \ell \neq i$}}^{N}\!\!\!\!\!\! \mbE[a_{\ell i} ] q_{i\ell} + \mbE [ a_{ii}^{2}] q_{ii}. \nonumber
\end{align}
Then, recalling that $w_{ii}=0$, we replace the first and second order moments for each $a_{ij}$ and obtain
\begin{align} 
\mbE & \left[ [\bbA_{T-a}^{\Tr} \bbQ_{T-a} \bbA_{T-a}]_{ii} \right] = \epsilon^{2} p^{2} \sum_{k=1}^{N} \sum_{\shortstack{\scriptsize $\ell=1; \ell \neq k$}}^{N} w_{ki}w_{\ell i} q_{k\ell} \nonumber \\
& + \epsilon^{2} (p^{2} + p (1-p)) \sum_{k=1}^{N} w_{ki}^{2} q_{kk}  + 2 (1-\epsilon p d_{i}) \epsilon p \sum_{\ell=1}^{N} w_{\ell i} q_{i\ell} 
	\nonumber \\
	& + q_{ii} \Bigg( 1 - 2 \epsilon p d_{i} + \epsilon^{2} p^{2} \sum_{k=1}^{N} \sum_{\ell=1}^{N} w_{ik} w_{i\ell} \\
    & \qquad \qquad \qquad \qquad \qquad \qquad + \epsilon^{2} p (1-p) \sum_{k=1}^{N} w_{ik}^{2} \Bigg) . \nonumber
\end{align}
Finally, this can be rewritten as
\begin{align} 	\label{eqn_AQA_ii_lap}
\mbE & \left[ [\bbA_{T-a}^{\Tr} \bbQ_{a-1} \bbA_{T-a}]_{ii} \right] \\
	& = \epsilon^{2} p^{2} [\bbW^{\Tr} \bbQ_{a-1} \bbW]_{ii} + \epsilon^{2} p(1-p) [\bbW^{\Tr} \diag(\bbQ_{a-1}) \bbW]_{ii} \nonumber \\
	& \quad + 2 \epsilon p [(\bbI - \epsilon p \bbD) \diag( \bbQ_{a-1} \bbW)]_{ii} \nonumber \\
	& \quad + \left[\big( (\bbI - \epsilon p \bbD)^{2} + \epsilon^{2} p (1-p) \bbW^{\Tr} \bbW \big) \circ \diag(\bbQ_{a-1}) \right]_{ii}
\end{align}
completing the proof.
\end{proof}

\bibliographystyle{IEEEtran}
\bibliography{myIEEEabrv,bib-control}

\begin{thebibliography}{10}
\providecommand{\url}[1]{#1}
\csname url@samestyle\endcsname
\providecommand{\newblock}{\relax}
\providecommand{\bibinfo}[2]{#2}
\providecommand{\BIBentrySTDinterwordspacing}{\spaceskip=0pt\relax}
\providecommand{\BIBentryALTinterwordstretchfactor}{4}
\providecommand{\BIBentryALTinterwordspacing}{\spaceskip=\fontdimen2\font plus
\BIBentryALTinterwordstretchfactor\fontdimen3\font minus
  \fontdimen4\font\relax}
\providecommand{\BIBforeignlanguage}[2]{{%
\expandafter\ifx\csname l@#1\endcsname\relax
\typeout{** WARNING: IEEEtran.bst: No hyphenation pattern has been}%
\typeout{** loaded for the language `#1'. Using the pattern for}%
\typeout{** the default language instead.}%
\else
\language=\csname l@#1\endcsname
\fi
#2}}
\providecommand{\BIBdecl}{\relax}
\BIBdecl

\bibitem{Gama18-Control}
F.~Gama, E.~Isufi, G.~Leus, and A.~Ribeiro, ``Control of graph signals over
  random time-varying graphs,'' in \emph{43rd {IEEE} Int. Conf. Acoust., Speech
  and Signal Process.}\hskip 1em plus 0.5em minus 0.4em\relax Calgary, AB:
  IEEE, 15-20 Apr. 2018, pp. 4169--4173.

\bibitem{Strogatz01-Exploring}
S.~H. Strogatz, ``Exploring complex networks,'' \emph{Nature}, vol. 410, pp.
  268--276, March 2001.

\bibitem{Newman06-Structure}
M.~Newman, A.-L. Barab\'{a}si, and D.~J. Watts, \emph{The Structure and
  Dynamics of Networks}, ser. Princeton Studies in Complexity.\hskip 1em plus
  0.5em minus 0.4em\relax Princeton, NJ: Princeton University Press, 2006.

\bibitem{Lombardi07-Controllability}
A.~Lombardi and M.~H{\"o}rnquist, ``Controllability analysis of networks,''
  \emph{Physical Rev. E}, vol.~75, no. 056110, pp. 1--5, May 2007.

\bibitem{Liu11-Controllability}
Y.-Y. Liu, J.-J. Slotine, and A.-L. Barab\'{a}si, ``Controllability of complex
  networks,'' \emph{Nature}, vol. 473, pp. 167--173, May 2011.

\bibitem{Yuan13-Exact}
Z.~Yuan, C.~Zhao, Z.~Di, and W.-X. Wang, ``Exact controllability of complex
  networks,'' \emph{Nature Commun.}, vol.~4, no. 2447, pp. 1--9, Sep. 2013.

\bibitem{Pasqualetti14-Limitations}
F.~Pasqualetti, S.~Zampiedri, and F.~Bullo, ``Controllability metrics,
  limitations and algorithms for complex networks,'' \emph{{IEEE} Trans.
  Control Network Syst.}, vol.~1, no.~1, pp. 40--52, March 2014.

\bibitem{Zachary77-KarateClub}
W.~W. Zachary, ``An information flow model for conflict and fission in small
  groups,'' \emph{J. Anthropol. Res.}, vol.~33, no.~1, pp. 452--473, 1977.

\bibitem{Taubin00-Meshes}
G.~Taubin, ``Geometric signal processing on polygonal meshes,'' in
  \emph{Eurographics 2000}.\hskip 1em plus 0.5em minus 0.4em\relax Interlaken,
  Switzerland: Eur. Assoc. Comput. Graph., 21-25 Aug. 2000, pp. 1--11.

\bibitem{Shuman13-SPG}
D.~I. Shuman, S.~K. Narang, P.~Frossard, A.~Ortega, and P.~Vandergheynst, ``The
  emerging field of signal processing on graphs: Extending high-dimensional
  data analysis to networks and other irregular domains,'' \emph{{IEEE} Signal
  Process. Mag.}, vol.~30, no.~3, pp. 83--98, May 2013.

\bibitem{Sandryhaila14-BigDataSPG}
A.~Sandryhaila and J.~M.~F. Moura, ``Big data analysis with signal processing
  on graphs: Representation processing of massive data sets with irregular
  structure,'' \emph{{IEEE} Signal Process. Mag.}, vol.~31, no.~5, pp. 80--90,
  Sep. 2014.

\bibitem{Chen15-Sampling}
S.~Chen, R.~Varma, A.~Sandryhaila, and J.~Kova{\v{c}}evi{\'c}, ``Discrete
  signal processing on graphs: Sampling theory,'' \emph{{IEEE} Trans. Signal
  Process.}, vol.~63, no.~24, pp. 6510--6523, Dec. 2015.

\bibitem{Narang13-Interpolation}
S.~K. Narang, A.~Gadde, and A.~Ortega, ``Signal processing techniques for
  interpolation in graph structured data,'' in \emph{38th {IEEE} Int. Conf.
  Acoust., Speech and Signal Process.}\hskip 1em plus 0.5em minus 0.4em\relax
  Vancouver, BC: IEEE, 26-31 May 2013, pp. 5445--5449.

\bibitem{Gama18-Sketching}
\BIBentryALTinterwordspacing
F.~Gama, A.~G.~Marques, G.~Mateos, and A.~Ribeiro, ``{Rethinking Sketching as
  Sampling: A Graph Signal Processing Approach},'' \emph{arXiv:1611.00119v2
  [cs.IT]}, 11 Sep. 2018. [Online]. Available:
  \url{http://arxiv.org/abs/1611.00119}
\BIBentrySTDinterwordspacing

\bibitem{DiLorenzo18-Adaptive}
P.~Di~Lorenzo, P.~Banelli, E.~Isufi, S.~Barbarossa, and G.~Leus, ``Adaptive
  graph signal processing: Algorithms and optimal sampling strategies,''
  \emph{{IEEE} Trans. Signal Process.}, vol.~66, no.~13, pp. 3584--3598, July
  2018.

\bibitem{Isufi18-Observability}
\BIBentryALTinterwordspacing
E.~Isufi, P.~Banelli, P.~Di~Lorenzo, and G.~Leus, ``Observing and tracking
  bandlimited graph processes,'' \emph{arXiv:1712.00404v2 [eess.SP]}, 19 Oct.
  2018. [Online]. Available: \url{http://arxiv.org/abs/1712.00404}
\BIBentrySTDinterwordspacing

\bibitem{Sandryhaila14-Freq}
A.~Sandryhaila and J.~M.~F. Moura, ``Discrete signal processing on graphs:
  Frequency analysis,'' \emph{{IEEE} Trans. Signal Process.}, vol.~62, no.~12,
  pp. 3042--3054, June 2014.

\bibitem{Ortega18-GSP}
A.~Ortega, P.~Frossard, J.~Kova{\v{c}}evi{\'{c}}, J.~M.~F. Moura, and
  P.~Vandergheynst, ``Graph signal processing: Overview, challenges and
  applications,'' \emph{Proc. {IEEE}}, vol. 106, no.~5, pp. 808--828, May 2018.

\bibitem{Marques16-Sampling}
A.~G.~Marques, S.~Segarra, G.~Leus, and A.~Ribeiro, ``Sampling of graph signals
  with successive local aggregations,'' \emph{{IEEE} Trans. Signal Process.},
  vol.~64, no.~7, pp. 1832--1843, Apr. 2016.

\bibitem{Huang16-Brain}
W.~Huang, L.~Goldsberry, N.~F. Wymbs, S.~T. Grafton, D.~S. Bassett, and
  A.~Ribeiro, ``Graph frequency analysis of brain signals,'' \emph{{IEEE} J.
  Select. Topics Signal Process.}, vol.~10, no.~7, pp. 1189--1203, Oct. 2016.

\bibitem{Segarra16-Percolation}
S.~Segarra, A.~G.~Marques, G.~Leus, and A.~Ribeiro, ``Reconstruction of graph
  signals through percolation from seeding nodes,'' \emph{{IEEE} Trans. Signal
  Process.}, vol.~64, no.~16, pp. 4363--4378, Aug. 2016.

\bibitem{Barbarossa16-Control}
S.~Barbarossa, S.~Sardellitti, and A.~Farina, ``On sparse controllability of
  graph signals,'' in \emph{41st {IEEE} Int. Conf. Acoust., Speech and Signal
  Process.}\hskip 1em plus 0.5em minus 0.4em\relax Shanghai, China: IEEE, 20-25
  March 2016, pp. 4104--4108.

\bibitem{Bazerque17-Control}
J.~A. Bazerque and P.~Monz\'{o}n, ``Control of networked systems in the
  graph-frequency domain,'' in \emph{51st Asilomar Conf. Signals, Systems and
  Comput.}\hskip 1em plus 0.5em minus 0.4em\relax Pacific Grove, CA: IEEE, 29
  Oct.-1 Nov. 2017, pp. 1444--1448.

\bibitem{Isufi17-ARMA}
E.~Isufi, A.~Loukas, A.~Simonetto, and G.~Leus, ``Autoregressive moving average
  graph filtering,'' \emph{{IEEE} Trans. Signal Process.}, vol.~65, no.~2, pp.
  274--288, Jan. 2017.

\bibitem{Chen16-Piecewise}
S.~Chen, R.~Varma, A.~Singh, and J.~Kova{\v{c}}evi{\'c}, ``Representations of
  piecewise smooth signals on graphs,'' in \emph{41st {IEEE} Int. Conf.
  Acoust., Speech and Signal Process.}\hskip 1em plus 0.5em minus 0.4em\relax
  Shanghai, China: IEEE, 20-25 March 2016, pp. 6370--6374.

\bibitem{Jung19-Localized}
A.~Jung and N.~Tran, ``Localized linear regression in networked data,''
  \emph{{IEEE} Signal Process. Lett.}, vol.~26, no.~7, pp. 1090--1094, July
  2019.

\bibitem{Sandryhaila13-DSPG}
A.~Sandryhaila and J.~M.~F. Moura, ``Discrete signal processing on graphs,''
  \emph{{IEEE} Trans. Signal Process.}, vol.~61, no.~7, pp. 1644--1656, Apr.
  2013.

\bibitem{Godsil01-AlgebraicGraphTheory}
C.~Godsil and G.~Royle, \emph{Algebraic Graph Theory}, ser. Graduate Texts in
  Mathematics.\hskip 1em plus 0.5em minus 0.4em\relax New York, NY: Springer,
  2001, vol. 207.

\bibitem{Pesenson08-Sampling}
I.~Pesenson, ``Sampling in {Paley}-{Wiener} spaces on combinatorial graphs,''
  \emph{Trans. Amer. Math. Soc.}, vol. 360, no.~10, pp. 5603--5627, Oct. 2008.

\bibitem{Tsitsvero16-Uncertainty}
M.~Tsitsvero, S.~Barbarossa, and P.~Di~Lorenzo, ``Signals on graphs:
  Uncertainty principle and sampling,'' \emph{{IEEE} Trans. Signal Process.},
  vol.~64, no.~18, pp. 4845--4860, Sep. 2016.

\bibitem{Isufi17-Random}
E.~Isufi, A.~Loukas, A.~Simonetto, and G.~Leus, ``Filtering random graph
  processes over random time-varying graphs,'' \emph{{IEEE} Trans. Signal
  Process.}, vol.~65, no.~16, pp. 4406--4421, Aug. 2017.

\bibitem{Kondor02-Diffusion}
R.~I. Kondor and J.~D. Lafferty, ``Diffusion kernels on graphs and other
  discrete input spaces,'' in \emph{19th Int. Conf. Mach. Learning}, Sydney,
  Australia, 8-12 July 2002, pp. 315--322.

\bibitem{Thanou17-Heat}
D.~Thanou, X.~Dong, D.~Kressner, and P.~Frossard, ``Learning heat diffusion
  graphs,'' \emph{{IEEE} Trans. Signal, Inform. Process. Networks}, vol.~3,
  no.~3, pp. 484--499, Sep. 2017.

\bibitem{Olfati07-Consensus}
R.~Olfati-Saber, J.~A. Fax, and R.~M. Murray, ``Consensus and cooperation in
  networked multi-agent systems,'' \emph{Proc. {IEEE}}, vol.~95, no.~1, pp.
  215--233, Jan. 2007.

\bibitem{Gama19-GLLN}
F.~Gama and A.~Ribeiro, ``Ergodicity in stationary graph processes: A weak law
  of large numbers,'' \emph{{IEEE} Trans. Signal Process.}, vol.~67, no.~10,
  pp. 2761--2774, May 2019.

\bibitem{Kailath80-LinearSystems}
T.~Kailath, \emph{Linear Systems}, ser. Prentice-Hall Inform. Syst.
  Series.\hskip 1em plus 0.5em minus 0.4em\relax Englewood Cliffs, NJ:
  Prentice-Hall, 1980.

\bibitem{Segarra17-Linear}
S.~Segarra, A.~G.~Marques, and A.~Ribeiro, ``Optimal graph-filter design and
  applications to distributed linear network operators,'' \emph{{IEEE} Trans.
  Signal Process.}, vol.~65, no.~15, pp. 4117--4131, Aug. 2017.

\bibitem{Lewis95-OptimalControl}
F.~Lewis and V.~L. Syrmos, \emph{Optimal Control}, 2nd~ed.\hskip 1em plus 0.5em
  minus 0.4em\relax New York, NY: John Wiley \& Sons, 1995.

\bibitem{Boyd04-Convex}
S.~Boyd and L.~Vandenberghe, \emph{Convex Optimization}.\hskip 1em plus 0.5em
  minus 0.4em\relax Cambridge, UK: Cambridge University Press, 2004.

\bibitem{Chamon18-Greedy}
L.~F.~O. Chamon and A.~Ribeiro, ``Greedy sampling of graph signals,''
  \emph{{IEEE} Trans. Signal Process.}, vol.~66, no.~1, pp. 34--47, Jan. 2018.

\bibitem{ErdosRenyi59-RandomGraphs}
P.~Erd{\H{o}}s and A.~R{\'e}nyi, ``On random graphs {I},'' \emph{Publicationes
  Mathematicae Debrecen}, vol.~6, pp. 290--297, 1959.

\bibitem{McAuley12-EgoNets}
J.~McAuley and J.~Leskovec, ``Learning to discover social circles in {Ego}
  networks,'' in \emph{26th Conf. Neural Inform. Process. Syst.}\hskip 1em plus
  0.5em minus 0.4em\relax Stateline, TX: Neural Inform. Process. Syst.
  Foundation, 3-8 Dec. 2012.

\bibitem{Chen04-Interlacing}
G.~Chen, G.~Davis, F.~Hall, Z.~Li, K.~Patel, and M.~Stewart, ``An interlacing
  result on normalized laplacians,'' \emph{{SIAM} J. Discrete Math.}, vol.~18,
  no.~2, pp. 353--361, Aug. 2016.

\bibitem{Cvetkovic79-SpectraGraphs}
D.~M. Cvetkovi{\'c}, M.~Doob, and H.~Sachs, \emph{Spectra of Graphs: Theory and
  Applications}.\hskip 1em plus 0.5em minus 0.4em\relax New York, NY: Academic
  Press, 1979.

\bibitem{HornJohnson85-MatrixAnalysis}
R.~A. Horn and C.~R. Johnson, \emph{Matrix Analysis}.\hskip 1em plus 0.5em
  minus 0.4em\relax Cambridge, UK: Cambridge University Press, 1985.

\bibitem{Wang86-TraceBound}
S.-D. Wang, T.-S. Kuo, and C.-F. Hsu, ``Trace bounds on the solution of
  algebraic matrix {Ricatti} and {Lyapunov} equation,'' \emph{{IEEE} Trans.
  Autom. Control}, vol.~31, no.~7, pp. 654--656, July 1986.

\bibitem{Billingsley95-ProbabilityMeasure}
P.~Billingsley, \emph{Probability and Measure}, 3rd~ed., ser. Wiley Series in
  Probability and Mathematical Statistics.\hskip 1em plus 0.5em minus
  0.4em\relax New York, NY: John Wiley \& Sons, 1995.

\end{thebibliography}

\end{document}